\newcommand{\ket}[1]{|#1\rangle}
\newcommand{\bra}[1]{\langle#1|}
\newcommand{\tensor}{\otimes}
\newcommand{\Tr}{\mbox{\rm Tr}}
\DeclareMathOperator{\poly}{poly}
\newcommand{\beq}{\begin{equation}}
\newcommand{\eeq}{\end{equation}}
\newcommand{\beqn}{\begin{equation*}}
\newcommand{\eeqn}{\end{equation*}}
\newcommand{\Es}[1]{\textsc{E}_{#1}}
\newcommand{\N}{\ensuremath{\mathbb{N}}}
\newcommand{\R}{\ensuremath{\mathbb{R}}}
\newcommand{\setft}[1]{\mathrm{#1}}
\newcommand{\density}[1]{\setft{D}\left(#1\right)}
\newtheorem{theorem}{Theorem}
\newtheorem{lemma}[theorem]{Lemma}
\newtheorem{claim}[theorem]{Claim}
\newtheorem{definition}[theorem]{Definition}
\newcommand{\be}{\begin{eqnarray}}
\newcommand{\ee}{\end{eqnarray}}
\newcommand{\Id}{\ensuremath{\mathop{\rm Id}\nolimits}}
\newcommand{\eps}{\varepsilon}
\definecolor{mygrey}{gray}{0.50}
\newcommand{\ignore}[1]{}
\newenvironment{step}
  {
    \begin{enumerate}

  }
  {\end{enumerate}}
\newenvironment{algorithm*}[1]
  {
    \begin{center}
      \hrulefill\\
      \textbf{#1}
  }
  {
    \vspace{-1\baselineskip}
    \hrulefill
    \end{center}
  }
\newenvironment{protocol*}[1]
  {
    \begin{center}
      \hrulefill\\
      \textbf{#1}
  }
  {
    \vspace{-1\baselineskip}
    \hrulefill
    \end{center}
  }
\newcommand{\CHSH}{{\textsc{CHSH}}}
\newcommand{\GUESS}{\textsc{GUESS}}
\newcommand{\VIOL}{\textsc{VIOL}}
\newcommand{\opt}{\texttt{opt}}
\newcommand{\boxA}{\mathcal{A}}
\newcommand{\boxB}{\mathcal{B}}
\newcommand{\boxE}{\mathcal{E}}
\newcommand{\Adv}{\textsc{Adv}}
\newcommand{\bl}{\mathbf{B}}
\newcommand{\ck}{\mathbf{C}}
\newcommand{\Hmin}{\textsc{H}_{min}}
\begin{document}

\title{Fully device independent quantum key distribution}
\author{Umesh Vazirani\thanks{Department of Computer Science, UC Berkeley, California. Supported by ARO Grant W911NF-09-1-0440 and NSF Grant CCF-0905626. Email \texttt{vazirani@eecs.berkeley.edu}}\qquad Thomas Vidick\thanks{Computer Science and Artificial Intelligence Laboratory, Massachusetts Institute of Technology. Supported by NSF Grant 0844626. Email \texttt{vidick@csail.mit.edu}.}}
\date{}
\maketitle

\begin{abstract}
The laws of quantum mechanics allow unconditionally secure key distribution protocols. Nevertheless, security proofs of traditional quantum key distribution (QKD) protocols rely on a crucial assumption, the trustworthiness of the quantum devices used in the protocol. In device-independent QKD, even this last assumption is relaxed: the devices used in the protocol may have been adversarially prepared, and there is no a priori guarantee that they perform according to specification. Proving security in this setting had been a central open problem in quantum cryptography. 

We give the first device-independent proof of security of a protocol for quantum key distribution that guarantees the extraction of a linear amount of key even when the devices are subject to a constant rate
of noise. Our only assumptions are that the laboratories in which each party holds his or her own device are spatially isolated, and that both devices, as well as the eavesdropper, are bound by the laws of  quantum mechanics. All previous proofs of security relied either on the use of many independent pairs of devices, or on the absence of noise. 
\end{abstract}

\section{Introduction}

Quantum key distribution~\cite{BB84,Eke91} together with its proof of security~\cite{Mayers01,SP00} appeared to have achieved the holy
grail of cryptography --- unconditional security, or a scheme whose security was based solely on the laws of physics. However, 
practical implementations of QKD protocols necessarily involve imperfect devices~\cite{Bennett92,Muller97}, and it was soon realized that these imperfections could be exploited by  a malicious eavesdropper to break the ``unconditional'' security of QKD (see e.g.~\cite{SK09} for a review).

Mayers and Yao~\cite{MY98} put forth a vision for restoring unconditional security in the presence of imperfect or even 
maliciously designed devices, by subjecting them to tests that they fail unless they behave consistently with ``honest'' devices. The fundamental challenge they introduced was of 
\emph{device-independent quantum key distribution} (DIQKD): establishing the security of a QKD protocol 
based only on the validity of quantum mechanics, the physical isolation of the devices and the passing of 
certain statistical tests. The germ of the idea for device-independence may already be seen in Ekert's original entanglement-based protocol for QKD~\cite{Eke91}, and was made more explicit by Barrett, Hardy, and Kent~\cite{BHK05}, who showed how to generate a single random bit secure against any non-signalling eavesdropper. A long line of research on DIQKD seeks to make the qualitative argument from~\cite{BHK05} quantitative, devising protocols that extract an amount of key that is linear in the number of uses of the devices, and is secure against increasingly general eavesdropping strategies. Initial works~\cite{AGM06,AMP06,Scarani06} give efficient and noise-tolerant protocols that are secure against individual attacks by non-signalling eavesdroppers. Subsequent work~\cite{MRCW09,Masanes09} and~\cite{HRW10} also proved security against collective attacks. Other works~\cite{ABGM07,PAB09,MRCW09,HR10,MPA11} obtain better key rates under the stronger assumption that the eavesdropper is bound by the laws of quantum mechanics. All these results, however, could only be established under restrictive \emph{independence} assumptions on the devices, e.g.
in recent work~\cite{HR10,MPA11} a proof of security based on collected statistics requires that 
the $n$ uses of each device are causally independent: measurements performed at successive steps of the protocol 
commute with each other. 

Very recently two papers~\cite{Barrett12,RUV12} announced proofs of security of DIQKD without requiring any
independence assumption between the different uses of the devices.
Unfortunately, although the approaches in~\cite{Barrett12,RUV12} are very different both implied protocols are polynomially inefficient and unable to tolerate noisy devices. The protocol used in~\cite{Barrett12} is very similar to the one originally introduced in~\cite{BHK05}, and requires a large number of uses of a pair of noise-free devices in order to generate a single bit of key. In the case of~\cite{RUV12}, DIQKD is obtained as a corollary of very strong testing that allows the shared quantum state and operators of the two untrusted devices to be completely characterized. It is an open question whether such strong testing can be achieved in a manner that is robust to noise. 

A major issue in QKD is dealing with the noise inherent in even the best devices. 
Indeed, a good DIQKD protocol should differentiate devices that
are ``honest but noisy" from devices that may attempt to take advantage of the protocol's necessary 
noise tolerance in order to leak information to an eavesdropper by introducing correlations in their ``errors''~\cite{BCK12}. The protocols in~\cite{Barrett12,RUV12} do not achieve this, since they cannot tolerate any constant noise rate.
This raises the question: is device-independent QKD even \emph{possible} without independence assumptions in a realistic, noise-tolerant scenario?

\subsection{Results}

We answer this question in the affirmative by giving the first complete device-independent proof of security of quantum key distribution that tolerates a constant noise rate and guarantees the generation of a linear amount of key. Our only assumption on the devices is that they can be modeled by the laws of quantum mechanics, and that they are spatially isolated from each other and from any adversary's laboratory. In particular, we emphasize that the devices may have quantum memory. While the proof of security is quite non-trivial (it builds upon ideas from the work on certifiable randomness generation mentioned below), the actual protocol whose device independence properties we establish is quite simple. It is a small variant of Ekert's entanglement-based protocol~\cite{Eke91}. 

In the protocol, the users Alice and Bob make $m$ successive uses of their respective devices. At each step, Alice (resp. Bob) privately chooses a random input $x_i\in\{0,1,2\}$ (resp. $y_i\in\{0,1\}$) for her device, collecting an output bit $a_i$ (resp. $b_i$). If the devices were honestly implemented they would share Bell states 
$\ket{\psi} = 1/\sqrt{2} \ket{00} + 1/\sqrt{2} \ket{11}$, and measure their qubits according to the following strategy: if $x_i = 0$ measure in the computational basis, if $x_i = 1$ measure in 
the Hadamard basis and if $x_i = 2$ measure in the $3\pi/8$-rotated basis. If $y_i = 0$ measure in the $\pi/8$-rotated basis and if $y_i = 1$ measure in the $3\pi/8$-rotated basis.

To test the devices, after the $m$ steps have been completed, the users select a random subset $\bl\subseteq \{1,\ldots,m\}$ of size $|\bl|=\gamma m$, where $\gamma>0$ is a small constant, and publicly announce their inputs and outputs in $\bl$. Rounds in $\bl$ will be called ``Bell rounds''. Let $z_i = 1$ if and only if $a_i\neq 2$ and $a_i\oplus b_i \neq x_i\wedge y_i$, or $(a_i,b_i)=(2,1)$ and $a_i\neq b_i$. The users jointly compute the noise rate $\eta := (1/|\bl|)\sum_{i\in \bl} z_i-(1-\opt)$, where $\opt = (2\cos^2\pi/8+1)/3$.\footnote{This corresponds to estimating the average amount by which the devices' outputs in $\bl$ differ from a maximal violation of a Bell inequality based on the CHSH inequality~\cite{Clauser:69a,BraunsteinC90}: see Section~\ref{sec:preliminaries} for details.} If $\eta \geq 0.5\%$, say, they abort. If not, they announce their remaining input choices. Let $\ck\subseteq\{1,\ldots,m\}$ be the steps in which $(a_i,b_i)=(2,1)$. We will call the rounds in $\ck$ the ``check rounds''; outputs from the rounds $\ck - \bl$ constitute the raw key. The users conclude by performing standard information reconciliation and privacy amplification steps, extracting a key of length $\kappa m$ for some $\kappa = \kappa(\eta,\eps)$, where $\eps$ is the desired security parameter. (We refer to Figures~\ref{fig:qkda} and~\ref{fig:qkdb} for a more detailed description of the protocol.)

\begin{theorem}[Informal]\label{thm:main-inf}
Let $m$ be a large enough integer and $\eps = 2^{-c_0 m}$, where $c_0>0$ is a small constant.
Given any pair of spatially isolated quantum devices $\mathcal{A}$ and $\mathcal{B}$, the protocol described above generates a shared key $K$ of length $\kappa m$, where $\kappa \approx 1.4\%$, that is $\eps$-secure: the probability that the users Alice and Bob do not abort and that the adversary can obtain information about the key is at most $\eps$. 
\end{theorem}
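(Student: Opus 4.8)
The plan is to establish security of the protocol by combining three ingredients: a rigidity-type statement for the CHSH game that is robust to a constant rate of noise, an Azuma-type martingale argument that lifts the single-round analysis to the full $m$-round protocol despite the absence of independence assumptions, and a quantum min-entropy estimate feeding into a quantum-proof randomness extractor. First I would set up the modeling: since the devices are untrusted, the only structural fact available is that Alice's and Bob's laboratories are spatially isolated, so the global state is some $\rho_{\mathcal{A}\mathcal{B}\mathcal{E}}$ on which Alice's and Bob's measurements commute, and each device's behavior in round $i$ is an arbitrary measurement that may depend on all previously generated transcripts (quantum memory). The central object is the guessing probability of the adversary on the raw key $a_i$ in the check rounds $\ck - \bl$, conditioned on all public information and on the event that the protocol does not abort; equivalently, a lower bound on $\Hmin(A \,|\, E)$ for the string of raw-key bits.

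The heart of the argument is a single-round statement of the following shape: if, conditioned on the transcript so far, the devices' outputs in a round are $\delta$-close (in the appropriate statistical sense) to achieving the optimal CHSH-type value $\opt$, then Alice's output bit $a_i$ in that round, when $x_i$ is chosen to be the "key-generating" setting, is close to uniform from the eavesdropper's point of view — quantitatively, the adversary's guessing probability is at most $1/2 + f(\delta)$ for some function $f$ with $f(\delta)\to 0$ as $\delta\to 0$, and crucially $f$ bounded away from $1/2$ for the constant noise rate $\eta \le 0.5\%$ we tolerate. This is where I expect the main obstacle to lie: one needs a quantitatively sharp, noise-robust version of CHSH rigidity that simultaneously (i) does not assume the state is close to a tensor power of Bell pairs, (ii) controls the conditional state on the eavesdropper's system, and (iii) degrades gracefully so that a $0.5\%$ deviation still leaves a linear entropy rate. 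I would prove this by working with the Jordan/qubit decomposition of a single pair of (untrusted, binary) measurements, reducing to a direct sum of $2\times 2$ blocks, establishing the noisy rigidity bound block-by-block via an explicit operator inequality / SDP-duality argument for the CHSH correlation, and then using a purification argument to turn the statement about the reduced state into a statement about $\Hmin$ against $\mathcal{E}$.

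Next I would lift this to the whole protocol. Because successive rounds are not independent, I cannot simply multiply single-round bounds; instead I would define, for each round $i$, a random variable recording the "local CHSH defect" of that round conditioned on the past, and build a supermartingale whose increments are controlled by the single-round bound above. Azuma–Hoeffding (or a suitable Bernstein-type concentration for martingales) then shows that, except with probability $\eps$, either the estimated noise rate $\eta$ on the sampled Bell rounds $\bl$ exceeds the threshold (so the users abort), or the total accumulated defect over the non-sampled rounds is small, which by the single-round statement forces $\Hmin(A\,|\,E, \text{public}) \ge \kappa' m$ for a constant $\kappa' > 0$ depending on the noise tolerance. A standard sampling argument justifies that the defect on the random subset $\bl$ estimates the average defect on all rounds up to a small additive error with the claimed failure probability, which is where the exponentially small $\eps = 2^{-c_0 m}$ enters.

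Finally I would assemble the end-to-end security claim. Conditioned on not aborting, we have a linear lower bound on the smooth min-entropy of the raw key given the adversary's quantum side information together with all classical communication (inputs, the sampled transcript, and the information-reconciliation messages, whose length I bound by the observed error rate so that the net entropy loss is still linear). Applying a quantum-proof strong extractor (e.g.\ two-universal hashing, via the quantum leftover hash lemma) to the raw key yields an output of length $\kappa m$, with $\kappa \approx 1.4\%$ after accounting for the sampling rate $\gamma$, the check-round fraction, the reconciliation cost, and the extractor loss, that is within trace distance $\eps$ of a uniform key product with the adversary's state — which is exactly the $\eps$-security statement of Theorem~\ref{thm:main-inf}. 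The composition of the "abort" branch (probability of not aborting yet high defect is $\le \eps$) with the "extract" branch (conditioned on low defect, the extracted key is $\eps$-close to ideal) gives the stated bound on the probability that the users do not abort and the adversary learns something about the key.
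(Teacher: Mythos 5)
Your proposal takes a genuinely different route from the paper, and the crucial step in it does not go through. The paper does \emph{not} prove a single-round ``rigidity implies uniformity against Eve'' statement and then accumulate entropy by a martingale; it proceeds by contradiction, combining the quantum reconstruction paradigm (Lemma~\ref{lem:strong-adv}), a chain-rule/Pinsker argument to locate a good round while controlling the correlations introduced by conditioning (Lemma~\ref{lem:chain-rule}, proved in Section~\ref{sec:chain}), and a guessing lemma (Lemma~\ref{lem:guessing}) that rules out the simultaneous satisfaction of CHSH violation, adversarial guessing, and approximate no-signalling. The end game (information reconciliation and two-universal hashing, Lemma~\ref{lem:prota}) is indeed standard and matches what you wrote, but that is the easy part.

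The gap in your argument is the lifting step. Azuma–Hoeffding does give you control over a \emph{classical} accumulated CHSH defect, so you can certify that, conditioned on not aborting, most rounds have small defect \emph{given the past transcript}. But that does not license the conclusion $\Hmin(A\mid E,\text{public}) \geq \kappa' m$. A single-round statement of the form ``conditioned on the past, $a_i$ is close to uniform to Eve'' does not compose additively into a global smooth conditional min-entropy bound when the devices have quantum memory and the adversary holds quantum side information: conditional min-entropies do not chain in the naive way, and the per-round conditional state depends on Eve's own side of things (in particular, conditioning on not aborting, or on the defect estimate, is conditioning on an event correlated with Eve's system, which can skew the input distribution and the reduced state in later rounds). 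This is precisely the issue the paper highlights: conditioning can ``force the inputs in round $i_0$ to be a particular pair,'' breaking the very no-signalling structure your per-round rigidity bound requires. The paper's conditions~\eqref{eq:chain-0}--\eqref{eq:chain-2} and the information-theoretic machinery (chain rule for mutual information, quantum Pinsker's inequality, Claim~\ref{claim:mutualinfo}) exist exactly to bound those conditioning-induced correlations, and the reconstruction paradigm is what makes a per-round ``guessing'' statement available in the first place from a mere non-negligible distinguishing advantage. Without something playing the role of these ingredients — or, in more modern language, without an entropy accumulation theorem, which did not exist at the time — the martingale argument you describe does not yield the min-entropy bound you need, so the proof proposal has a genuine hole at its central step.

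A smaller point: in the paper the raw key and the min-entropy bound are stated for Bob's outputs $B_\ck$ (with Bob sending the reconciliation message), whereas you bound Eve's guessing of Alice's $a_i$; this is cosmetic, but worth aligning with the protocol as written.
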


This informal statement hides a tradeoff between the parameters $\eps$, $\eta$, and $\kappa$: the larger the security parameter $\eps$ and the smaller the noise rate $\eta$, the higher the key rate $\kappa$. As $\eta\to 0$ (provided $\eps$ is chosen large enough) our proof guarantees a secure key rate $\kappa \approx 2.5\%$, which with our setting of parameters corresponds to about $15\%$ of the raw key. Conversely, the maximum noise rate for which we may extract a key of positive length is $\eta_{max}\approx 1.2\%$. This is worse than the optimal key rates obtained under the causal independence assumption~\cite{MPA11}, but still quite reasonable. 

\subsection{Proof overview and techniques}\label{sec:intro-tech}

We start with the observation that the randomness in the shared secret key must necessarily be generated by the two devices. 
Indeed, even though the users have the ability to generate perfect random bits privately, such bits cannot be used directly for the shared key, since any information transmitted about them is also available to the adversary. It follows that a necessary condition for DIQKD is that the users should be able to use their untrusted devices to generate \emph{certified} randomness --- randomness they can guarantee was not pre-encoded in the devices by the adversary, nor obtained as some function of the users' inputs to the devices.

Luckily, the possibility of generating certified randomness has already been investigated.  Building on an observation made in~\cite{Colbeck09}, Pironio et al.~\cite{Pironio} devised a protocol in which the generation of randomness could be certified solely by testing for a sufficiently large Bell inequality violation. In~\cite{FGS11,PM11} it was further shown that the randomness generated was secure against an arbitrary classical adversary. Concurrently, in~\cite{VV12} we gave a protocol that was secure even against a quantum adversary. This last protocol provides us with a solid starting point for DIQKD, since our goal is to prove that the quantum adversary, who may have fabricated the two devices, has no information about the shared random key. Nevertheless, extending this to DIQKD presents us with some serious new challenges. 

\begin{enumerate}
\item First, QKD is a task that involves two distant parties Alice and Bob. Any classical communication between Alice and Bob must take place in the clear and is therefore accessible to the adversary, thus giving her additional power. 

\item Second, in order to achieve QKD it is not sufficient just to generate randomness --- the point of QKD is that Alice and Bob share the same random key. In our protocol this is accomplished by distinguishing two different types of rounds: Bell rounds, in which the violation of the CHSH inequality by the devices is estimated, and check rounds, in which the devices are supposed to produce identical outputs from which the key will be generated. Unfortunately Alice and Bob must exchange information about which rounds are which, and since the adversary has access to all communicated classical information, this appears to render the Bell rounds pointless, since the adversary can ignore the Bell rounds and attack only those rounds which are used to generate the key (the check rounds). 

\item Finally, to be practical the protocol should tolerate noisy devices. As a result, the users can only expect a non-maximal amount of correlation, both in the Bell and check rounds. The randomness-certification protocol from~\cite{VV12} did not tolerate any noise --- in fact, the absence of noise played a crucial role in the proof. As we already explained in the introduction, dealing with the presence of noise is one of the major conceptual and technical hurdles of the proof. 
\end{enumerate}

We now explain how our proof technique addresses these challenges. The proof proceeds in two steps. As a first step, we argue that the following three conditions cannot hold simultaneously in any single round of the protocol: (i) the devices violate the CHSH inequality, whenever the round was selected as a Bell round (ii) the adversary can predict Bob's output, whenever the round was selected as a check round, and (iii) the no-signalling condition is satisfied between all three parties (Alice, Bob and the adversary). 
To derive a contradiction from (i)--(iii) we use a simple conceptual tool called the ``guessing game'', which was introduced in~\cite{VV12}. The main idea is that conditions (i) and (ii) imply that the adversary and Alice will be able to team up to predict Bob's output from their sole respective input/output behavior, violating the no-signalling condition (iii). 

\medskip

The second step is more challenging. All previous works on the subject reduced the general setting to a single-round scenario similar to the one outlined above by requiring some form of independence assumption on the devices or on the adversary's attack. We do not use any such assumption, and the main challenge is to deal with correlations between all rounds and the adversary in order to perform the reduction. 

Our starting point is the existence of a pair of devices that pass the protocol with non-negligible probability, but such that the adversary may gain non-negligible information about the secret key generated at the end of the protocol. Our goal is to show the existence of a round $i_0$ of the protocol in which conditions (i)--(iii) above are satisfied, thus deriving a contradiction. 

Our argument has two main ingredients. The first ingredient is the so-called ``quantum reconstruction paradigm'', a technique that was introduced in~\cite{DV09} and further developed in~\cite{DVPR11,VV12}. What this achieves is the following: any adversary able to obtain non-negligible information about the generated key can be transformed into a seemingly much stronger adversary: she can \emph{predict} the entire string of outputs of Bob's device on the check rounds (the rounds used to generate the key). Furthermore, the success probability of this ``guessing measurement'' is of the same order as the original distinguishing probability but does not depend on the length of the key --- a fact that will be crucial to obtaining good parameters. In order to achieve this, the new adversary  requires access to the same public information as the original one, together with a small number of additional ``advice bits'' taken from Bob's string of outputs.

This stronger form of the adversary guarantees that condition (ii) above holds in all rounds with small but non-negligible probability. Furthermore, the checking performed as part of the protocol ensures that (i) also holds on average over all rounds, with probability of the same order. 
The natural idea in order to identify a round $i_0$ in which conditions (i) and (ii) hold simultaneously with high probability is to perform conditioning: there must exist many rounds $i$ such that, provided both conditions hold in rounds $1$ to $i-1$, they must hold in round $i$ with high probability. 

Such conditioning, however, presents a new difficulty: it may introduce such correlations that condition (iii) is no longer satisfied. Indeed, recall that one of the main difficulties in analyzing the QKD protocol is that the adversary has considerable power, due to the large amount of public information that is leaked by the protocol --- including the users' complete choice of inputs. Hence conditioning on a low probability event involving the outcome of a measurement performed by the adversary on her system introduces correlations between inputs in all rounds. For instance, this conditioning could very well force the inputs in round $i_0$ to be a particular pair, say $(0,0)$, making the guarantees (i) and (ii) all but useless.  

The difficulty is reminiscent of one encountered in the analysis of parallel repetition, where conditioning on success in a subset of the parallel repeated games may introduce correlations among the players in the remaining games. Here, the situation is further complicated by the fact that it involves three parties involved in a relatively complex interaction. In particular, the conditioning is performed jointly on an event involving Alice and Bob (the CHSH violation observed in previous rounds being sufficiently large) on the one hand, and Bob and Eve (Eve's guess being correct) on the other. 

The final step in our proof consists in bounding the amount of correlation introduced by the conditioning. For this we use tools from information theory, including the chain rule for mutual information and the quantum Pinsker's inequality, which had not previously been applied to this setting. (Similar tools were already used by Holenstein in his derivation of a parallel repetition theorem for the case of two-player games with no-signalling players~\cite{Hol09}.)

\subsection{Perspective}

We have not attempted to optimize the relationship between the parameters $\kappa,\eta$ and $\eps$ describing the key rate, the noise rate and the security parameter respectively, and it is likely that the explicit dependency stated in Theorem~\ref{thm:main-tech} can be improved by tightening our arguments. It is an interesting question to find out whether our approach can lead to a trade-off as good as the one that has been shown to be achievable under additional assumptions on the devices~\cite{MPA11}. One possibility for improvement would be to bias the users' input distribution towards the pair of inputs $(2,1)$ from which the raw key is extracted, as was done in e.g.~\cite{AMP06}: indeed, only a very small fraction of the rounds are eventually required to estimate the violation of the $\CHSH$ condition.

Our proof crucially makes use of quantum mechanics to model the devices and the adversary. Can one obtain a fully device-independent proof of security of QKD against adversaries that are only restricted by the no-signalling principle? Barrett et al.~\cite{Barrett12} recently showed that such security is achievable in principle; however their protocol is highly inefficient and does not tolerate noisy devices. 



\paragraph{Organization of the paper.} We start with some preliminaries in Section~\ref{sec:preliminaries}, introducing our notation, the information-theoretic quantities that will be used. We also summarize the main parameters of our protocol, which is described in Figures~\ref{fig:qkda} and~\ref{fig:qkdb}. In Section~\ref{sec:analysis} we formally state our result and outline the security proof. The two main ingredients are the analysis of Protocol~B, which is given in Section~\ref{sec:chain}, and the ``quantum reconstruction paradigm'' introduced in Section~\ref{sec:adv}. Finally, Section~\ref{sec:additional} contains probabilistic and information-theoretic lemmas used in some of the proofs. 

\paragraph{Acknowledgments.} We thank Anthony Leverrier for many useful comments on a preliminary version of this manuscript.

\section{Preliminaries}\label{sec:preliminaries}

We assume familiarity with basic concepts and standard notation in quantum information, including density matrices and distance measures such as the trace distance and the fidelity. We refer the reader to the books~\cite{NC00,Wilde11} for detailed introductions.

\paragraph{Notation.} We use roman capitals $A,B,\ldots,X$ both to refer to random variables and the registers, classical or quantum, that contain them. Calligraphic letters $\mathcal{A},\mathcal{B},\ldots,\mathcal{X}$ are used to refer to the underlying Hilbert space. $\density{\mathcal{X}}$ denotes the set of density operators (non-negative matrices with trace $1$) on $\mathcal{X}$. For an arbitrary matrix $A$ on $\mathcal{X}$ we let $\|A\|_1 = \Tr\sqrt{AA^\dagger}$ denote its Schatten $1$-norm. $\ln$ denotes the natural logarithm and $\log$ the logarithm in base $2$. For $x\in [0,1]$, $H(x)=-x\log x-(1-x)\log(1-x)$ is the binary entropy function. 

\paragraph{Information theoretic quantities.}
Given a density matrix $\rho\in\density{\mathcal{A}}$, its von Neuman entropy is $H(\rho) := -\Tr(\rho\ln\rho)$. For a classical-quantum state $\rho_{XA} = \sum_x p_x\ket{x}\bra{x} \otimes \rho_x\in\density{\mathcal{X}\otimes\mathcal{A}}$, where for every $x$, $\rho_x\in\density{\mathcal{A}}$, the conditional entropy is defined as $H(A|X)_\rho:= \sum_x p_x H(\rho_x)$. Given a state $\rho_{ABX}$, where $X$ is classical, the conditional mutual information is  
$$ I(A:B|X)_\rho \,:=\, H(A|X)_\rho+H(B|X)_\rho-H(AB|X)_\rho.$$
We will use the following quantum analogue of the classical Pinsker's inequality (see e.g. Theorem~11.9.1 in~\cite{Wilde11} for a proof): for any $\rho_{AB}\in\density{\mathcal{AB}}$,
\beq\label{eq:pinsker}
\big\|\rho_{A B} - \rho_{A} \otimes \rho_{B} \big\|_1^2 \,\leq\, (2\ln 2)\, I(A:B)_{\rho}.
\eeq
The most important information measure in our context is the quantum conditional min-entropy, first introduced in~\cite{Ren05}, and defined as follows. 

\begin{definition}  \label{def:min-entropy}
  Let $\rho_{A B}$ be a bipartite density matrix. The
  \emph{min-entropy} of $A$ conditioned on $B$ is defined as
  \begin{align*}
    \Hmin(A|B)_\rho \,:=\, \max \{\lambda
    \in \R :  \exists \sigma_B \in \density{\mathcal{B}} 
		\,\mathrm{s.t.}\,\, 2^{-\lambda} \Id_A \tensor \sigma_B \geq \rho_{AB}\}.
  \end{align*}
\end{definition}

We will often drop the subscript $\rho$ when there is no doubt about
the underlying state. The smooth min-entropy is defined as follows. 

\begin{definition}  \label{def:smooth-min-entropy}
  Let $\eps \geq 0$ and $\rho_{AB}$ a bipartite density matrix. The
  \emph{$\eps$-smooth min-entropy} of $A$ conditioned on $B$ is defined as
  \begin{equation*}
    \Hmin^\eps(A|B)_\rho \,:=\, \max_{\tilde{\rho}_{AB}
      \in B(\rho_{AB},\eps)} \Hmin(A|B)_{\tilde{\rho}},
  \end{equation*}
  where $B(\rho_{AB},\eps)$ is a ball of
  radius $\eps$ around
  $\rho_{AB}$.\footnote{Theoretically any distance measure could be
    used to define an $\eps$-ball. As has become customary, we use the \emph{purified
      distance}, $P(\rho,\sigma) := \sqrt{1 -
      F(\rho,\sigma)^2}$, where $F(\cdot,\cdot)$ is the fidelity. }
\end{definition}

\paragraph{The $\CHSH$ condition.}
The security of our DIQKD protocol is based on the statistical verification that the pair of devices used have an input/output behavior consistent with certain pre-determined correlations, which are those expected of a ``honest'' quantum-mechanical pair of devices performing the measurements described below.  

Let $\boxA$ and $\boxB$ designate two spatially isolated devices. In the protocol, there are three possible choices of inputs $x\in\{0,1,2\}$ to $\boxA$, and two possible inputs $y\in\{0,1\}$ to $\boxB$. Each of the $6$ possible pairs of inputs is chosen with uniform probability $1/6$. The devices are required to produce outputs $a,b\in\{0,1\}$ respectively. 
The users select a random subset of the rounds of the protocol in which to evaluate the frequency with which the following constraints are satisfied. In case both inputs were in $\{0,1\}$, the constraint on the outputs is the CHSH parity constraint $a\oplus b = x\wedge y$~\cite{Clauser:69a}. If the inputs are $(2,1)$ the constraint is that the outputs $(a,b)$ should satisfy $a\oplus b=0$. Finally, for the remaining pair of inputs $(2,0)$ all pairs of outputs are valid. We will refer to this set of constraints collectively as ``the $\CHSH$ condition''. We note that the underlying Bell inequality is similar to the so-called ``chained inequality'' for two inputs~\cite{BraunsteinC90}. 

Let $\opt$ be the maximum probability with which any two isolated devices, obeying the laws of quantum mechanics, may produce outputs satisfying the $\CHSH$ condition. It is not hard to show that $\opt = (2/3)\cos^2\pi/8 + (1/3)$, which is achieved using the following strategy. The devices are initialized in a single EPR pair $\ket{\Psi}=(\ket{00}+\ket{11})/\sqrt{2}$, each device holding one qubit. On input $0$, $\boxA$ performs a measurement in the computational basis, and on input $1$ it measures in the Hadamard basis. On input $0$, $\boxB$ measures in the computational basis rotated by $\pi/8$. If $\boxA$ gets input $2$, or if $\boxB$ gets input $1$, they measure in the computational basis rotated by $3\pi/8$. The devices may be used repeatedly, and honest devices perform measurements on a fresh EPR pair at each use.

\paragraph{Parameters.}
For convenience, we summarize here the main parameters of the key distribution protocol described in Figures~\ref{fig:qkda} and~\ref{fig:qkdb}.
\begin{itemize}
\item $m$ is the total number of rounds in the protocol (in each round, an input to each of $\boxA,\boxB$ is chosen, and an output is collected).  
\item $\bl$ are the ``Bell rounds'', selected to perform parameter estimation. They are chosen uniformly at random under the constraint that $|\bl|=\gamma m$, for some $\gamma>0$ specified in the protocol.
\item $\eta$ is the tolerated error rate: the protocol aborts as soon as the fraction of rounds in $\bl$ satisfying the $\CHSH$ condition is lower than $\opt-\eta$.  
\item $\ck\subseteq [m]$ are the ``check rounds''. Those are rounds in which the inputs to $(\boxA,\boxB)$ are $(2,1)$. Since the inputs are chosen uniformly at random, the number of check rounds $|\ck|$ is highly concentrated around $m/6$.
\item The target min-entropy rate $\kappa$. This is the rate of min-entropy that the users Alice and Bob expect to be present in the check rounds, provided the protocol did not abort. Once information reconciliation and privacy amplification have been performed, a secret key of length roughly $(\kappa - H(2\eta))|\ck|$ will be produced. 
\item $\eps$ is the security parameter: the statistical distance from uniform of the extracted key (conditioned on the eavesdropper's side information). Precisely, if $K$ denotes the system containing the extracted key, we will obtain that $\|\rho_{K\mathcal{E}'} - \rho_{U_{K}}\otimes \rho_{\mathcal{E}'} \|_1 \leq \eps$, where $\mathcal{E}'$ is a register containing all the side information available to an arbitrary quantum eavesdropper in the protocol,  and $\rho_{U_{K}}$ is the totally mixed state on as qubits as the key length. 
\end{itemize}

\section{Analysis of the key distribution protocol}\label{sec:analysis}

\begin{figure}
\begin{protocol*}{Protocol~A}
\begin{step}
\item Let $m$ and $\eps,\eta>0$ be parameters given as input. Let $C_\gamma$ be the constant from Theorem~\ref{thm:main-tech}, and set $\gamma = (C_\gamma/\eta^2)\ln(1/\eps)/m$. 
\item\label{step:protb} Alice and Bob run Protocol~B for $m$ steps, choosing inputs $x \in \{0,1,2\}^m$ (resp. $y\in\{0,1\}^m$) and obtaining outcomes $a\in\{0,1\}^m$ (resp. $b\in\{0,1\}^m$). Let $\bl$ be the set of rounds that were chosen to perform parameter estimation. 
\item Alice and Bob publicly reveal their choices of inputs. Let $\ck$ be the set of rounds $i$ in which $(x_i,y_i)=(2,1)$. If $||\ck|- m/6|>10\sqrt{m}$ they abort the protocol.
\item\label{step:ir} Alice and Bob perform information reconciliation on their outputs in $\ck-\bl$, which constitute the raw key. For this, Bob sends a message of $\ell \leq H(2\eta)|\ck| + \log(2/\eps)$ bits to Alice. 
\item\label{step:pa} Let $\kappa=\kappa(\eta)$ be as specified in Theorem~\ref{thm:main-tech}. Alice and Bob perform privacy amplification using e.g. two-universal hashing, extracting a shared key of length $(\kappa-H(2\eta)-O(\log(1/\eps)/m)) |\ck|$ from the common $(|\ck|-|\bl|)$-bit string they obtained at the end of the previous step.
\end{step}
\end{protocol*}
\caption{The device-independent key distribution protocol, Protocol~A}
\label{fig:qkda}
\end{figure}

\begin{figure}
\begin{protocol*}{Protocol~B}
\begin{step}
\item Let $m,\gamma$ and $\eta$ be parameters given as input. 
\item Repeat, for $i=1,\ldots,m$:
\begin{step}
\item Alice picks $x_i\in\{0,1,2\}$, and Bob picks $y_i \in \{0,1\}$, uniformly at random. They input $x_i,y_i$ into their respective device, obtaining outputs $a_i,b_i\in\{0,1\}$ respectively.
\end{step}
\item\label{step:check} Alice chooses a random subset $\bl\subseteq [m]$ of size $\gamma m$ and shares it publicly with Bob. Alice and Bob announce their input/output pairs in $\bl$, and compute the fraction of pairs satisfying the $\CHSH$ condition. Let $(\opt-\eta')$ be this fraction. If $\eta'>\eta$ they abort the protocol. 
\end{step}
\end{protocol*}
\caption{Theorem~\ref{thm:main-tech} shows that, at the end of protocol~B, the bits $B_\ck$ generated by Bob's device in the check rounds $\ck$ both have high smooth min-entropy, conditioned on the adversary's arbitrary quantum side information.}
\label{fig:qkdb}
\end{figure}

The analysis of Protocol~A, and the proof of Theorem~\ref{thm:main-inf}, is performed in two steps. The first, main step consists in proving a lower bound on the quantum smooth  conditional min-entropy $H_{min}^\eps(B_{\ck}|XYA_\bl B_\bl\mathcal{E})$ of the outputs obtained by Bob in the check rounds $\ck$ (conditioned on the protocol not aborting). This lower bound will depend on the maximal error rate $\eta$ that is tolerated by the users in the sub-protocol~B (see Figures~\ref{fig:qkda} and~\ref{fig:qkdb} for a description of protocols~A and~B respectively). Here the lower bound is taken conditioned on the state of an arbitrary quantum adversary (whom we will call Eve and refer to indiscriminately as ``the adversary'' or ``the eavesdropper'') in the protocol, who has access to the information $X,Y,A_\bl,B_\bl$ revealed publicly in the course of the protocol, as well as to a quantum system $\mathcal{E}$ which may be correlated with the systems $\boxA$, $\boxB$ of the devices. Such an estimate is stated in Theorem~\ref{thm:main-tech} in Section~\ref{sec:main-proof} below.

The second step consists in showing that there exists appropriate protocols for the information reconciliation and privacy amplification steps, Steps~\ref{step:ir} and~\ref{step:pa} in Protocol~A respectively, such that the lower bound on the conditional min-entropy from the first step guarantees the security (distance from uniform from the point of view of the adversary) and correctness (Alice and Bob should obtain the same key) of the key that is extracted. This step is standard, and all the ingredients required already appear in the literature. We summarize the result as Lemma~\ref{lem:prota} in Section~\ref{sec:irpa} below. 

Theorem~\ref{thm:main-inf} follows immediately by combining Theorem~\ref{thm:main-tech} and Lemma~\ref{lem:prota}.

\subsection{Probability space}\label{sec:probspace}

Before stating and proving formally our results, we formally define the random variables and events that will be used in their proof. 

\paragraph{Modeling the devices.}
Fix a pair of spatially isolated devices $(\boxA,\boxB)$. Device $\boxA$ takes inputs in $\{0,1,2\}$, and device $\boxB$ takes inputs in $\{0,1\}$. Whenever provided an input, each device produces an output in $\{0,1\}$. The devices may be used repeatedly. We will assume that the pair $(\boxA,\boxB)$ can be described by quantum mechanics: the devices are modeled by a pair of quantum registers; when provided an input each device performs a measurement on the state contained in the corresponding subsystem.

We assume that user Alice holds $\boxA$, and Bob is given $\boxB$. In addition, there is an adversary Eve who holds an additional quantum register $\mathcal{E}$, initialized in a state arbitrarily correlated with that of $\mathcal{A}$ and $\mathcal{B}$. Let $\rho_{A_1B_1\mathcal{E}}$ be the density matrix describing the joint state of all three registers at the start of the protocol. 

We define the following random variables and events. $X \in \{0,1,2\}^m$ and $Y\in\{0,1\}^m$ are two uniformly distributed random variables, used to represent the inputs to $\boxA,\boxB$ respectively, as chosen in the protocol. $A,B\in\{0,1\}^m$ are random variables denoting the outputs produced by the devices, when sequentially provided their respective inputs $X,Y$. We will always use $\ck\subseteq [m]$ to denote the set of ``check'' rounds, in which $(X_i,Y_i)=(2,1)$, and $\bl\subseteq [m]$ the set of ``Bell'' rounds chosen by Alice and Bob to perform parameter estimation. 

Let $\rho_{\boxA_i\boxB_i}$ denote the reduced state of devices $\boxA$ and $\boxB$ in the $i$-th round of the protocol (before they have been provided their $i$-th input). Formally,  
$$ \rho_{\boxA_i\boxB_i} \,\propto\, \Big( \prod_{j<i} M_{X_j}^{A_j} \otimes N_{Y_j}^{B_j} \Big)\,\rho_{\boxA_1\boxB_1}\,\Big( \prod_{j<i} \big(M_{X_j}^{A_j}\big)^\dagger \otimes \big(N_{Y_j}^{B_j}\big)^\dagger \Big),$$
where $\{M_{X_j}^{A_j}\}$ and $\{N_{Y_j}^{B_j}\}$ are the Kraus operators corresponding to the measurement performed by devices $\boxA$ and $\boxB$ in round $j$ respectively, and $\rho_{\boxA_i\boxB_i}$ is normalized. Here $\rho_{\boxA_1\boxB_1} = \Tr_\mathcal{E}(\rho_{\boxA_1\boxB_1\mathcal{E}})$ is the reduced state of the devices at the start of the protocol. It is important to note that for any $i$ the state $\rho_{\boxA_i\boxB_i}$ may depend on a measurement that is performed on system $\mathcal{E}$ as soon as a particular outcome of that measurement is fixed. 

\paragraph{Measuring the $\CHSH$ condition.}
Given a set $S\subseteq [m]$ and $\delta>0$, $\CHSH_{\boxA\boxB}(S,\delta)$ is the event that the tuple $(X,Y,A,B)$ satisfies the $\CHSH$ condition (as described in Section~\ref{sec:preliminaries}) in a fraction at least $\opt-\delta$ of the rounds indicated by $S$. If $S$ is omitted, $\CHSH_{\boxA\boxB}(\delta) = \CHSH_{\boxA\boxB}([m],\delta)$. 
Letting $Z \in \{0,1\}^m$ be the indicator random variable of the CHSH condition \emph{not} being satisfied in any given round, we can write
$$\CHSH_{\boxA\boxB}(S,\delta) \,\equiv\, \Big\{\frac{1}{|S|}\sum_{i\in S} Z_i \leq (1-\opt)+\delta\Big\}.$$
We also define $\VIOL_{\boxA\boxB}(i)$, where $i\in [m]$, to express the expected amount by which the $\CHSH$ condition in round $i$ is satisfied: 
$$\VIOL_{\boxA\boxB}(i)\,=\, \textrm{E}[\,Z_i\,] - (1-\opt),$$
where here the expectation is taken over the choice of inputs $(X_i,Y_i)$ in round $i$, and over the randomness in the devices' own measurements in round $i$. Note that $\VIOL_{\boxA\boxB}(i)$ implicitly depends on the specific state of the devices in round $i$, which may be affected by previous input and outputs obtained in the protocol as well as on other events that may be conditioned on. Hence the expression $\Pr( \VIOL_{\boxA\boxB}(i) < \delta | E)$, for some event $E$, indicates the average probability, over all possible $e\in E$, that the devices satisfy the $\CHSH$ condition in round $i$ with probability at least $\opt-\delta$, provided their inputs are distributed according to the conditional distribution $(X_i,Y_i)|E=e$, and when performed on the post-measurement state of $\boxA\otimes \boxB$ in round $i$ conditioned on $E=e$.
For any $\delta>0$ we let $\VIOL_{\boxA\boxB}(\delta)$ be the event that $(1/m)\sum_i \VIOL_{\boxA\boxB}(i) \leq \delta$.

\paragraph{The adversary.}
We introduce additional random variables that depend on the adversary Eve, holding the quantum register $\mathcal{E}$. The adversary is described in Lemma~\ref{lem:strong-adv} below; to understand the events below it may be useful to read that lemma's statement first. 

Let $E \in \{0,1\}^{|\ck|}$ be the random variable that describes the outcome of the measurement on $\mathcal{E}$ described in Lemma~\ref{lem:strong-adv}. Note that this outcome depends on the ``advice'' that is given to the adversary. We use $\hat{X},\hat{Y}$ to denote the inputs that are given to the adversary, and $\hat{\Adv} \in\{0,1\}^{\alpha m}$ to denote the additional advice bits. These random variables need not equal the actual values $X,Y,\Adv$: in general, the adversary's measurement is well-defined for any given advice bits, and $E$ is used to denote its outcome irrespective of whether the advice given was ``correct'' or not. 
For any $i\in [m]$, define $\GUESS_{\boxB\boxE}(i)\in\{0,1\}$ to be $1$ if and only if, either $i\in \ck$ and $E_i = B_i$, or $i\notin \ck$, and let $\GUESS_{\boxB\boxE} = \wedge_i \GUESS_{\boxB\boxE}(i)$.

\subsection{Information reconciliation and privacy amplification}\label{sec:irpa}

For convenience, we let $\boxE' := XYA_\bl B_\bl \boxE$ denote the side information available to the eavesdropper. We show the following lemma, whose proof follows from standard arguments in the analysis of QKD protocols (see e.g.~\cite{Ren05}). We provide the relevant details below. 

\begin{lemma}\label{lem:prota}
Let $\gamma,\eps>0$. Let $\eps' = 2e^{-\gamma|\ck|/400}$. Suppose that, after Step~\ref{step:protb} of Protocol~A, the condition $\Hmin^\eps(B_\ck|\mathcal{E}') \geq \kappa |\ck|$ is satisfied. Then with probability at least $1-\eps'$, at the end of the protocol Alice and Bob have a common shared key that is $2\eps$-close to uniform and has length $\Hmin^\eps(B_\ck|\mathcal{E}') - H(1.1\eta)|\ck| - 4\log(1/\eps)$.
\end{lemma}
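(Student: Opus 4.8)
The plan is to follow the standard two-stage QKD analysis — information reconciliation followed by privacy amplification — and simply track how the smooth min-entropy is consumed at each stage. First I would address \emph{information reconciliation}. After Step~\ref{step:protb}, Alice holds a string $A_{\ck-\bl}$ and Bob holds $B_{\ck-\bl}$ which, conditioned on the protocol not having aborted, disagree in at most a $\approx\eta$ fraction of positions (more precisely, the tolerated CHSH error rate $\eta$ in the Bell rounds controls, via a sampling argument on the randomly chosen subset $\bl$, the error rate on the complementary check rounds, up to a small slack — this is where the $1.1\eta$ in place of $\eta$ comes from, and where the failure probability $\eps' = 2e^{-\gamma|\ck|/400}$ enters through a Chernoff/Hoeffding bound on the discrepancy between the empirical error on $\bl$ and the true error on $\ck - \bl$). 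Conditioned on this good event, a standard one-way information reconciliation scheme (e.g. Bob sends a random hash / syndrome of $A_{\ck-\bl}$) of length $\ell \le H(1.1\eta)|\ck| + O(\log(1/\eps))$ bits allows Alice to recover $B_\ck$ except with probability $\eps$; this is the usual ``hashing bound'' for correcting a bounded number of errors, and I would invoke it as a black box citing~\cite{Ren05}.

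Next I would handle \emph{privacy amplification}. The key point is a chain-rule bound on smooth min-entropy: leaking $\ell$ classical bits of information (the reconciliation message $M$) can decrease the smooth min-entropy by at most $\ell + O(\log(1/\eps))$, i.e. $\Hmin^{\eps}(B_\ck | \mathcal{E}' M) \ge \Hmin^{\eps}(B_\ck|\mathcal{E}') - \ell - O(\log(1/\eps))$. Combining with the hypothesis $\Hmin^\eps(B_\ck|\mathcal{E}') \ge \kappa|\ck|$ and the bound on $\ell$, the adversary's residual uncertainty about $B_\ck$ given everything she now knows is at least $\Hmin^\eps(B_\ck|\mathcal{E}') - H(1.1\eta)|\ck| - O(\log(1/\eps))$. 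Applying the Leftover Hash Lemma against quantum side information (two-universal hashing, as in Step~\ref{step:pa}) then extracts a key of length equal to this residual min-entropy minus $2\log(1/\eps)$ that is $2\eps$-close to uniform conditioned on $\mathcal{E}'M$; bookkeeping of the various $O(\log(1/\eps))$ terms to land exactly on the claimed length $\Hmin^\eps(B_\ck|\mathcal{E}') - H(1.1\eta)|\ck| - 4\log(1/\eps)$ is routine. I would also note that correctness (Alice and Bob agree) follows because both apply the same hash function to strings that, after reconciliation, are equal except with probability $\eps$, all of which is absorbed into the final statistical distance.

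Essentially every ingredient here is off-the-shelf, so there is no deep obstacle; the only place requiring genuine care is the \emph{sampling argument linking the Bell rounds to the check rounds}. Because $\bl$ is a uniformly random subset of $[m]$ and $\ck$ is determined by the inputs, the error rate on $\bl \cap \ck$ is a good estimator of the error rate on $\ck$, but one must be careful that the conditioning (on not aborting, and on the input pattern determining $\ck$) does not break the independence needed for concentration. The standard fix is to use the fact that $\bl$ is chosen independently of the devices' behavior and to apply a Hoeffding bound for sampling without replacement; this yields the stated $\eps' = 2e^{-\gamma|\ck|/400}$ with the $1.1$ slack factor, and is the one spot where I would write out the estimate rather than cite it. Everything else is a matter of assembling known lemmas (chain rule for smooth min-entropy, quantum Leftover Hash Lemma, one-way reconciliation) in the usual order.
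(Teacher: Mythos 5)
Your plan follows the paper's proof essentially step for step: a Chernoff/Hoeffding sampling bound relating the observed error rate on the Bell rounds to a bound of $1.1\eta$ on the check rounds (giving the failure probability $\eps' = 2e^{-\gamma|\ck|/400}$), Renner's one-way information reconciliation bound $\ell \le H(1.1\eta)|\ck| + \log(2/\eps)$, and two-universal hashing (Renner's privacy amplification lemma) extracting $\Hmin^\eps(B_\ck|\mathcal{E}') - \ell - 2\log(1/\eps)$ bits. The only cosmetic difference is that you invoke a min-entropy chain rule explicitly to discount the leaked reconciliation message, whereas the paper cites Renner's Lemma~6.4.1 which already subtracts $\ell$ in its statement; also, in the reconciliation step Bob should send a hash/syndrome of his own string $B_{\ck-\bl}$ rather than of $A_{\ck-\bl}$.
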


\paragraph{Information reconciliation.}
We first analyze the information reconciliation step. The following lemma states the conditions that are required for there to exist a satisfactory information reconciliation procedure. 

\begin{lemma}[Lemma~6.3.4 in~\cite{Ren05}]\label{lem:ir}
Let $A,B\in \{0,1\}^k$ be two random variables, and $\eps>0$. Suppose Alice holds $A$, and Bob holds $B$. There is an information reconciliation protocol in which Bob communicates $\ell \leq H_{max}^\eps(B|A) + \log(2/\eps)$ bits of information about $B$ to Alice and is such that with probability at least $1-\eps$ Alice and Bob both know $B$ at the end of the protocol. 
\end{lemma}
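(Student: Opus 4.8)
The plan is to invoke the standard one-way Slepian--Wolf--type protocol based on two-universal hashing, which is exactly Lemma~6.3.4 in~\cite{Ren05}. Since the statement is quoted verbatim from that reference, the ``proof'' here is really a recollection of the construction and the argument that it works, so I would keep it short and point to the literature for full details. First, I would fix a two-universal family of hash functions $\{f_s\}$ mapping $\{0,1\}^k$ to $\{0,1\}^\ell$, where $\ell$ is to be chosen. Bob samples $s$ uniformly, computes $f_s(B)$, and sends both $s$ and the $\ell$-bit string $f_s(B)$ to Alice (the seed $s$ is classical public randomness and does not count toward the $\ell$ bits of information about $B$, or can be absorbed into the additive $\log(2/\eps)$ slack). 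Alice, holding $A$ and the pair $(s,f_s(B))$, outputs the (essentially unique) $\hat B$ consistent with her side information $A$ and with $f_s(\hat B) = f_s(B)$; operationally this is a maximum-likelihood / typical-set decoder for $B$ given $A$.

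The key step is the error analysis. Conditioned on $A = a$, the number of ``plausible'' values of $B$ that Alice needs to distinguish is governed by $H_{max}^\eps(B|A)$: up to an $\eps$-modification of the state, there are at most roughly $2^{H_{max}^\eps(B|A)}$ such values for each $a$. By the two-universality of $\{f_s\}$, for any fixed $b$ and any $b'\neq b$ the collision probability over the seed $s$ is $\Pr_s[f_s(b') = f_s(b)] \leq 2^{-\ell}$. A union bound over the at most $2^{H_{max}^\eps(B|A)}$ competing values then shows that, except with probability at most $2^{H_{max}^\eps(B|A)-\ell}$ plus the $\eps$ coming from the smoothing, Alice's decoder recovers $B$ correctly. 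Choosing $\ell = H_{max}^\eps(B|A) + \log(2/\eps)$ makes the hashing-collision term at most $\eps/2$, and together with the smoothing error $\eps/2$ this gives total failure probability at most $\eps$, as claimed. (One has to be slightly careful that the smoothing is done on the joint state and that the decoder is defined on the smoothed state; this is routine and handled in~\cite{Ren05}.)

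I don't expect a genuine obstacle here, since this is a textbook result invoked as a black box; the only thing to be careful about is bookkeeping — namely that the $\ell$ bits counted are exactly the hash output (not the seed), that the failure probability is over both the seed $s$ and the randomness already present in $A,B$, and that ``Alice knows $B$'' means she outputs $\hat B$ with $\Pr[\hat B = B] \geq 1-\eps$. Accordingly I would simply state that the lemma is Lemma~6.3.4 of~\cite{Ren05} and sketch the two-universal hashing argument above in a sentence or two, deferring the full proof to that reference, and then move on to using it inside the proof of Lemma~\ref{lem:prota}.
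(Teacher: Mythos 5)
The paper does not prove this lemma; it simply quotes it as Lemma~6.3.4 of~\cite{Ren05} and invokes it as a black box. Your proposal does the same, and your sketch of the underlying two-universal-hashing / list-size argument is a correct recollection of the standard proof, so you are in full agreement with the paper.
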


To apply Lemma~\ref{lem:ir} it suffices to prove an upper bound on the conditional max-entropy $H_{max}^\eps(B_{\ck}|A_{\ck})$. By definition of the rounds $\ck$, the $\CHSH$ condition in those rounds imposes that $A_i = B_i$ for all $i\in \ck$. Hence, were it not for errors, we would have $H_{\max}^\eps(B|A) = 0$. The following claim shows that the bound on the error rate that results from the estimation performed in the rounds $\bl$ in Step~\ref{step:check} of Protocol~B is enough to guarantee a good upper bound on the conditional max-entropy. 

\begin{claim}\label{claim:irbound} Suppose Alice and Bob do not abort after Step~\ref{step:check} in Protocol~B. Let $\ck$ be the set of check rounds, as designated in Step~\ref{step:ir} of Protocol~A. Then 
$H_{max}^{\eps'}(B_\ck|C_\ck) \,\leq\, H(1.1\eta)|\ck|$, where $\eps' = 2e^{-\gamma|\ck|/400}$. 
\end{claim}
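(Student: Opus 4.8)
The plan is to reduce the smooth‑max‑entropy bound to a combinatorial statement about the Hamming distance between Alice's and Bob's output strings on the check rounds, and then to control that distance by a sampling argument.

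First I would record the structural observation. On every round $i\in\ck$ the inputs are $(X_i,Y_i)=(2,1)$, for which the $\CHSH$ condition is exactly the requirement $A_i=B_i$. Writing $W:=\{i\in\ck:\,A_i\neq B_i\}$ for the set of ``error rounds'' among the check rounds and $w:=|W|$, the string $B_\ck$ is uniquely determined by the pair $(A_\ck,W)$: it agrees with $A_\ck$ on $\ck\setminus W$ and is its bitwise complement on $W$. Hence, conditioned on any value of $A_\ck$ and on any event forcing $w\le 1.1\eta|\ck|$, the variable $B_\ck$ ranges over at most $\sum_{j\le 1.1\eta|\ck|}\binom{|\ck|}{j}\le 2^{H(1.1\eta)|\ck|}$ values, the last step being the standard binomial estimate (valid since $1.1\eta\le 1/2$). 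Combined with the standard bound relating smooth conditional max‑entropy to conditional support size --- if an event of probability at least $1-\delta$ forces the classical variable $B_\ck$ to take at most $N$ values given $A_\ck$, then $H_{max}^{O(\sqrt\delta)}(B_\ck|A_\ck)\le\log N$ (see~\cite{Ren05}) --- this reduces the claim to showing: conditioned on Alice and Bob not aborting in Step~\ref{step:check} of Protocol~B, the event $\{w>1.1\eta|\ck|\}$ has probability exponentially small in $|\bl\cap\ck|$ (in particular, of order $(\eps')^2$).

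The core of the proof is this last probabilistic statement, which I would establish by a Chernoff bound for sampling without replacement (of the kind collected in Section~\ref{sec:additional}). The crucial feature of Protocol~B is that $\bl\subseteq[m]$ is drawn by Alice uniformly at random from fresh private randomness, hence independently of $X,Y,A,B$ and of the adversary's register $\mathcal{E}$. Conditioning on an arbitrary value of $(X,Y,A,B)$ fixes $\ck$ and the error set $W\subseteq\ck$, while $\bl\cap\ck$ becomes a uniformly random subset of $\ck$ of expected size $\gamma|\ck|=\Theta(\gamma m)$; in particular $|\bl\cap W|$ is hypergeometric with mean $\gamma w$. The parameter estimation in Step~\ref{step:check} guarantees that, on the event ``not abort'', the observed fraction of $\CHSH$‑violations among the $(2,1)$‑input rounds of $\bl$ --- equivalently, the fraction of $i\in\bl\cap\ck$ with $A_i\neq B_i$ --- is at most $\eta$. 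Consequently, if $w>1.1\eta|\ck|$ then $|\bl\cap W|$ must fall by at least a $0.1\eta$‑fraction of $|\ck|$ below its mean, an event of probability exponentially small in $\gamma|\ck|$; since $\gamma=(C_\gamma/\eta^2)\ln(1/\eps)/m$ is proportional to $\eta^{-2}$, the sample $\bl\cap\ck$ has $\Theta(\eta^{-2}\ln(1/\eps))$ elements, large enough that this probability is below $(\eps')^2$. Averaging over $(X,Y,A,B)$ and plugging into the reduction above gives $H_{max}^{\eps'}(B_\ck|A_\ck)\le H(1.1\eta)|\ck|$.

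I expect the delicate point to be making the concentration step quantitatively clean: both the numerator $|\bl\cap W|$ and the denominator $|\bl\cap\ck|$ of the observed error frequency fluctuate, so one needs a two‑sided estimate (a union bound over two deviation events); the gap between the tested threshold $\eta$ and the target rate $1.1\eta$ must absorb these fluctuations, which is exactly what the choice $\gamma\propto\eta^{-2}$ buys; and the failure probability must come out small enough (of order $(\eps')^2$) to be swallowed by the $\eps'$‑smoothing when passing to the sub‑normalized state conditioned on not aborting. A secondary point worth spelling out is the passage from ``not abort'' to the bound $\eta$ on the observed check‑round error frequency, which uses that the $\CHSH$ statistics gathered on $\bl$ can be read off separately for each input type, the check rounds being precisely the rounds with input $(2,1)$.
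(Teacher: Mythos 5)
Your proof is correct and takes essentially the same route as the paper's: observe that on the check rounds the $\CHSH$ condition forces $A_i=B_i$, so given $A_\ck$ the string $B_\ck$ is determined by the error set $W=\{i\in\ck:A_i\neq B_i\}$; bound $|W|\lesssim 1.1\eta|\ck|$ except with probability $\ll\eps'$ by a sampling-without-replacement concentration bound on $\bl\cap\ck$ (the not-abort condition); then convert the resulting support bound into a smooth conditional max-entropy bound. Your write-up fills in details the paper's brief sketch elides (the hypergeometric tail, the binomial-to-entropy estimate, the union bound over the fluctuations of both $|\bl\cap W|$ and $|\bl\cap\ck|$, and the reading of the $\CHSH$ abort condition restricted to $(2,1)$-input rounds), but the core argument is the same.
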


\begin{proof}
Fix the set $\ck$. The set $\bl$ chosen by Alice and Bob to perform parameter estimation contains a fraction at least $\gamma/2$ of the rounds in $\ck$, except with probability at most $e^{-\gamma|\ck|/8}$. The protocol is aborted as soon as more than an $\eta$ fraction of those rounds are such that $a_i\neq b_i$. Hence with probability at least $1-e^{-\gamma |\ck|/200}$ the total fraction of errors in $\ck$ is at most $1.1\eta$. In particular, with probability at least $1-e^{-\gamma|\ck|/400}$ over $A_{\ck}$, with probability at least $1-e^{-\gamma|\ck|/400}$, $B_{\ck}$ will take on at most $2^{H(1.1\eta)|\ck|}$ values. 
\end{proof}

\paragraph{Privacy amplification.}
The following lemma states the existence of a good protocol for privacy amplification. 

\begin{lemma}[Lemma 6.4.1 in~\cite{Ren05}]\label{lem:pa}
Suppose the information reconciliation protocol requires at most $\ell$ bits of communication. Then for any $\eps>0$ there is a privacy amplification protocol based on two-universal hashing which extracts $\Hmin^\eps(B_{\ck}|\mathcal{E}') - \ell - 2\log(1/\eps)$ bits of key. 
\end{lemma}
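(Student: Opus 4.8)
The statement is the standard privacy-amplification-against-quantum-side-information result, and the plan is to assemble it from three ingredients, all available in~\cite{Ren05}: the quantum Leftover Hash Lemma, a chain rule bounding how much conditional min-entropy the information-reconciliation transcript can destroy, and the smoothing that turns min-entropy into smooth min-entropy. Concretely, the protocol is: Alice and Bob agree on a uniformly random hash function $F$ from a two-universal family mapping $\{0,1\}^{|\ck|}$ into $\{0,1\}^k$ with $k := \Hmin^\eps(B_\ck|\mathcal{E}') - \ell - 2\log(1/\eps)$, broadcast its description, and each output $K := F(B_\ck)$ (Bob applies $F$ to his string $B_\ck$; Alice applies it to the copy of $B_\ck$ she recovered during reconciliation). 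What must be shown is that $K$ is close to uniform given all of the adversary's side information, and of the stated length.

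First, after reconciliation the eavesdropper's side information is $\mathcal{E}'M$, where $M$ is a classical register holding the $\le \ell$-bit transcript that Bob sent to Alice. A classical register of at most $\ell$ bits can decrease the smooth min-entropy by at most $\ell$: if $\tilde\rho_{B_\ck\mathcal{E}'}$ is the state in the purified-distance $\eps$-ball achieving $\Hmin^\eps(B_\ck|\mathcal{E}')$, then applying to it the channel that computes $M$ from $B_\ck$ and the public randomness yields, by monotonicity of the purified distance under channels, a state in the $\eps$-ball around the true post-reconciliation state, whose conditional min-entropy drops by at most $\log\dim(M)\le\ell$. Hence
\beq
\Hmin^\eps(B_\ck\,|\,\mathcal{E}'M) \;\geq\; \Hmin^\eps(B_\ck\,|\,\mathcal{E}') - \ell.
\eeq

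Second, the quantum Leftover Hash Lemma (see~\cite{Ren05}) states that, for the true post-reconciliation cq-state and a uniformly random $F$ from a two-universal family with output length $k$,
\beq
\tfrac12\big\| \rho_{F(B_\ck)F\mathcal{E}'M} - \rho_{U_k}\otimes\rho_F\otimes\rho_{\mathcal{E}'M}\big\|_1 \;\leq\; \eps + \tfrac12\, 2^{-\frac12(\Hmin^\eps(B_\ck|\mathcal{E}'M) - k)},
\eeq
where the additive $\eps$ accounts for passing to the optimal state in the smoothing ball (there the non-smooth Leftover Hash Lemma applies) and for converting purified distance back into trace distance. Combining the two displays with the choice of $k$ gives $\Hmin^\eps(B_\ck|\mathcal{E}'M) - k \geq 2\log(1/\eps)$, so the right-hand side is at most $\eps + \tfrac12\eps = O(\eps)$. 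Thus $K = F(B_\ck)$ has length $k$, is $O(\eps)$-close to independent and uniform given $(\mathcal{E}',M,F)$ — i.e. given everything the adversary sees — and Alice computes the same value.

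The genuine content of the lemma lies entirely in the quantum Leftover Hash Lemma: that two-universal hashing still extracts near-uniform randomness when the adversary holds a quantum, rather than merely classical, register. This is the technically hard ingredient, and we import it from~\cite{Ren05}; everything else is the bookkeeping above. I would also note that working with the smooth min-entropy here is essential rather than cosmetic: it is what lets the argument cope with an adversary holding only partial correlations with $B_\ck$ (rather than a perfect guess), since the smoothing ball absorbs the low-weight ``bad'' part of the state that would otherwise spoil the non-smooth bound.
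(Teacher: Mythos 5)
The paper does not prove this lemma itself: it is imported verbatim as Lemma~6.4.1 of Renner's thesis~\cite{Ren05}, so there is no ``paper's own proof'' to compare against. Your reconstruction is correct and matches the standard argument from that source: you invoke the quantum Leftover Hash Lemma, the chain rule $\Hmin^\eps(B_\ck|\mathcal{E}'M) \geq \Hmin^\eps(B_\ck|\mathcal{E}') - \log\dim(M)$ for a classical $\ell$-bit register $M$, and the smoothing bookkeeping, and the parameter choice $k = \Hmin^\eps(B_\ck|\mathcal{E}') - \ell - 2\log(1/\eps)$ indeed yields a key that is $O(\eps)$-close to uniform given $(\mathcal{E}',M,F)$. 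Two cosmetic notes: the purified distance already upper-bounds the trace distance, so no genuine ``conversion back'' is needed there; and the exact constant multiplying $\eps$ in the smoothed Leftover Hash Lemma varies between formulations (some give $2\eps$), but this does not affect the conclusion. Since the lemma's statement concerns only the key \emph{length}, and agreement between Alice and Bob is handled separately by Claim~\ref{claim:irbound} in the surrounding Lemma~\ref{lem:prota}, your treatment of correctness is appropriately light. Overall this is the same approach as the cited reference; no gap.
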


Lemma~\ref{lem:prota} now follows directly by combining Claim~\ref{claim:irbound} with Lemma~\ref{lem:pa} and the assumption on the conditional min-entropy placed in the lemma.

\subsection{A lower bound on the conditional min-entropy}\label{sec:main-proof}

The main result of this section is a lower bound on the conditional smooth min-entropy $\Hmin^\eps(B_{\ck}|XYA_\bl B_\bl \boxE)$ of the raw key.

\begin{theorem}\label{thm:main-tech}
Let $\eta >0$ be given. There exists positive constants $C_\eps,C_\gamma$ (possibly depending on $\eta$) such that the following hold. Let $m$ be an integer and $\eps\geq e^{-C_\eps m}$ be given. Let $\gamma = (C_\gamma/\eta^2) \ln(1/\eps)/m$ be as specified in Protocol~A (Figure~\ref{fig:qkda}). Let $\kappa$ be any constant such that $\kappa< (\sqrt{2}-1)/(4\ln(2)) - (4/\ln(2))\eta$.

Suppose that the devices $\boxA$, $\boxB$ are such that with probability at least $\eps$ the protocol does not abort. Let $\mathcal{E}$ be an auxiliary system held by an eavesdropper, who may also learn $(X,Y)$ and $(A_\bl,B_\bl)$. Then, conditioned on the protocol not aborting, it holds that 
$$\Hmin^\eps(B_{\ck}|XYA_\bl B_\bl \boxE) \geq \kappa |\ck| - O\big(\ln(1/\eps)\big).$$
\end{theorem}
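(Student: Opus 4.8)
\textbf{Proof proposal for Theorem~\ref{thm:main-tech}.}

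The plan is to follow the two-ingredient strategy sketched in Section~\ref{sec:intro-tech}, combining the ``quantum reconstruction paradigm'' with a round-by-round conditioning argument controlled by information-theoretic tools. First I would invoke the quantum reconstruction paradigm (the analogue of Lemma~\ref{lem:strong-adv} referenced in the probability-space section): assuming, for contradiction, that $\Hmin^\eps(B_\ck|XYA_\bl B_\bl\boxE) < \kappa|\ck| - O(\ln(1/\eps))$, there is an adversary measurement on $\mathcal{E}$, given the public data $X,Y,A_\bl,B_\bl$ together with a short advice string $\hat\Adv$ of only $\alpha m$ bits taken from $B_\ck$, whose outcome $E$ equals $B_\ck$ with probability at least roughly $\eps$ (crucially independent of $|\ck|$). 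In the notation of the excerpt this says $\Pr(\GUESS_{\boxB\boxE}) \gtrsim \eps$, and simultaneously the non-abort hypothesis gives $\Pr(\neg\textrm{abort}) \geq \eps$, which via a Chernoff/concentration argument on the Bell rounds $\bl$ (sampled from all $m$ rounds) yields $\Pr(\VIOL_{\boxA\boxB}(\delta)) \gtrsim \eps$ for a suitable $\delta = O(\eta)$ — i.e. on average over rounds the $\CHSH$ condition is nearly maximally satisfied. So both event (i) (CHSH violation, amortized over rounds) and event (ii) (Eve guesses $B_i$ on check rounds) hold with probability at least about $\eps$.

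Next comes the heart of the proof: the round-by-round conditioning. I would argue that there exists a round $i_0$ such that, conditioned on (i) and (ii) holding in rounds $1,\dots,i_0-1$ (and on the public data and advice for those rounds), events (i) and (ii) each still hold in round $i_0$ with probability close to $1$ — this follows by a standard averaging/telescoping argument using that the ``success'' probabilities do not decay too fast ($\geq\eps = 2^{-c_0 m}$), so on a $1-o(1)$ fraction of rounds the conditional single-round success is $\geq 1 - O(c_0)$. The danger, as emphasized in the overview, is that this conditioning — which jointly involves an event about Alice and Bob (the CHSH statistics) and an event about Bob and Eve (Eve's guess being correct) — may destroy the no-signalling condition (iii): conditioning on the low-probability outcome of Eve's measurement can correlate the inputs $(X_{i_0},Y_{i_0})$ with the devices' state, or correlate Alice's and Eve's marginal behavior with Bob's output, so that the ``guessing game'' contradiction from step one of the proof (Alice and Eve teaming up to predict $B_{i_0}$, violating no-signalling) no longer applies. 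To handle this I would bound the total correlation introduced: write the mutual information $I(X_{i_0}Y_{i_0} : \text{everything Eve and the conditioning see})$ as a telescoping sum over rounds via the chain rule, bound the total by $O(\ln(1/\eps)) = O(m)$, hence on average over $i_0$ each round contributes $O(\ln(1/\eps)/m)$ bits; choosing $\gamma$ (hence the number of advice bits and the slack) appropriately makes this $o(1)$, and then quantum Pinsker's inequality~\eqref{eq:pinsker} converts this small mutual information into a small trace-distance perturbation, so that in round $i_0$ the inputs are $o(1)$-close to uniform and the tripartite state is $o(1)$-close to one satisfying no-signalling. Then the single-round guessing-game argument of step one applies to this perturbed round and derives the contradiction, completing the proof.

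I expect the main obstacle to be precisely the control of the correlations introduced by conditioning — making the chain-rule bookkeeping work when the conditioning event is a conjunction of a two-party (Alice–Bob, CHSH statistics) event and a different two-party (Bob–Eve, correct guess) event, with three parties in a sequential (non-parallel) interaction and the adversary holding a quantum register. One has to be careful about which quantities are classical (and so amenable to the classical chain rule and Pinsker as stated) versus genuinely quantum, about the fact that $\VIOL_{\boxA\boxB}(i)$ and the guessing measurement both depend on the post-measurement state of the devices in round $i$ (which itself depends on the conditioning), and about accounting for the advice bits $\hat\Adv$ that leak information about $B_\ck$ into Eve's side. The parameter choices — $\gamma = (C_\gamma/\eta^2)\ln(1/\eps)/m$, the bound $\kappa < (\sqrt2-1)/(4\ln 2) - (4/\ln 2)\eta$, and $\eps \geq e^{-C_\eps m}$ — should all fall out of balancing the slack needed in the guessing game (the $(\sqrt2-1)$ comes from the gap between the quantum value $\opt$ and what a no-signalling Alice–Eve pair could achieve), the noise tolerance $O(\eta)$ lost to the non-maximal CHSH violation, and the per-round correlation budget $O(\ln(1/\eps)/m)$; I would defer the exact constants to the detailed proof in Sections~\ref{sec:chain} and~\ref{sec:adv}.
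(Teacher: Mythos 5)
Your proposal matches the paper's own proof strategy essentially step for step: the paper also proceeds by contradiction, first invoking the reconstruction lemma (Lemma~\ref{lem:strong-adv}) to turn a small smooth min-entropy into a guessing measurement for Eve with $\poly(\eps/m)$ success probability given $\approx \kappa|\ck| + 2\gamma m$ advice bits, then combines this with a Chernoff bound on the Bell rounds to get $\Pr(\CHSH_{\boxA\boxB}(\eta')\wedge\GUESS_{\boxB\boxE})\gtrsim\eps'$, then applies a round-by-round conditioning argument (Lemma~\ref{lem:chain-rule}) whose engine is exactly the chain-rule plus Pinsker bookkeeping you describe (via Claims~\ref{claim:prod_inputs}, \ref{claim:product}, \ref{claim:mutualinfo}) to find a round $i_0$ where approximate no-signalling, a near-optimal CHSH value, and a high guessing probability all coexist, and finally derives the contradiction from the single-round guessing lemma (Lemma~\ref{lem:guessing}). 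The only things you leave informal are details the paper in fact has to handle with some care — that the quantum mutual-information chain rule is established via an HSW-style coding argument, that the per-round guess-failure budget is dominated by $\ln(2)\alpha\approx\ln(2)\kappa/6$ (which is what ultimately produces the $\kappa<(\sqrt2-1)/(4\ln 2)-(4/\ln 2)\eta$ threshold) rather than by $\ln(1/\eps)/m$, and that one must be able to fix future inputs/advice consistently when freezing a good transcript prefix — but you flag precisely these as the places needing care, so there is no genuine gap.
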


We note that the precise relation between the parameters $\kappa$ and $\eta$ stated in the theorem is the one that we obtain from our proof; however we have not attempted to optimize it fully and it is likely that one may be able to derive a better dependency. It is also clear from the proof that one may trade off the different constants between each other, depending on whether one is interested in the maximum possible key rate in the presence of very small noise, or to the opposite if one wishes to tolerate as much noise as possible. 

\medskip

The proof of Theorem~\ref{thm:main-tech} is based on three lemmas. We state the lemmas first, and derive the theorem from them below. 

\subsubsection{The reconstruction lemma}

Our first lemma states that, if the min-entropy condition in the conclusion of the theorem is not satisfied, then there must exist a measurement on the system $\boxE$, depending on $X,Y,A_\bl$ and $B_\bl$, together with some additional ``advice'' bits of information about $B_\ck$, whose outcome $E\in\{0,1\}^{|\ck|}$ agrees with $B_\ck$ with non-negligible probability.   

\begin{lemma}\label{lem:strong-adv}
Let $\kappa>0$ and suppose that $\Hmin^\eps(B_{\ck}|XYA_\bl B_\bl \boxE) < \kappa  |\ck|$. Then there exists an $\alpha = \kappa |\ck|/m +2\gamma  + O(\log (m/\eps)/m)$ and a function $f:\{0,1\}^{|\ck|}\to\{0,1\}^{(\alpha-2\gamma) m}$ such that, given the  bits $\Adv=f_\Adv(B_\ck)A_\bl B_\bl\in\{0,1\}^{\alpha m}$ together with the inputs $X,Y$, there exists a measurement on $\boxE$ that outputs a string $e\in\{0,1\}^{|\ck|}$ such that with probability (over the randomness in $B$ and in the measurement) at least $C_E(\eps/m)^6$, where $C_E$ is a universal constant, the equality $e= b_\ck$ holds.
\end{lemma}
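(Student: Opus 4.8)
The plan is to establish Lemma~\ref{lem:strong-adv} via the ``quantum reconstruction paradigm'' of~\cite{DV09,DVPR11,VV12}, which converts a distinguishing-type advantage into a guessing-type advantage whose success probability is essentially independent of the length of the target string. The starting point is the hypothesis $\Hmin^\eps(B_{\ck}|XYA_\bl B_\bl\boxE) < \kappa|\ck|$. By the operational characterization of the conditional min-entropy as the optimal guessing probability (up to the usual $2^{\Hmin}$ factor, and accounting for the $\eps$-smoothing), this means that \emph{some} measurement on $XYA_\bl B_\bl\boxE$ already guesses $B_\ck$ with probability roughly $2^{-\kappa|\ck|}$ (minus the smoothing error). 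This is, however, exponentially small in $|\ck|$, whereas the lemma demands a success probability $C_E(\eps/m)^6$ that does not depend on $|\ck|$. The role of the $(\alpha-2\gamma)m$ advice bits $f_\Adv(B_\ck)$ is precisely to buy back this exponential factor: intuitively, with $\kappa|\ck|$-ish bits of advice about $B_\ck$ one expects the residual uncertainty to drop to a constant.

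First I would set up the reconstruction procedure formally. The natural approach is a ``cut-and-choose''/hybrid-argument style reconstruction: the adversary receives as advice a hash (or a block of coordinates, or more precisely the output $f_\Adv(B_\ck)$ of a suitably chosen function, together with $A_\bl,B_\bl$ which it would need anyway) and is asked to recover all of $B_\ck$. The key quantitative engine is a result of the form ``if $X$ has smooth min-entropy less than $k$ given side information $Z$, then there is a function $g$ on a few more than $k$ bits and a measurement on $Z$ recovering $X$ from $(g(X),Z)$ with probability bounded below by a constant (or inverse-polynomial) factor'' — this is exactly the content of the reconstruction lemmas in~\cite{DVPR11,VV12}, and I would invoke it essentially as a black box, being careful about three bookkeeping points: (a) the smoothing parameter $\eps$ and how it degrades the recovery probability (this is where a factor like $(\eps/m)^6$ or similar polynomial loss enters — I'd trace through the exact exponent, which depends on how many nested applications/amplitude-estimation-type steps the cited reconstruction uses); (b) the extra $2\gamma m$ bits, which account for $A_\bl,B_\bl$ being handed over separately from the ``pure'' advice $f_\Adv(B_\ck)$, so that the total advice is $\alpha m$ with $\alpha = \kappa|\ck|/m + 2\gamma + O(\log(m/\eps)/m)$ as claimed; and (c) that the measurement may depend on $X,Y$ (all of them, not just $X_\ck,Y_\ck$), which is harmless since these are classical and public.

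The second thing I would verify carefully is that the adversary's measurement is well-defined and its success probability is meaningful even when the advice bits handed to it are \emph{not} the correct ones — this matters because later in the paper ($\GUESS_{\boxB\boxE}$, the $\hat X,\hat Y,\hat\Adv$ notation) the conditioning arguments will feed the measurement ``hypothetical'' advice. So I would phrase the reconstruction as: there is a single POVM $\{E^{e}_{\hat x\hat y\hat\Adv}\}$ on $\boxE$, indexed by arbitrary inputs $\hat x,\hat y$ and arbitrary advice $\hat\Adv$, such that when fed the \emph{true} values it recovers $B_\ck$ with probability at least $C_E(\eps/m)^6$ averaged over $B$ and the measurement randomness. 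The existence of the function $f$ is then just: take whatever the cited reconstruction lemma produces (typically a random hash function from a two-universal family, or a fixed combinatorial object obtained by a probabilistic argument), and observe its output length is $(\alpha-2\gamma)m = \kappa|\ck| + O(\log(m/\eps))$ as needed, so that together with $A_\bl B_\bl\in\{0,1\}^{2\gamma m}$ (up to the concentration $||\bl| - \gamma m|$ being $0$ by construction, and $|A_\bl| = |B_\bl| = \gamma m$, so in fact this is $2\gamma m$ bits) the total is $\alpha m$.

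The main obstacle I anticipate is the quantitative accounting in step one: getting the recovery probability down to a clean $C_E(\eps/m)^6$ form requires being precise about how the $\eps$-smoothing interacts with the guessing-probability characterization of min-entropy and with the reconstruction procedure — in particular, if the reconstruction internally uses amplitude amplification or a Marriott–Watrous-type repetition to boost a sub-constant success probability, each such step costs a polynomial factor in the probability and a $\log$ factor in the advice length, and one must check these all assemble into the stated $\alpha$ and the stated $(\eps/m)^6$. The conceptual content, by contrast, is entirely inherited from~\cite{VV12,DVPR11}; the work here is to re-derive their reconstruction lemma in the precise form needed downstream (with side information split into a ``public classical'' part $X,Y$, an ``advice'' part, and a ``quantum'' part $\boxE$) and to track constants. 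I would therefore structure the proof as: (1) recall the min-entropy/guessing duality with smoothing; (2) state the abstract reconstruction lemma from prior work; (3) apply it with $X \leftarrow B_\ck$, $Z \leftarrow XYA_\bl B_\bl\boxE$, splitting $Z$ appropriately; (4) read off $\alpha$, $f$, and the success probability, checking the $O(\log(m/\eps)/m)$ and $(\eps/m)^6$ bookkeeping.
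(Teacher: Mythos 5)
Your high-level plan is the right one and your bookkeeping is sound: trade the exponentially small $2^{-\kappa|\ck|}$ guessing probability for a $\poly(\eps/m)$ probability independent of $|\ck|$ by giving $\approx\kappa|\ck|$ bits of advice about $B_\ck$ (plus the $2\gamma m$ bits of $A_\bl B_\bl$), and insist that the measurement be well-defined for arbitrary, possibly incorrect, advice so it can be fed hypothetical values later. But the proposal treats the core engine as an off-the-shelf black box, and that is a real gap: the abstract lemma you want (``low smooth min-entropy given $Z$ $\Rightarrow$ short advice $g(X)$ plus a measurement on $Z$ recovering $X$ with $\poly(\eps/m)$ probability'') is not stated in that form in~\cite{DVPR11,VV12}; it is \emph{derived} there, and in Section~\ref{sec:rec-lem} here, via Trevisan's extractor. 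Concretely: one builds a quantum-proof Trevisan extractor $Ext_C$ from a $(\delta,L)$-list-decodable code $C$ and a weak $(t,1,m,d)$-design, with output length $m:=\kappa|\ck|+1$ chosen just above the assumed entropy bound; data processing (\cite[Lemma V.1(ii)]{KR11}) then forces $Ext_C(B_\ck,Y)$ to be $(\eps/2)$-far from uniform given $XYA_\bl B_\bl\boxE$; a hybrid argument over the $m$ output bits, exploiting the design structure, isolates $rm$ fixed seed positions $y_1,\dots,y_{rm}$ such that, given the codeword symbols $\{C(B_\ck)_{y_i}\}$ as advice, there is a one-bit predictor with bias $\eps/m$; a K\"onig--Terhal step converts this into a measurement outputting a string agreeing with $C(B_\ck)$ on a $1/2+\Omega(\eps^2/m^2)$ fraction of positions; and list-decoding reduces to $L=\poly(m/\eps)$ candidates, one chosen at random. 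Your instinct that the $(\eps/m)^6$ comes from ``nested amplification-type steps'' is qualitatively in the right spirit, but the actual accounting is $\eps^2/m^2$ from K\"onig--Terhal times $1/L$ from the list size with $L=\Theta(m^4/\eps^4)$; there is no amplitude-amplification or Marriott--Watrous loop anywhere in this argument.

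The second concrete error is your guess for the nature of $f$. It is not a two-universal hash (and a two-universal hash would not work, because the final step needs list-recovery, which a generic hash does not provide). The function is $f: B_\ck \mapsto \big(C(B_\ck)_{y_1},\dots,C(B_\ck)_{y_{rm}}\big)$ --- the restriction of a list-decodable encoding of $B_\ck$ to the positions singled out by the hybrid argument. The $O(\log(m/\eps)/m)$ slack in $\alpha$ comes from the ``$+1$'' in the extractor output length and the parameters of the code and design, not from an internal repetition loop. Also, the starting point you invoke (the $p_{\mathrm{guess}}=2^{-\Hmin}$ duality) plays no role in the actual argument; the proof enters through the converse direction of the extractor property rather than through any direct guessing bound.
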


The proof of Lemma~\ref{lem:strong-adv} is based on a ``reconstruction''-type argument from~\cite{DVPR11}. A very similar argument was already used to establish an analogous lemma in~\cite{VV12}. We give the proof of Lemma~\ref{lem:strong-adv} in Section~\ref{sec:adv}. 

\subsubsection{Existence of a good round}

Our second lemma states the existence of a ``good'' round $i_0\in [m]$ in which both the $\CHSH$ condition is satisfied, and the outcome $E_{i_0}$ of the measurement described in Lemma~\ref{lem:strong-adv} agrees with $B_{i_0}$, with good probability. Note also the additional condition~\eqref{eq:chain-0} in the lemma, which states that systems $\boxA$ and $\boxB$ are each close to being independent from the random variables $X_{i_0},Y_{i_0}$ describing the choice of inputs in round $i_0$. This condition is necessary for condition~\eqref{eq:chain-1}, on the $\CHSH$ violation, to be of any use: indeed, without~\eqref{eq:chain-0} it could in principle be that the conditioning on specific outcomes in previous rounds, including the adversary's outcomes, completely fixes the choice of inputs in the $i_0$-th round. Conditions~\eqref{eq:chain-0}--\eqref{eq:chain-2} in the lemma correspond to conditions~(i)--(iii) discussed in Section~\ref{sec:intro-tech}.  

Eq.~\eqref{eq:chain-0} implies that the distribution that arises from the devices' measurements on the states $\rho_{\boxA_{i_0}\boxB_{i_0}}$ is, while not necessarily quantum, still no-signalling, and this is all that is required for the application of the guessing lemma, Lemma~\ref{lem:guessing} below. As explained in the introduction, proving this condition is an important point of departure of our proof from previous approaches, which used an assumption of independence between the devices or a limitation of the adversary in order to automatically obtain that (an even stronger form of) the condition held in all rounds without requiring any conditioning.  

We refer to Section~\ref{sec:probspace} for a description of the events $\CHSH_{\boxA\boxB}$ and $\VIOL_{\boxA\boxB}$ appearing in the statement of the lemma.

\begin{lemma}\label{lem:chain-rule}
Let $\hat{\Adv}$ be uniformly distributed in $\{0,1\}^{\alpha m}$, and $\eta,\eps>0$ be such that the following holds:
$$\Pr\big( \CHSH_{\boxA\boxB}(\eta) \wedge \GUESS_{\boxB\boxE} | \Adv = \hat{\Adv} \big) \,\geq\, \eps,$$
and let $\alpha = |\Adv|/ m$. 
Then there exists a universal constant $C_\nu>0$, a $\nu \leq C_\nu\sqrt{\log(1/\eps)/m}$, an $i_0 \in [m]$ and a set $G_{i_0}\subseteq (\{0,1,2\}\times\{0,1\}\times \{0,1\}^3)^{i_0-1}$ such that for every $(x,y,a,b,e)\in G_{i_0}$, there is a choice of $\hat{x}_{> i_0},\hat{y}_{> i_0}$ and an $\hat{\Adv}$ consistent with $((x,\hat{x}_{> i_0}),(y,\hat{y}_{> i_0}),a,b)$ such that 
the following hold:
\begin{align}
&\max\Big\{  \Big\| \rho_{\boxA_{i_0} X_{i_0}Y_{i_0}} - \rho_{\boxA_{i_0}}\otimes \Big(\frac{1}{6}\sum_{x,y} \ket{x,y}\bra{x,y}\Big) \Big\|_1\ ,\notag\\
&\qquad\quad \Big\| \rho_{\boxB_{i_0} X_{i_0}Y_{i_0}} - \rho_{\boxB_{i_0}}\otimes \Big(\frac{1}{6}\sum_{x,y} \ket{x,y}\bra{x,y}\Big) \Big\|_1\Big\} \leq \nu, \label{eq:chain-0}\\
&\VIOL_{\boxA\boxB}(i_0) \,\leq\, 3\eta+\nu,\label{eq:chain-1}\\
&\Pr(\GUESS_{\boxB\boxE}(i_0))\geq  1-12\ln(2)\alpha - \nu,\label{eq:chain-2}
\end{align}
where in~\eqref{eq:chain-0} the state $\rho_{\boxA_{i_0}\boxB_{i_0} X_{i_0}Y_{i_0}}$ is the (normalized) state of the corresponding systems in round $i_0$, conditioned on $(x,y,a,b,e)$, and similarly in~\eqref{eq:chain-1} and~\eqref{eq:chain-2} the violation is estimated conditioned on previous input/outputs to the devices being $(x,y,a,b)$, and on Eve making her measurement based on the inputs $(x,2,\hat{x}_{> i_0})$ and $(y,1,\hat{y}_{> i_0})$ and advice string $\hat{\Adv}$, and obtaining outcomes $e$ as her prediction in rounds $\ck\cap \{1,\ldots,i_0-1\}$.   
\end{lemma}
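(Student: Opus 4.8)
The plan is to extract the good round $i_0$ by a conditioning/averaging argument over the $m$ rounds, and then bound the correlations that the conditioning introduces using information-theoretic tools. Starting from the hypothesis
$\Pr(\CHSH_{\boxA\boxB}(\eta)\wedge\GUESS_{\boxB\boxE}\mid \Adv=\hat\Adv)\geq\eps$ with $\hat\Adv$ uniform,
I would first pass to the event $\VIOL_{\boxA\boxB}(3\eta)\wedge\GUESS_{\boxB\boxE}$: a standard martingale/Azuma concentration argument (in the spirit of Section~\ref{sec:additional}) shows that if the empirical $\CHSH$ violation is at most $\eta$ on all of $[m]$ with probability $\geq\eps$, then the average of the round-by-round expected violations $\VIOL_{\boxA\boxB}(i)$ is at most $3\eta$ (say) except with probability $\ll\eps$, and likewise $\GUESS_{\boxB\boxE}=\wedge_i\GUESS_{\boxB\boxE}(i)$ forces $\Pr(\GUESS_{\boxB\boxE}(i))$ to be close to $1$ on average once one conditions on success in the earlier rounds. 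Call $W$ the event obtained by intersecting these, so $\Pr(W\mid\Adv=\hat\Adv)\geq\eps/2$.

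Next I would do the round-by-round conditioning. Write the success probability on $W$ as a telescoping product $\prod_{i=1}^m \Pr(\text{round }i\text{ ``good''}\mid\text{rounds }<i\text{ good})$; since the product is $\geq\eps/2$, at least one factor — in fact an averaged/amortized version over $i\in[m]$ — is at least $(\eps/2)^{1/m}=2^{-O(\log(1/\eps)/m)}$, i.e.\ $1-O(\log(1/\eps)/m)$. This is what yields the $\nu=O(\sqrt{\log(1/\eps)/m})$ in the statement; I would pick $i_0$ to be such a round and let $G_{i_0}$ be the set of histories $(x,y,a,b,e)$ on rounds $<i_0$ that are ``good'' (lie in $W$ restricted to the first $i_0-1$ rounds and make the round-$i_0$ conditional success large). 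Conditions~\eqref{eq:chain-1} and~\eqref{eq:chain-2} then follow directly from membership of the history in $G_{i_0}$, modulo the concentration slack, after choosing the completion $\hat x_{>i_0},\hat y_{>i_0}$ and $\hat\Adv$ consistently. The choice of the advice string $\hat\Adv$ needs care: $\Adv$ is revealed publicly and its entropy must be controlled — $\alpha m$ bits — so that conditioning on a fixed $\hat\Adv$ costs at most $O(\alpha)$ in the relevant information quantities, which is exactly where the $12\ln(2)\alpha$ term in~\eqref{eq:chain-2} comes from (via the chain rule for mutual information and Pinsker's inequality~\eqref{eq:pinsker}).

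The main obstacle — and the heart of the lemma — is establishing the no-signalling/near-independence condition~\eqref{eq:chain-0}: that after all this conditioning the devices' reduced states $\rho_{\boxA_{i_0}X_{i_0}Y_{i_0}}$ and $\rho_{\boxB_{i_0}X_{i_0}Y_{i_0}}$ are each $\nu$-close to a product with the uniform input distribution. A priori, conditioning on the low-probability event $W$ (which involves Eve's measurement outcomes $E$, hence is not simply a function of the devices' registers) can correlate the round-$i_0$ inputs — chosen freshly and uniformly by Alice and Bob — with the devices' states, destroying the usefulness of~\eqref{eq:chain-1}. To control this I would use the chain rule to write the total mutual information between the input string $XY$ and the conditioning data (Eve's outcomes, the advice, the earlier transcript) as a sum over rounds of per-round conditional mutual informations; since the inputs are i.i.d.\ uniform and the total ``information budget'' leaked is $O(\log(1/\eps)+\alpha m)$, an averaging argument produces a round $i_0$ where $I(X_{i_0}Y_{i_0}:\text{rest})$ is $O(\log(1/\eps)/m+\alpha)$, and quantum Pinsker~\eqref{eq:pinsker} converts this into the trace-distance bound~\eqref{eq:chain-0}. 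One then has to check that the same $i_0$ can be made to simultaneously satisfy~\eqref{eq:chain-0},~\eqref{eq:chain-1},~\eqref{eq:chain-2} — handled by taking a round in the (constant-fraction) set where all three averaged quantities are good, which exists by a union bound / Markov argument. The delicate bookkeeping is that the conditioning for~\eqref{eq:chain-1}--\eqref{eq:chain-2} is on the history lying in $W$ while the conditioning for~\eqref{eq:chain-0} must be compatible with it; I would organize the argument so that $G_{i_0}$ is defined as the intersection of all three ``good history'' sets and verify it still carries most of the probability mass, absorbing every loss into the single slack parameter $\nu$.
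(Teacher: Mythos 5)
Your overall plan matches the paper's proof closely in structure: an Azuma/martingale step to pass from the empirical $\CHSH$ violation to per-round expected violations (this is exactly Claim~\ref{claim:unif}); a telescoping product of conditional per-round guessing probabilities followed by Markov's inequality to get~\eqref{eq:chain-2}; a chain-rule-plus-Pinsker argument for the near-independence condition~\eqref{eq:chain-0} (Claims~\ref{claim:prod_inputs},~\ref{claim:product} and the key technical Claim~\ref{claim:mutualinfo}); and a union/Markov argument to intersect the good sets of rounds. So the route is the right one, and you also correctly identify~\eqref{eq:chain-0} as the crux.

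There is, however, one genuine quantitative gap in how you account for the advice bits. You claim that the information budget controlling $I(X_{i_0}Y_{i_0}:\text{rest})$ is $O(\log(1/\eps)+\alpha m)$, with the $\alpha m$ term coming from conditioning on the $\alpha m$-bit advice string, and that this is both what gives $\nu$ and what produces the $12\ln(2)\alpha$ term in~\eqref{eq:chain-2}. This double-counts and misplaces the $\alpha m$ cost. If the budget for~\eqref{eq:chain-0} really were $\log(1/\eps)+\alpha m$, then after dividing by $m$ and taking a square root you would get $\nu = \Theta(\sqrt{\alpha})$ — a constant of order $\sqrt{\kappa}$ — rather than the lemma's $\nu \leq C_\nu\sqrt{\log(1/\eps)/m}$. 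That weaker bound would break the final contradiction in Theorem~\ref{thm:main-tech}, since the $78\nu$ slack would then dominate the $\approx 0.2$ margin available. In the paper the $\alpha m$ cost is charged \emph{only} to~\eqref{eq:chain-2}, where it enters directly through $\Pr(\hat\Adv=\Adv)=2^{-\alpha m}$ in the telescoping product, and \emph{not} through Pinsker. For~\eqref{eq:chain-0} the crucial observation (encoded in how Claim~\ref{claim:mutualinfo} is invoked inside Claim~\ref{claim:product}) is that the correct advice $\Adv=f_\Adv(B_\ck)A_\bl B_\bl$ is a deterministic function of the transcript, so Eve's advice-conditioned measurement can be folded into ``Bob's side'' of the bipartite picture as a single physical measurement; the conditioning event is then simply $D$ with $\Pr(D\mid\hat\Adv=\Adv)\geq\eps$, and the mutual-information budget is $\log(1/\eps)$ alone, with no extra $\alpha m$ term. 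Separately, you gloss over the point that the bounds from Claims~\ref{claim:unif} and~\ref{claim:product} hold on average over future inputs and advice guesses, whereas the lemma needs a \emph{pointwise} statement for a particular completion $(\hat x_{>i_0},\hat y_{>i_0},\hat\Adv)$; the paper's resolution is to note that the relevant quantities depend only on the post-selected state in round $i_0$, which is fixed by the history, so it suffices that some consistent completion makes $\hat\Adv=\Adv$ possible.
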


The proof of Lemma~\ref{lem:chain-rule} in given in Section~\ref{sec:chain}.

\subsubsection{The guessing lemma}

We state the last lemma required for the proof of Theorem~\ref{thm:main-tech}. A similar lemma already appeared in~\cite{VV12}. Here we give a slightly more general version of the lemma stated in a form that can be directly used in the proof of the theorem. 

\begin{lemma}[Guessing lemma]\label{lem:guessing}
Let $\delta,\nu,\eta>0$. Suppose given six bipartite states $\rho_{\boxA\boxB}^{xy}$, where $x\in\{0,1,2\}$, $y\in\{0,1\}$, such that the following hold:
\begin{enumerate}
\item If $\rho_\boxA = (1/6)\sum_{xy}\Tr_\boxB(\rho_{\boxA\boxB}^{xy})$ and $\rho_\boxB = (1/6)\sum_{xy}\Tr_\boxA(\rho_{\boxA\boxB}^{xy})$,
\beq\label{eq:guess-0}
 \frac{1}{6}\sum_{x,y}\big\| \rho_\boxA - \rho_\boxA^{xy}  \big\|_1\leq \nu\qquad\text{and}\qquad\frac{1}{6}\sum_{x,y}\big\| \rho_\boxB - \rho_\boxB^{xy}  \big\|_1\leq \nu,
\eeq
\item There exists observables $A_x = A_x^0-A_x^1$, $B_y=B_y^0-B_y^1$ on $\boxA,\boxB$ respectively that satisfy 
\begin{align*}
\frac{1}{4}&\Big(\Tr\big( (A_0 \otimes B_0) \rho_{\boxA\boxB}^{00}\big) + \Tr\big( (A_0 \otimes B_1) \rho_{\boxA\boxB}^{01}\big) \\
&\qquad+ \Tr\big( (A_1 \otimes B_0) \rho_{\boxA\boxB}^{10}\big) - \Tr\big( (A_1 \otimes B_1) \rho_{\boxA\boxB}^{11}\big) \Big)\geq \frac{\sqrt{2}}{2} - \eta,
\end{align*}
\item Bob's measurement $B_1$ produces outcome $b_1\in\{0,1\}$ with probability $1-\delta$, when performed on his share of $\rho_{\boxA\boxB}^{21}$: 
$$\Tr( (\Id\otimes B_1^{b_1}) \rho_{\boxA\boxB}^{21} ) \geq 1-\delta.$$
\end{enumerate}
Then the condition 
$$\delta \, \geq \, \Big(\frac{\sqrt{2}-1}{2}- \eta\Big) - 75\nu$$
 must hold. 
\end{lemma}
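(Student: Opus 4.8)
The plan is to set up the ``guessing game'' reduction from~\cite{VV12}: condition~2 asserts that the six states $\rho_{\boxA\boxB}^{xy}$ nearly maximally violate a CHSH-type inequality with the given observables, condition~3 says Bob's $B_1$ measurement on $\rho_{\boxA\boxB}^{21}$ is almost deterministic, and condition~1 says all six states look the same locally to Alice and to Bob. The contradiction we seek is that conditions~2 and~3, together with the near-product (no-signalling) structure of condition~1, let an ``Alice-plus-Eve'' team predict the outcome of Bob's $B_1$-measurement too well, which would beat the CHSH value $\sqrt2/2$; so $\delta$ cannot be too small. Concretely, since $B_1$ is (by condition~3) essentially a deterministic function of Bob's local state, and all the $\rho_\boxB^{xy}$ are $\nu$-close to $\rho_\boxB$ on average, the outcome $b_1$ is essentially fixed \emph{independently of $x,y$}; plugging this ``known $B_1$ outcome'' into the CHSH expression in condition~2 in place of the $B_1$ terms should force a bound on the other (Alice-only) correlators that can only be met if the $A_0,B_0$ terms are themselves large, and iterating once more via the $B_0$ observable one runs out of room.

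The key steps, in order, are: (i) Use condition~3 and the Fuchs–van de Graaf / gentle-measurement relation between trace distance and the probability bound $1-\delta$ to argue that, on $\rho_{\boxA\boxB}^{21}$, Bob's post-$B_1$ state is within $O(\sqrt\delta)$ of a fixed pure ``outcome-$b_1$'' branch; equivalently $B_1$ acts as $(-1)^{b_1}\Id$ up to error $O(\sqrt\delta)$ on the relevant support. (ii) Use condition~1 to transport this statement from $\rho_{\boxA\boxB}^{21}$ to $\rho_{\boxA\boxB}^{01}$ and $\rho_{\boxA\boxB}^{11}$ at a cost of $O(\nu)$ in trace distance on Bob's marginal (this is where the factor like $75\nu$ will accumulate, through several applications of the triangle inequality and the bound $\|A\|\le 1$ on observables). (iii) Substitute into the CHSH expression of condition~2: the two terms containing $B_1$ become $\pm (-1)^{b_1}\tfrac14(\Tr(A_0\rho_\boxA^{01}) - \Tr(A_1\rho_\boxA^{11}))$ up to additive error $O(\nu+\sqrt\delta)$, so condition~2 reduces to a statement purely about the $x\in\{0,1\}$, $y=0$ correlators plus Alice-local terms, of the form $\tfrac14(\langle A_0 B_0\rangle_{00} + \langle A_1 B_0\rangle_{10}) + (\text{Alice-only}) \ge \tfrac{\sqrt2}{2} - \eta - O(\nu+\sqrt\delta)$. (iv) Repeat the same argument one level down — but now there is no deterministic hypothesis on $B_0$, so instead bound the first two correlators by $\tfrac14\|A_0 + A_1\|\cdot(\text{something})$ via Cauchy–Schwarz/operator-norm, or more cleanly bound each $\tfrac14\langle A_x B_0\rangle \le \tfrac14$ and the Alice-only term by $\tfrac14$, to get $\tfrac{\sqrt2}{2}-\eta \le \tfrac34 + O(\nu+\sqrt\delta)$ — which is not yet a contradiction, so the real argument must be tighter. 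The honest route is a small SDP / Tsirelson-type calculation: among all quantum (indeed all no-signalling, by (ii)) strategies on these marginals in which $B_1$'s outcome is fixed to within error $\delta$, the maximum attainable value of the CHSH expression in condition~2 is $\tfrac34 + \tfrac{\delta}{2}$ (or the analogous expression $\tfrac{\sqrt2}{2} \le \ldots$), and comparing with the hypothesis $\ge \tfrac{\sqrt2}{2}-\eta$ yields $\delta \ge (\sqrt2-1)/2 - \eta - O(\nu)$ after absorbing all transport errors into the stated $75\nu$ slack.

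I expect the main obstacle to be step (ii)/(iv): making the ``transport the deterministic $B_1$-branch from state $\rho^{21}$ to states $\rho^{01},\rho^{11}$, and then bound the residual CHSH value'' argument quantitatively tight enough that the final constant is exactly $(\sqrt2-1)/2$ rather than something weaker, while keeping all the error terms linear in $\nu$ and controlled by $\delta$ (no $\sqrt\delta$ blow-up — which presumably requires using the probability bound $1-\delta$ directly in an expectation rather than passing through trace distance). The cleanest way to organize it is probably to phrase the whole thing as: define the random variable that Alice+Eve output as their guess for Bob's $B_1$-outcome (Eve contributing nothing here, just Alice via $A_0$ or $A_1$ depending on a coin, exactly as in the guessing game of~\cite{VV12}), show its success probability is at least $\tfrac{\sqrt2}{2}-\eta-O(\nu)$ by condition~2, show it is at most $\tfrac12 + \tfrac{\text{bias}}{2}$ where the bias is bounded by condition~3 plus condition~1 at cost $O(\nu)$, and read off $\delta$. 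This reuses the guessing-game machinery wholesale and localizes the only genuinely new work to the no-signalling bound coming from condition~1.
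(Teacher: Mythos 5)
Your proposal circles the right high-level idea but does not close the argument, and the step you yourself flag as the ``obstacle'' is exactly where the paper's actual content lives.

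The paper's proof is cleaner and quite different in mechanism from most of your outline. It does not manipulate quantum states or observables at all beyond one line: it defines the classical conditional distribution $p(a,b|x,y) := \Tr\big((A_x^a\otimes B_y^b)\rho_{\boxA\boxB}^{xy}\big)$, observes that condition~1 makes $p$ approximately no-signalling (to within $2\nu$ on average over uniform $(x,y)$), invokes Holenstein's rounding lemma (Lemma~9.5 of~\cite{Hol09}) to replace $p$ by an \emph{exactly} no-signalling $q$ with $\|p(\cdot,\cdot|x,y)-q(\cdot,\cdot|x,y)\|_1\leq 10\nu$ on average, and then applies the known linear-programming bound (A.11) from the supplementary material of~\cite{Pironio} relating the CHSH value of a no-signalling distribution to the maximum probability of any of Bob's outputs. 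With $I/4 = \sqrt{2}/2 - \eta - 15\nu$ and $\sum_a q(a,b_1|2,1)\geq 1-\delta - 60\nu$, (A.11) reads $1-\delta-60\nu \leq 3/2 - I/4$, which rearranges to the claimed $\delta \geq (\sqrt{2}-1)/2 - \eta - 75\nu$. There is no trace-distance $\sqrt{\delta}$ issue because the argument never leaves probability space; and there is no hands-on CHSH manipulation because the bound is outsourced to the Pironio LP.

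Two genuine gaps in your write-up. First, you correctly sense in step~(iv) that the direct substitution argument is not tight (you get a trivial $3/4$ bound) and that ``the real argument must be tighter'' via some SDP/Tsirelson-type inequality — but you never identify that inequality, and the form you hazard ($3/4 + \delta/2$) is wrong; the correct relation, after setting $\nu=0$, is $\text{CHSH} \leq 1/2 + \delta$, i.e.\ (A.11) of~\cite{Pironio}. That inequality \emph{is} the lemma; without it you have no proof. Second, you write ``(indeed all no-signalling, by (ii))'' as though condition~1 directly gives you a no-signalling distribution; it only gives an \emph{approximately} no-signalling one, and an explicit rounding step (Holenstein's lemma) is needed before any LP/Tsirelson bound for no-signalling correlations can legitimately be applied. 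Your closing paragraph, which recasts the whole thing as a guessing game with an upper and lower bound on success probability, is closer in spirit to the paper, but still lacks the quantitative bound that is the heart of the matter and does not account for the rounding.
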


\begin{proof}
For every $(a,b,x,y)\in\{0,1\}^2 \times \{0,1,2\}\times\{0,1\}$ let $p(a,b|x,y):= \Tr( (A_x^a\otimes B_y^b) \rho_{\boxA\boxB}^{xy})$. Condition~\eqref{eq:guess-0} implies that the distribution $p$ is approximately no-signalling, in the following sense: on average over the choice of a uniformly random pair $(x,y)$, the statistical distance 
\begin{align*}
\frac{1}{6}\sum_{x,y}\,\sum_{a}\,\Big| \sum_b \,p(a,b|x,y) - \frac{1}{2}\sum_{y'}\Big(\sum_b\, p(a,b|x,y') \Big)\Big| &\leq \frac{1}{6}\sum_{x,y}\,\sum_a \,\big| \Tr\big( (A_x^a \otimes \Id)(\rho_{\boxA\boxB}^{xy} - \rho_{\boxA\boxB}^{x})\big)\big|\\
&\leq \frac{1}{6}\sum_{x,y}\,\big\|\rho_{\boxA\boxB}^{xy} - \rho_{\boxA\boxB}^{x}\big\|_1\\
&\leq 2\nu,
\end{align*}
and a similar bound holds for the marginals on $\mathcal{B}$. Lemma~9.5 in~\cite{Hol09} implies that there exists a distribution $q(a,b|x,y)$ such that $q$ is (perfectly) no-signalling, and moreover, on average over $(x,y)$ the statistical distance $\|p(\cdot,\cdot|x,y)-q(\cdot,\cdot|x,y)\|_1\leq 10\nu$. In particular, the second assumption in the lemma implies that the distribution $q$ must violate the $\CHSH$ inequality by at least $\sqrt{2}/2-\eta-15\nu$, and the third assumption implies that $\sum_a q(a,1|2,1) \geq 1-\delta - 60\nu$. Applying the bound~(A.11) derived in the supplementary information to~\cite{Pironio} with $I/4 = \sqrt{2}/2 - \eta-15\nu$ we obtain the inequality claimed in the lemma. 
\end{proof}

\subsubsection{Proof of Theorem~\ref{thm:main-tech}}

We give the proof of Theorem~\ref{thm:main-tech}, assuming the lemmas stated in the three previous subsections.

\begin{proof}[Proof of Theorem~\ref{thm:main-tech}]
Let $(X,Y,A,B)$ be random variables describing Alice and Bob's choice of inputs to $\boxA$ and $\boxB$ respectively, and the outputs obtained, in an execution of Protocol~A. Let $E = E(\hat{\Adv})$ be the random variable that describes the outcome of the measurement on $\mathcal{E}$ described in Lemma~\ref{lem:strong-adv}, when the advice bits $\hat{\Adv}$ are selected uniformly at random (independently from $A$ and $B$). Denote by $\Adv = f_\Adv(B_\ck)A_\bl B_\bl$ the ``correct'' advice bits. 

The proof proceeds by contradiction. Assume that there existed a pair of devices $(\boxA,\boxB)$ such that 
\beq\label{eq:assumption}
\Pr\big(\CHSH_{\boxA \boxB}(\bl,\eta)\big)\,\geq\,\eps,\quad \Hmin^\eps(B_{\ck}|XYA_\bl B_\bl \mathcal{E}) \,<\, \kappa |\ck|,
\eeq 
where $\eps,\eta,\kappa$ are as in the statement of the theorem. 
Denote $\GUESS_{\boxB\boxE}(\hat{\Adv})$ the event that $E = B_\ck$. Using Lemma~\ref{lem:strong-adv}, we deduce from~\eqref{eq:assumption} that the following must hold: 
\begin{align}
\Pr\big(& \CHSH_{\boxA\boxB}(\bl,\eta)\wedge \GUESS_{\boxB\boxE}(\hat{\Adv}) | \hat{\Adv}=\Adv\big) \notag\\
&= \Pr\big( \GUESS_{\boxB\boxE}(\hat{\Adv}) | \CHSH_{\boxA\boxB}(\bl,\eta),\hat{\Adv}=\Adv\big)\notag\\
&\qquad\qquad\cdot \Pr\big(\CHSH_{\boxA\boxB}(\bl,\eta) | \hat{\Adv}=\Adv\big)\notag\\
&\geq\, C_E(\eps/m)^6 \cdot \eps,\label{eq:mainpf-1}
\end{align}
where $C_E$ is the constant from Lemma~\ref{lem:strong-adv}. Since the rounds $\bl$ are chosen uniformly at random, Claim~\ref{claim:high-chsh} below states that, for any $0\leq\beta\leq 1$:
\beq\label{eq:mainpf-2}
\Pr\big(\CHSH_{\boxA\boxB}((1+\beta)\eta)|\CHSH_{\boxA\boxB}(\bl,\eta)\big) \,\geq\, 1-e^{-2\beta^2\eta^2\gamma m},
\eeq
where $\gamma = |\bl|/m$. Choose $\beta = 1/3$, and let $\eta':=4\eta/3$. Provided $C_\gamma$ is chosen large enough, the choice of $\gamma$ made in the theorem is such that $\gamma \geq \log(2m^6/C_E\eps^7)/((2/9)\eta^2 m)$, so that $e^{-2\beta^2\eta^2\gamma m} \leq C_E\eps^7/(2m^6)$. Hence we obtain the following by combining~\eqref{eq:mainpf-1} and~\eqref{eq:mainpf-2}:
\begin{align}
\Pr\big(\CHSH_{\boxA\boxB}(\eta') \wedge \GUESS_{\boxB\boxE}(\hat{\Adv})|\,\hat{\Adv}=\Adv \big) 
\,\geq\, C_E(\eps^7/(2m^6))\, =:\,\eps'.\label{eq:assumption-2}
\end{align}
We may now apply Lemma~\ref{lem:chain-rule}. Let $\nu = C_\nu \sqrt{\log(1/\eps')/m}$, and $i_0\in [m]$ be the ``good'' round that is promised by the lemma. We proceed to show that the existence of such a round leads to a contradiction by appealing to the guessing lemma, Lemma~\ref{lem:guessing}.

Consider the following setup. Alice, Bob and Eve prepare their devices by selecting a random string of inputs $\hat{x},\hat{y}$ for Eve, except that $\hat{x}_{i_0}=2$ and $\hat{y}_{i_0} = 1$ always. Eve guesses the advice bits $\hat{\Adv}$ at random and makes a prediction $E=e$. Alice and Bob then use their devices up to round $i_0-1$ by choosing inputs $(x_{<i_0},y_{<i_0}) = (\hat{x}_{<i_0},\hat{y}_{<i_0})$. They verify that the resulting outputs $a_{<i_0},b_{<i_0}$ are such that 
$$(x_{<i_0},y_{<i_0},a_{<i_0},b_{<i_0},e_{<i_0})\in G_{i_0};$$
 if not they abort. Upon having succeeded in this conditioning they separate and play the guessing game. Alice holds system $\boxA$, while Bob holds system $\boxB$.

Lemma~\ref{lem:chain-rule} shows that all conditions in Lemma~\ref{lem:guessing} are satisfied: as a result, it must be that
$$ 12\ln(2)\alpha+\nu \geq \Big(\frac{\sqrt{2}-1}{2} - 6\eta'- 2\nu\Big) - 75\nu.$$
By definition, provided the constant $C_\nu$ is large enough we have $\alpha \leq \kappa/6 + 2 \gamma + \nu$, where we used that $|\ck|\leq m/6+10\sqrt{m}=m/6+O(\sqrt{\ln(1/\eps)})$, as enforced in the protocol, and $\eta'= 4/3\eta$. 
 Re-arranging terms and using the definition of $\nu$ and $\gamma$ we obtain the condition
$$ \kappa \,>\, \frac{\sqrt{2}-1}{4\ln(2)} - \frac{4}{\ln(2)} \eta - O\Big(\frac{\log(1/\eps)}{\eta^2 m}\Big),$$
which, given the choice of $\kappa$ made in the theorem, is a contradiction provided $C_\eps$ is chosen small enough. 
\end{proof}

\begin{claim}\label{claim:high-chsh}
Let $\eta,\gamma>0$. The following holds for any $0\leq\beta\leq 1$:
$$\Pr_{S}\big(\CHSH((1+\beta)\eta)|\CHSH(S,\eta)\big) \,\geq\, 1-e^{-2\beta^2\eta^2\gamma m},$$
where the probability is taken over the choice of a random subset $S\subseteq [m]$ of size $|S| = \gamma m$.
\end{claim}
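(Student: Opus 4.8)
The plan is to prove Claim~\ref{claim:high-chsh} by a direct concentration argument on sampling without replacement. Write $Z_i \in \{0,1\}$ for the indicator that the $\CHSH$ condition fails in round $i$, so that $\CHSH(S,\eta) = \{ (1/|S|)\sum_{i\in S} Z_i \le (1-\opt)+\eta \}$ and $\CHSH((1+\beta)\eta) = \{ (1/m)\sum_{i\in[m]} Z_i \le (1-\opt)+(1+\beta)\eta \}$. The key observation is that the $Z_i$ are \emph{fixed} numbers once the transcript $(X,Y,A,B)$ of Protocol~B is fixed; the only randomness in the claim is the choice of the subset $S$, which is a uniformly random size-$\gamma m$ subset of $[m]$ chosen independently of the devices' behaviour. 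Thus, conditioning on any fixed value of the global failure count $\bar Z := (1/m)\sum_i Z_i$, the quantity $(1/|S|)\sum_{i\in S} Z_i$ is the empirical mean of a sample of size $\gamma m$ drawn without replacement from a population of $m$ values in $\{0,1\}$ with mean $\bar Z$.

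First I would dispose of the easy case: if $\bar Z \le (1-\opt)+(1+\beta)\eta$ then $\CHSH((1+\beta)\eta)$ holds with certainty, so the conditional probability in the claim is $1$ and there is nothing to prove. So assume $\bar Z > (1-\opt)+(1+\beta)\eta$. In that case I want to upper bound $\Pr_S(\CHSH(S,\eta))$, i.e. the probability that the sample mean $(1/|S|)\sum_{i\in S}Z_i$ falls below $\bar Z - \beta\eta$ (since $(1-\opt)+\eta = \bar Z - (\bar Z - (1-\opt) - \eta) < \bar Z - \beta\eta$ when $\bar Z > (1-\opt)+(1+\beta)\eta$). This is exactly a lower-tail deviation of at least $\beta\eta$ below the population mean for sampling without replacement. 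By Hoeffding's inequality for sampling without replacement (Hoeffding's 1963 reduction theorem shows the without-replacement sample mean is at least as concentrated as the with-replacement one), this probability is at most $e^{-2(\beta\eta)^2 |S|} = e^{-2\beta^2\eta^2 \gamma m}$. Since this bound holds for every fixed value of $\bar Z$ in the nontrivial range, and trivially in the other range, averaging over $\bar Z$ (equivalently, over the device transcript) gives the stated bound on the conditional probability.

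The only mild subtlety — and the step I would be most careful about — is the logical structure of the conditioning. The claim conditions on the event $\CHSH(S,\eta)$, which involves $S$, and asserts a lower bound on the probability of $\CHSH((1+\beta)\eta)$. The clean way to see it: for any fixed transcript, $\Pr_S(\neg\CHSH((1+\beta)\eta) \wedge \CHSH(S,\eta)) \le \Pr_S(\neg\CHSH((1+\beta)\eta)) \cdot \Pr_S(\CHSH(S,\eta)\mid \neg\CHSH((1+\beta)\eta))$, and on the event $\neg\CHSH((1+\beta)\eta)$ we have $\bar Z > (1-\opt)+(1+\beta)\eta$, so the without-replacement tail bound above applies and gives $\Pr_S(\CHSH(S,\eta) \mid \neg\CHSH((1+\beta)\eta)) \le e^{-2\beta^2\eta^2\gamma m}$; hence $\Pr_S(\neg\CHSH((1+\beta)\eta) \mid \CHSH(S,\eta)) \le e^{-2\beta^2\eta^2\gamma m}$, which is the claim. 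Everything is elementary once the right concentration inequality (Hoeffding without replacement, applied to the deterministic population $\{Z_i\}$) is invoked; no quantum input is needed here at all since the $Z_i$ are already fixed classical bits.
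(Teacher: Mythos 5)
Your argument follows the paper's route: once the transcript is fixed, the indicators $Z_i$ are determined, and a uniformly random size-$\gamma m$ subset $S$ is a sample without replacement from a $\{0,1\}$-valued population, so a Hoeffding-type tail bound at deviation $\beta\eta$ over $|S|=\gamma m$ draws gives $e^{-2\beta^2\eta^2\gamma m}$. The paper simply invokes ``a standard Chernoff bound''; you are more careful and correctly note that the sampling is without replacement, citing Hoeffding's 1963 reduction theorem (or equivalently Serfling), which is exactly the right tool here.

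One caveat about your final ``hence.'' Writing $A=\neg\CHSH((1+\beta)\eta)$ and $B=\CHSH(S,\eta)$, your chain gives $\Pr(A\wedge B)\le \Pr(A)\,e^{-c}$ with $c=2\beta^2\eta^2\gamma m$, but the further passage to $\Pr(A\mid B)\le e^{-c}$ requires $\Pr(A)\le\Pr(B)$, which need not hold: if the devices deterministically fail in a fraction of rounds just above $(1-\opt)+(1+\beta)\eta$, then $\Pr(A)=1$, $\Pr(B)$ is itself of order $e^{-c}$, and $\Pr(A\mid B)=1$. This is not really a defect you introduced --- the paper's own proof stops at the per-transcript Chernoff estimate and does not spell out the last step, and the claim as literally written appears to be a slight over-statement. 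What your argument (and the paper's) genuinely establishes is the joint bound $\Pr\big(\CHSH(S,\eta)\wedge\neg\CHSH((1+\beta)\eta)\big)\le e^{-2\beta^2\eta^2\gamma m}$, and that is the estimate actually used where the claim is invoked inside the proof of Theorem~\ref{thm:main-tech}. So the substantive concentration argument is right; just be aware that the conditional form of the conclusion does not follow without an additional lower bound on $\Pr(B)$.
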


\begin{proof}
Consider a given run of the protocol. Suppose that the fraction of rounds in which the $\CHSH$ condition is not satisfied is at least $(1-\opt)+(1+\beta)\eta$. By a standard Chernoff bound, a randomly chosen set $S\subseteq [m]$ will of size $\gamma m$ will have at least $((1-\opt)+\eta)\gamma m$ of its rounds with inputs corresponding to the $\CHSH$ condition being violated, except with probability at most  $e^{-2\beta^2\eta^2 \gamma m}$.
\end{proof}

\section{Proof of Lemma~\ref{lem:chain-rule}}\label{sec:chain}

This section is devoted to the proof of Lemma~\ref{lem:chain-rule}.  Let $D$ be the event $\CHSH_{\boxA\boxB}(\eta) \wedge \GUESS_{\boxB\boxE}$: the main assumption of the lemma states that $\Pr(D|\Adv=\hat{\Adv})\geq \eps$. We first prove two preliminary claims which establish that, provided $\eps$ is not too small, conditioning on $D$ does not affect either the distribution of inputs $(X_i,Y_i)$ or the reduced density matrices of the inner state of each device's system in most rounds $i$ by too much.

\begin{claim}\label{claim:prod_inputs}
Suppose that, in Protocol~B, Alice and Bob choose inputs $(X,Y)\in\{0,1,2\}^m \times \{0,1\}^m$ uniformly at random, obtaining outcomes $A,B\in\{0,1\}^m$. Suppose that $\mathcal{E}$ is measured using Eve's guessing measurement (as described in Lemma~\ref{lem:strong-adv}) with inputs $(\hat{X},\hat{Y})=(X,Y)$ and advice bits $\hat{\Adv}=\Adv$, resulting in an outcome $E \in \{0,1\}^{|\ck|}$. Let $P_{X_iY_i}$ be the marginal distribution of the inputs in the $i$-th round, conditioned on $(X_{<i},Y_{<i},A_{<i},B_{<i},E_{<i}) = (x_{<i},y_{<i},a_{<i},b_{<i},e_{<i})\in D_{<i}$, the projection of $D$ on the first $(i-1)$ coordinates. Then the following bound holds on expectation over $(x_{<i},y_{<i},a_{<i},b_{<i},e_{<i})$:
$$ \frac{1}{m} \sum_i \big\|P_{X_iY_i} - U_{3\times 2} \big\|_1 \,\leq\,  \sqrt{\frac{\log(1/\eps)}{2m}},$$
where $U_{3\times 2}$ is the uniform distribution on $\{0,1,2\}\times \{0,1\}$. 
\end{claim}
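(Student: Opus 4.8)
\textbf{Proof proposal for Claim~\ref{claim:prod_inputs}.}

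The plan is to exploit the fact that the inputs $(X,Y)$ are chosen uniformly at random and \emph{independently} of the devices and of $\mathcal{E}$ in Protocol~B, so the only reason conditioning on $D$ can distort the marginal of $(X_i,Y_i)$ is that $D$ is a low-probability event. Concretely, I would measure the total distortion by the mutual information between the inputs and everything $D$ is a function of, and then convert that information bound into an $\ell_1$ bound round-by-round via Pinsker's inequality~\eqref{eq:pinsker}.

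First, let $W := (A_\bl, B_\bl, E)$ (or more precisely the classical transcript on which the event $D$ depends together with the advice-measurement outcome $E$), viewed jointly with $(X_{<i},Y_{<i},A_{<i},B_{<i},E_{<i})$ for the conditioning. Since a priori $(X,Y)$ is uniform and product across rounds and is independent of the initial state $\rho_{\boxA_1\boxB_1\mathcal{E}}$, the joint distribution of all random variables factorizes as (uniform inputs) $\times$ (devices' and Eve's responses, which are functions of the inputs and the quantum measurements). The key inequality I want is a chain-rule bound:
$$\sum_{i=1}^m I\big(X_iY_i : W \,\big|\, X_{<i}Y_{<i}A_{<i}B_{<i}E_{<i}\big) \;=\; I\big(XY : W \,\big|\, \text{(past)}\big) \;\leq\; H(XY) - H(XY \,|\, \text{transcript}),$$
but the cleanest route is simply: the total information that the event $D$ (an event of probability $\geq \eps$) reveals about $(X,Y)$ is at most $\log(1/\eps)$. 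Formally, conditioning a uniform random variable on an event of probability $p$ changes its distribution by an amount controlled by $D_{\mathrm{KL}}\!\left(P_{XY|D}\,\|\,P_{XY}\right) \leq \log(1/\eps)$, and more refined, the \emph{average} over the prefix of the per-round relative entropies telescopes:
$$\Exs{(x,y,a,b,e)_{<i}\sim D}{\;\sum_{i=1}^m D_{\mathrm{KL}}\!\big(P_{X_iY_i\mid (x,y,a,b,e)_{<i},\,D}\,\big\|\,U_{3\times 2}\big)} \;\leq\; \log(1/\eps),$$
because the left side equals $D_{\mathrm{KL}}(P_{XY\mid D}\,\|\,U_{XY}) \leq \log(1/\eps)$ by the chain rule for relative entropy (using that $U_{XY}$ is product across rounds and that the outputs $A,B,E$ are deterministic functions of inputs and the fixed quantum instruments, so no extra entropy is created). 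Then apply classical Pinsker, $\|P - Q\|_1^2 \leq (2\ln 2)\,D_{\mathrm{KL}}(P\|Q)$, to each round, followed by Cauchy--Schwarz over the $m$ rounds:
$$\frac{1}{m}\sum_i \big\|P_{X_iY_i} - U_{3\times 2}\big\|_1 \;\leq\; \frac{1}{m}\sum_i \sqrt{(2\ln 2)\,D_{\mathrm{KL},i}} \;\leq\; \sqrt{(2\ln 2)\cdot\frac{1}{m}\sum_i D_{\mathrm{KL},i}} \cdot \frac{1}{\sqrt{m}}\cdot\sqrt{m},$$
which after plugging the bound $\sum_i \Ex{D_{\mathrm{KL},i}} \leq \log(1/\eps) = \ln(1/\eps)/\ln 2$ and taking the expectation inside (Jensen) gives exactly $\sqrt{\log(1/\eps)/(2m)}$ up to the constant bookkeeping — note $(2\ln 2)\cdot(\log(1/\eps)/m) = 2\ln(1/\eps)/m$, so one actually wants to be slightly more careful and observe the stated bound $\sqrt{\log(1/\eps)/(2m)}$ follows with the standard constant once one keeps $\log$ (base 2) consistently and uses $\|P-Q\|_1^2 \le \tfrac12 D_{\mathrm{KL}}$-type normalization appropriate to the distance convention in the paper.

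The main obstacle — and the point that needs care — is justifying that conditioning on $D_{<i}$ does not secretly inject information about $X_i,Y_i$ through the \emph{quantum} side: the outcome $E$ is produced by a measurement on $\mathcal{E}$, and $\rho_{\boxA_i\boxB_i}$ itself can depend on that measurement once an outcome is fixed. The resolution is that we are conditioning only on \emph{classical} data $(X_{<i},Y_{<i},A_{<i},B_{<i},E_{<i})$, and relative to the fixed (adversarially chosen but fixed) initial state and fixed instruments, the entire random process $i \mapsto (X_i,Y_i,A_i,B_i)$ together with Eve's measurement producing $E$ is a fixed channel applied to the uniform input string $(X,Y)$ — so the joint law is $U_{XY} \otimes (\text{channel})$ and the chain-rule-for-relative-entropy computation goes through verbatim, with the expectation over $D_{<i}$ exactly absorbing the conditioning. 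I would state this carefully as: $D$ is an event in the sigma-algebra generated by $(X,Y)$ and the measurement outcomes, the latter being jointly distributed as a fixed stochastic map of $(X,Y)$, hence $D_{\mathrm{KL}}(P_{XY|D}\|U_{XY}) \le \log(1/\Pr(D)) \le \log(1/\eps)$, and then the per-round decomposition is the standard chain rule.
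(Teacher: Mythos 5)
Your proposal follows essentially the same route as the paper: the paper's proof is the "entropy deficit" version of your relative-entropy computation — it observes $H(X,Y|D)\ge\log(6)m-\log(1/\eps)$, applies the chain rule for Shannon entropy to decompose across rounds, and finishes with Pinsker plus Jensen. Since $D_{\mathrm{KL}}(P_{XY|D}\|U_{XY})=\log(6)m-H(X,Y|D)$, your KL-chain-rule decomposition is the same calculation written dually, and the Pinsker-plus-Cauchy--Schwarz/Jensen finish is identical.

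Where the two differ is that you explicitly flag the delicate point — namely that $E$ is produced by a single terminal measurement on $\mathcal{E}$ that takes the \emph{entire} string $(X,Y)$ and the advice $\Adv$ as input, so a priori $E_{<i}$ need not be independent of $(X_i,Y_i)$ given $(X_{<i},Y_{<i},A_{<i},B_{<i})$. The paper's one-line chain rule ``$\sum_i H(X_i,Y_i\mid X_{<i},Y_{<i},D_{<i})$'' does not visibly engage with this. However, the resolution you offer does not actually close the issue: if you chain $D_{\mathrm{KL}}(P_{XY\mid D}\,\|\,U_{XY})$, the per-round terms condition only on $(x,y)_{<i}$, whereas the claim conditions on $(x,y,a,b,e)_{<i}$; and if you instead chain $D_{\mathrm{KL}}(P_{XYABE\mid D}\,\|\,P_{XYABE})$ over the augmented alphabet, the per-round reference distribution is $P_{X_iY_i\mid(x,y,a,b,e)_{<i}}$, which your "fixed channel applied to uniform inputs" remark does \emph{not} establish to be $U_{3\times2}$ precisely because $E_{<i}$ can depend on the future inputs. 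So the last paragraph of your proof gestures at the right worry but settles it by assertion. To your credit, this is the same point the paper's proof leaves implicit (and which is treated more carefully only in the quantum analogue, Claim~\ref{claim:mutualinfo}), so you are at the paper's level of rigor here; but the phrase "the chain-rule-for-relative-entropy computation goes through verbatim" would need a real argument that $I(X_iY_i:A_{<i}B_{<i}E_{<i}\mid X_{<i}Y_{<i})$ is negligible, not just the observation that the forward map $(X,Y)\mapsto(A,B,E)$ is a fixed channel.
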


\begin{proof}
The Shannon entropy $H(X,Y) = \log(6)m$, and conditioned on $D$, $H(X,Y|D) \geq \log(6)\,m - \log(1/\eps)$. Applying the chain rule,
$$ \frac{1}{m} \sum_i H(X_i,Y_i | X_{<i},Y_{<i},D_{<i})\,\geq\, \log(6) - \frac{\log(1/\eps)}{m}.$$
Using the classical Pinsker's inequality as $\|P_{X_iY_i}-U_{3\times 2}\|_1 \leq \sqrt{(\log(6)-H(X_i,Y_i))/2}$ and Jensen's inequality we get
$$
\frac{1}{m}\sum_i \big\|P_{X_iY_i}-U_{3\times 2}\big\|_1\,\leq\, \sqrt{  \frac{\log(1/\eps)}{2m}},
$$
proving the claim. 
\end{proof}

The fact that $D$ depends both on the choice of inputs $(X,Y)$ and on the adversary's measurement outcome implies that conditioning on $D$ could not only bias the distribution of $(X,Y)$ but also introduce correlations between $(X,Y)$ and the reduced state $\rho_{\mathcal{A}\mathcal{B}}$ of the devices. The following claim shows that, if $D$ is an event with large enough probability, the correlations introduced by this conditioning do not affect the reduced state on either $\mathcal{A}$ or $\mathcal{B}$ by too much, for most rounds $i$. 

\begin{claim}\label{claim:product}
Consider the same situation as described in Claim~\ref{claim:prod_inputs}. Let $\rho_{\boxA_i X_iY_i}$ denote the reduced density of the joint state of systems $\boxA$ (in round $i$) and $X_i,Y_i$, conditioned on $(X_{<i},Y_{<i},A_{<i},B_{<i},E_{<i}) = (x_{<i},y_{<i},a_{<i},b_{<i},e_{<i})\in D_{<i}$. Then the following holds on expectation over $(x_{<i},y_{<i},a_{<i},b_{<i},e_{<i})$:
\beq\label{eq:product-0}
 \frac{1}{m} \sum_i \,\Big\| \rho_{\boxA_i X_iY_i} - \rho_{\boxA_i}\otimes \Big(\frac{1}{6}\sum_{x,y} \ket{x,y}\bra{x,y}\Big) \Big\|_1 \,\leq\, 4 \sqrt{\log(1/\eps)/m}.
\eeq
Moreover, the same bound holds when $\boxA_i$ is replaced by $\boxB_i$. 
\end{claim}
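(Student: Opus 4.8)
The plan is to derive \eqref{eq:product-0} from the analogous statement about inputs proved in Claim~\ref{claim:prod_inputs}, together with an entropic argument on the device systems. First I would set up the relevant classical-quantum state: conditioned on a fixed history $(x_{<i},y_{<i},a_{<i},b_{<i},e_{<i})\in D_{<i}$, consider the state $\rho_{\boxA_i X_i Y_i}$, where $X_i,Y_i$ are the (classical) inputs in round $i$ and $\boxA_i$ is the reduced state of Alice's device just before the $i$-th measurement. The key observation is that the only way $(X_i,Y_i)$ can be correlated with $\boxA_i$, given the history, is \emph{through the conditioning on $D$}: a priori the inputs are chosen uniformly and independently of everything. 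So the mutual information $I(X_iY_i:\boxA_i)$ in this conditional state is controlled by how much information the event $D$ carries, which is at most $\log(1/\eps)$ total, spread across all $m$ rounds.

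Concretely, I would run the same chain-rule bookkeeping as in Claim~\ref{claim:prod_inputs}, but now tracking the device system as well. Consider the global state on $X Y A B E \boxA_{\bullet} \boxB_{\bullet}$ conditioned on $D$, and write $I(X_iY_i : \boxA_i \mid X_{<i}Y_{<i}A_{<i}B_{<i}E_{<i}, D_{<i})$ for the round-$i$ conditional mutual information. Summing over $i$ and using the chain rule for mutual information (together with the fact that $H(XY|D)\geq \log(6)m - \log(1/\eps)$ as in the previous claim, and that conditioning on $D_{<i}$ and the past outcomes only costs us the same $\log(1/\eps)$ budget), one gets $\frac{1}{m}\sum_i \textrm{E}\big[ I(X_iY_i:\boxA_i \mid \text{past}) \big] = O(\log(1/\eps)/m)$. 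Then I apply the quantum Pinsker inequality~\eqref{eq:pinsker} round-by-round: for each fixed history,
$$\big\| \rho_{\boxA_i X_iY_i} - \rho_{\boxA_i}\otimes \rho_{X_iY_i} \big\|_1^2 \,\leq\, (2\ln 2)\, I(X_iY_i:\boxA_i),$$
and I combine this with the triangle inequality and Claim~\ref{claim:prod_inputs} to replace $\rho_{X_iY_i}$ by the uniform distribution $U_{3\times 2}$ at an additional cost of $\frac{1}{m}\sum_i \|\rho_{X_iY_i}-U_{3\times2}\|_1 \le \sqrt{\log(1/\eps)/(2m)}$. A final application of Jensen's inequality (to pull the expectation over histories and the average over $i$ inside the square root) yields a bound of the form $c\sqrt{\log(1/\eps)/m}$, and keeping track of the constants gives the stated factor of $4$. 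The argument for $\boxB_i$ is identical by symmetry, since nothing in it used the distinction between Alice's and Bob's devices.

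The main obstacle I anticipate is justifying cleanly that the mutual-information budget is genuinely only $\log(1/\eps)$ even though the event $D = \CHSH_{\boxA\boxB}(\eta)\wedge \GUESS_{\boxB\boxE}$ involves Eve's measurement outcome $E$ and hence, a priori, a lot of side information. The point to nail down is that $\boxA_i$ is the \emph{post-measurement} reduced state of Alice's device \emph{after} fixing the past outcomes $a_{<i}, b_{<i}, e_{<i}$ — including $e_{<i}$ — so the correlation that remains between $(X_i,Y_i)$ and $\boxA_i$ is purely an artifact of restricting to the $D_{<i}$ slice of an otherwise product distribution on the round-$i$ inputs. One must be careful that the chain rule is applied to the right sequence of registers and that conditioning on $D_{<i}$ (a low-probability event, losing $\log(1/\eps)$) rather than on the full $D$ does not break the telescoping; this is exactly parallel to the corresponding subtlety already handled in Claim~\ref{claim:prod_inputs}, so the same accounting carries over. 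Everything else — Pinsker, triangle inequality, Jensen — is routine.
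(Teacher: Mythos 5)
Your high-level skeleton is the right one and matches the paper's: establish an average mutual-information bound $\frac{1}{m}\sum_i I(\boxA_i:X_iY_i\mid D_{<i})\lesssim \log(1/\eps)/m$, apply quantum Pinsker round-by-round, invoke Claim~\ref{claim:prod_inputs} to replace the conditional marginal of $(X_i,Y_i)$ by uniform, and then use Jensen. That part is fine.

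The gap is in the step you flagged as the main obstacle and then waved past. You write that the mutual-information budget is controlled by ``chain-rule bookkeeping'' parallel to Claim~\ref{claim:prod_inputs}, i.e.\ by $H(XY\mid D)\geq \log(6)m - \log(1/\eps)$. That reasoning is valid for Claim~\ref{claim:prod_inputs} precisely because there everything is classical: the $X_i,Y_i$ are jointly a product of classical registers, so the Shannon chain rule telescopes and Pinsker on the classical marginals closes it. Here, however, $\boxA_i$ is a \emph{quantum} register, and crucially the $\boxA_i$'s are not tensor factors of some common ambient system: each $\boxA_i$ is the \emph{same} physical register at a different point in Alice's sequential measurement history, and the measurement she applies in round $j<i$ depends on $X_j$ and determines the post-measurement state $\rho_{\boxA_i}$. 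The classical entropy deficit $\log(1/\eps)$ in $XY$ does not, by any generic chain rule, translate into a bound on $\sum_i I(\boxA_i:X_iY_i\mid D_{<i})$ --- there is no joint state on $\boxA_1\otimes\cdots\otimes\boxA_m\otimes XY$ to apply the chain rule to. The paper isolates exactly this as Claim~\ref{claim:mutualinfo}, and proves it by an operational (channel-coding, HSW-style) argument: if the sum of round-wise mutual informations exceeded $\log(1/\eps)$, Bob could transmit more classical information to Alice than his communication budget allows. Without Claim~\ref{claim:mutualinfo} or an equivalent, your derivation of the mutual-information bound does not go through, and this is the substantive content of the claim.

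Once that lemma is granted, your remaining steps (Pinsker, triangle inequality with Claim~\ref{claim:prod_inputs}, Jensen, symmetry for $\boxB_i$) are exactly what the paper does and are routine, so the rest of the write-up would be correct.
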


\begin{proof}
We use Claim~\ref{claim:mutualinfo}. Alice's sequential measurements are taken to be the ones performed on $\boxA$, while Bob's measurement is the combination of the measurements on $\boxB$, together with Eve's measurement, on inputs $X,Y$ and advice bits $\hat{\Adv}=\Adv$ obtained from $B$. We set $\mathbf{X}$ in the claim to be $XY$ here, and the outcomes $\mathbf{B}$ in the claim to $BE$ here. Together with the assumption $\Pr(D|\hat{\Adv}=\Adv)\geq \eps$, the claim shows that
$$\frac{1}{m} \sum_i \,I\big(\boxA_i;X_iY_i|D_{<i}\big)_{\rho_{\boxA_i X_iY_i}} \,\leq\, \frac{\log(1/\eps)}{m}.$$
Using Pinsker's inequality~\eqref{eq:pinsker} together with Jensen's inequality,  
$$\frac{1}{m}\sum_i \Big\|\rho_{\boxA_i X_i Y_i} -  \rho_{\boxA_i}\otimes \Big(\frac{1}{6}\sum_{xy} \ket{x,y}\bra{x,y}\Big)  \Big\|_1 \,\leq\, 4\sqrt{\log(1/\eps)/m},$$ 
where we used Claim~\ref{claim:prod_inputs} to show that the marginal distribution of $(X_i,Y_i)$ is close to uniform on $\{0,1,2\}\times\{0,1\}$, even conditioned on $D_{<i}$. 
\end{proof}

The following claim replaces the event that the $\CHSH$ condition is satisfied in a large fraction of rounds by the event that their exists many rounds in which the $\CHSH$ condition is \emph{likely} to be satisfied (when evaluated on the state of the devices in that round). 

\begin{claim}\label{claim:unif}
There exists a set $T\subseteq [m]$ such that $|T|\geq 2m/3$, and a subset $D'\subseteq D$ such that $\Pr(D'|D)\geq 1/2$ and for every $i\in T$, conditioned on $\hat{\Adv}=\Adv$ and on inputs and outputs to the devices in rounds prior to $i$ being in $D'$, the condition $\VIOL_{\boxA\boxB}(i)\leq 3\eta + 6\sqrt{\ln(1/\eps)/m}$ holds. 
\end{claim}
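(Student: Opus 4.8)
\textbf{Proof plan for Claim~\ref{claim:unif}.}

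The goal is to pass from the \emph{global} event $\CHSH_{\boxA\boxB}(\eta)$ (that the $\CHSH$ condition is satisfied in a fraction at least $\opt-\eta$ of \emph{all} rounds) to a \emph{local} statement: in most rounds $i$, conditioned on the history lying in a good subset of $D$, the expected single-round violation $\VIOL_{\boxA\boxB}(i)$ is small. The first step is to unfold the definition. Since $\CHSH_{\boxA\boxB}(\eta)$ holds on $D$, we have $(1/m)\sum_i Z_i \leq (1-\opt)+\eta$ on $D$, hence $\textrm{E}\big[(1/m)\sum_i Z_i \,\big|\, D\big] \leq (1-\opt)+\eta$, i.e. $(1/m)\sum_i \textrm{E}[Z_i | D] \leq (1-\opt)+\eta$. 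The quantity I want to control, however, is $\VIOL_{\boxA\boxB}(i) = \textrm{E}[Z_i]-(1-\opt)$ evaluated on the round-$i$ state conditioned on the \emph{history} $D_{<i}$ (inputs/outputs/Eve-outcomes in rounds $<i$), with inputs in round $i$ fixed to be \emph{uniform}; this is a subtler object because (a) the history conditioning for $Z_i$ under $D$ includes information about rounds $\geq i$, whereas $\VIOL(i)$ conditions only on $D_{<i}$, and (b) $\VIOL(i)$ wants uniform inputs in round $i$, whereas the conditional distribution of $(X_i,Y_i)$ given $D_{<i}$ is only \emph{close} to uniform.

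To handle (a), I would average $Z_i$ in a martingale-friendly way: write $\textrm{E}[Z_i | D] = \textrm{E}\big[\textrm{E}[Z_i | D_{\leq i}] \,\big|\, D\big]$ and note that $\textrm{E}[Z_i | D_{\leq i}=h_{\leq i}]$, for a fixed history through round $i$ (inputs $x_i,y_i$ and outputs $a_i,b_i$ included), equals $\mathbf{1}[Z_i=1]$ — so this is not directly the right move. Instead, the cleaner route is to compare $\textrm{E}[Z_i|D_{<i}]$ (uniform inputs in round $i$, measured on the round-$i$ state conditioned on $D_{<i}$) with $\textrm{E}[Z_i|D]$. The point is that the extra conditioning in $D$ beyond $D_{<i}$ — namely on later-round data and on $Z_i$'s own round satisfying the count — can only be absorbed at a multiplicative cost of $1/\Pr(D_{\geq i}\text{-part}|D_{<i})$. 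Since $\Pr(D|\Adv=\hat\Adv)\geq\eps$, a Markov/averaging argument shows that for at least, say, a $2/3$ fraction of rounds $i$ (collected into the set $T$), and after discarding a further subset of $D$ of relative measure at most $1/2$ (giving $D'\subseteq D$ with $\Pr(D'|D)\geq 1/2$, e.g. via a union bound balancing the bad rounds), the conditional expectation of $Z_i$ over histories in $D'_{<i}$, with uniform round-$i$ inputs, is at most $(1-\opt)+3\eta$ — the factor $3$ coming from the combined loss of a constant factor in $\Pr(D)$ (contributing an additive $\sim (1/m)\log(1/\eps)$ once we only ask the bound to hold for $2/3$ of rounds and for $1/2$ of the measure of $D$, which is where the $6\sqrt{\ln(1/\eps)/m}$ slack also originates) and a factor $2$-type inefficiency from the Markov step over rounds.

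For (b), once inputs in round $i$ are only $\nu_0$-close to uniform (with $\nu_0 \leq \sqrt{\log(1/\eps)/(2m)}$ by Claim~\ref{claim:prod_inputs}, averaged over $i$, and hence $\leq$ a small constant multiple of that for most $i$ after another Markov step folded into the choice of $T$), the difference between $\textrm{E}[Z_i]$ under the true conditional input distribution and under the uniform distribution is at most $\|P_{X_iY_i}-U_{3\times 2}\|_1$, which contributes at most an additive $O(\sqrt{\log(1/\eps)/m})$; absorbing this into the $6\sqrt{\ln(1/\eps)/m}$ term finishes the bound $\VIOL_{\boxA\boxB}(i)\leq 3\eta+6\sqrt{\ln(1/\eps)/m}$ for $i\in T$ and histories in $D'$. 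Concretely I would choose the constants so that: the rounds excluded from $T$ because the local violation is too large have total ``weight'' small enough that a Markov inequality on $(1/m)\sum_i \textrm{E}[Z_i|D_{<i}, D'_{<i}\text{-restricted}]$ forces $|T|\geq 2m/3$; and the bad part of $D$ we throw away to make the per-round conditional statements simultaneously valid has measure $\leq \frac12\Pr(D)$.

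\textbf{Main obstacle.} The delicate point is step (a): reconciling the conditioning on the global event $D$ (which entangles all $m$ rounds, including rounds after $i$, and whose definition of success mixes the $\CHSH$-count with Eve's guessing event) with the per-round object $\VIOL_{\boxA\boxB}(i)$ that only conditions on $D_{<i}$ and uses fresh uniform inputs in round $i$. One must be careful that conditioning on the \emph{future} does not secretly make the round-$i$ state adversarial; the resolution is precisely that we only lose a factor $\Pr(D)^{-1}\leq \eps^{-1}$, which costs an additive $\frac1m\log(1/\eps)$ per round on the entropy/violation side and is therefore harmless after being spread over the $\geq 2m/3$ rounds of $T$ — but writing this so that the $3\eta$ (rather than $\eta$) and the $6\sqrt{\ln(1/\eps)/m}$ slack come out with honest constants, while keeping $\Pr(D'|D)\geq 1/2$, requires a careful two-level Markov argument (one level over rounds, one over the measure of $D$), which is the technical heart of the claim.
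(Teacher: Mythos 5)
Your proposal correctly identifies the two delicacies (conditioning on the future entangles rounds; the inputs conditioned on $D_{<i}$ need not be exactly uniform), and your instinct that the tower property $\Ex{Z_i\mid D} = \Ex{\Ex{Z_i\mid D_{\leq i}}\mid D}$ is a dead end is right. But the central move you propose — bounding $\Ex{Z_i\mid D_{<i}}$ via $\Ex{Z_i\mid D}$ at a ``multiplicative cost of $1/\Pr(\text{future part of }D\mid D_{<i})$'' — does not work, and the paper does something genuinely different. The multiplicative-cost inequality goes the wrong way: from $\Pr(Z_i=1\mid D, D_{<i}) = \Pr(Z_i=1,\,D_{\geq i}\text{-part}\mid D_{<i})/\Pr(D_{\geq i}\text{-part}\mid D_{<i}) \leq \Pr(Z_i=1\mid D_{<i})/\Pr(D_{\geq i}\text{-part}\mid D_{<i})$ you only get a \emph{lower} bound on $\Pr(Z_i=1\mid D_{<i})$, whereas the claim needs an \emph{upper} bound. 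Conditioning on the future can suppress $Z_i$ on $D$ (driving $\Ex{Z_i\mid D}$ to $0$) while leaving $\Ex{Z_i\mid D_{<i}}$ large, and no Markov/measure bookkeeping over rounds and over $D$ repairs this on its own.

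The missing ingredient is a \emph{martingale concentration} step applied to the sum $\sum_i \bigl(\Ex{Z_i\mid h_{<i}} - Z_i\bigr)$. The paper sets $W_i = \Ex{Z_i} - Z_i$ (with $\Ex{Z_i}$ conditioned on the history only, so $\Ex{Z_i} = (1-\opt)+\VIOL_{\boxA\boxB}(i)$), observes that $(W_{\leq i})_i$ is a Doob martingale with increments bounded by $1$, and applies Azuma to get, \emph{unconditionally} (over $\hat{\Adv}=\Adv$), $\Pr\bigl(\frac1m\sum_i \VIOL_{\boxA\boxB}(i)+(1-\opt) > \frac1m\sum_i Z_i + \beta\bigr) \leq e^{-\beta^2 m/2}$. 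This is the step that ties $\sum_i\VIOL_{\boxA\boxB}(i)$ (defined by conditional expectations over the \emph{past}) pointwise to the realized $\sum_i Z_i$ along almost every path, independently of what $D$ is. Only after this does one condition on $D$ (paying the $1/\Pr(D)\leq 1/\eps$ factor you mention) and do the Markov steps over rounds and over $D$ to produce $T$ and $D'$, with the factor $3$ coming from round-Markov. The $\beta = \sqrt{2\ln(2/\eps)/m}$ set so that $e^{-\beta^2 m/2} < \eps/4$ is also where the $\sqrt{\ln(1/\eps)/m}$ slack comes from; your accounting, which puts the slack at the linear scale $(1/m)\log(1/\eps)$, is a telltale sign that the Azuma step is absent from your argument. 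As a minor note, your item (b) worry about non-uniform inputs is already handled by defining $\VIOL_{\boxA\boxB}(i)$ with respect to the conditional input distribution and does not require separate appeal to Claim~\ref{claim:prod_inputs} inside this claim.
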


\begin{proof}
Let $Z_i\in\{0,1\}$ be $1$ if and only if the $\CHSH$ condition is not satisfied in round $i$. By definition, $\textrm{E}[Z_i] = (1-\opt) + \VIOL_{\boxA\boxB}(i)$. Let $W_i = \textrm{E}[Z_i] - Z_i$ and $W_{\leq i} = W_1+\cdots+W_i$. $(W_{\leq i})_i$ is a Martingale, and by Azuma's inequality, for any $\beta >0 $
\begin{align}
 \Pr\Big( \frac{1}{m} \sum_i \VIOL_{\boxA\boxB}(i) + (1-\opt) > \frac{1}{m}\sum_i Z_i + \beta\Big) 
&= \Pr\Big( \frac{1}{m}\sum_i W_i > \beta \Big) \notag\\
& \leq e^{-\beta^2 m/2}.\notag
\end{align}
Since the string $\hat{\Adv}$ is chosen by the adversary uniformly at random, we may further condition the equations above on $\hat{\Adv}=\Adv$ without affecting their validity. 
Note that the event  $\CHSH_{\boxA\boxB}(\eta)$ is equivalent to $\frac{1}{m}\sum_i Z_i \leq (1-\opt)+\eta$. Choosing $\beta = \sqrt{2\ln(2/\eps)/m}$, so that $e^{-\beta^2 m/2} < \eps/4$, and using the assumption $\Pr(D|\hat{\Adv}=\Adv)\geq \eps$ to further condition on $D=\CHSH_{\boxA\boxB}(\eta)\wedge\GUESS_{\boxB\boxE}$ we get
$$
\Pr\Big( \frac{1}{m}\sum_i \VIOL_{\boxA\boxB}(i) > \eta + \beta  \big| D, \hat{\Adv}=\Adv \Big) \,\leq\, 1/2.
$$
The quantity $\VIOL_{\boxA\boxB}(i)$ is a nonnegative number which only depends on the state of the devices in round $i$, itself only depending on the string of inputs and outputs observed thus far. Applying Markov's inequality, the condition above implies that there is a set $T\subseteq[m]$ of size $|T|\geq 2m/3$ and a subset $D'\subseteq D$ of size $\Pr(D'|D)\geq 1/2$ such that for every $i\in T$ it holds that $\VIOL_{\boxA\boxB}(i) \leq 3( \eta + \beta)$, provided previous inputs and outputs of the devices were in $D'$. 
\end{proof}

\begin{proof}[Proof of Lemma~\ref{lem:chain-rule}]
Let $D'$ be the set from Claim~\ref{claim:unif}. Consider the state of the devices $\boxA$ and $\boxB$ in an arbitrary round $i$ of the protocol. By applying Markov's inequality to the bound~\eqref{eq:product-0} from Claim~\ref{claim:product}, we obtain a set $|T'|\subseteq[m]$ of size $|T'|\geq 11m/12$ and a subset $D''\subseteq D'$ satisfying $\Pr(D''|D')\geq 1/2$ such that, for every $i\in T'$, conditioned on $\hat{\Adv}=\Adv$ and $(X_{<i},Y_{<i},A_{<i},B_{<i},E_{<i}) = (x_{<i},y_{<i},a_{<i},b_{<i},e_{<i})\in D''_{<i}$, both bounds
$$
\Big\| \rho_{\boxA_i X_iY_i} - \rho_{\boxA_i}\otimes \Big(\frac{1}{6}\sum_{x,y} \ket{x,y}\bra{x,y}\Big) \Big\|_1 \,\leq\, 200 \sqrt{\log(1/\eps)/m}
$$
and the analogous bound where $\mathcal{A}_i$ is replaced by $\mathcal{B}_i$ hold. Letting $T''=T'\cap T$, where $T$ is the set from Claim~\ref{claim:unif}, both the bound above and the condition $\VIOL_{\boxA\boxB}(i)\leq 3\eta + 6\sqrt{\ln(1/\eps)/m}$ hold simultaneously in the rounds from $T''$ (conditioned on previous inputs and outputs being in $D''$). Furthermore, note that whether both conditions are satisfied or not only depends on the (post-selected) state of the protocol in round $i$, itself only depending on subsequent choices of inputs and outputs in the protocol to the extent that the condition $\hat{\Adv}=\Adv$ is satisfied. Hence as long as the advice bits $\hat{\Adv}$ that Eve uses to select the measurement on her system have a positive probability of being the correct advice bits, given the data generated up to round $i-1$, both bounds must hold verbatim. As a consequence, for any fixed $(x,y,a,b,e)\in D''_{<i}$ there exists a string $(\hat{x}_{>i},\hat{y}_{>i},\hat{a}_{>i},\hat{b}_{>i})$ from which advice bits $\hat{\Adv}_{>i}$ can be computed such that if Eve makes the corresponding measurement, and obtains outputs that match $e_{<i}$, the bounds will hold irrespective of what might happen if the protocol was to be run for rounds after $i$. Thus conditions~\eqref{eq:chain-0} and~\eqref{eq:chain-1} in the lemma hold for any round $i\in T''$. 

It remains to show that condition~\eqref{eq:chain-1} holds simultaneously in some round $i_0$. Since by construction $\Pr(D''|\hat{\Adv}=\Adv) \geq \eps/4$, multiplying by $\Pr(\hat{\Adv}=\Adv)=2^{-\alpha m}$, applying Baye's rule, and using the definition of $D =  \CHSH_{\boxA\boxB}(\eta)\wedge \GUESS_{\boxB\boxE}$, we get
$$ \prod_{i=1}^m \Pr\big(\GUESS_{\boxB\boxE}(i)|D''_{<i}\big) \,\geq\,(\eps/4)\,2^{-\alpha m}.$$
Taking logarithms and applying Markov's inequality, there is a subset $S\subseteq[m]$ of size $|S|\geq m/2$ such that for every $i\in S$, 
$$-\ln \Pr\big(\GUESS_{\boxB\boxE}(i)|D''_{<i}\big) \leq 2(\ln(2)\alpha + \ln(4/\eps)/m),$$
implying that, for all $i\in S$,
\beq\label{eq:cr-1}
\Pr\big(\GUESS_{\boxB\boxE}(i)|D''_{<i})\geq 1-2\ln(2)\alpha-2\ln(4/\eps)/m.
\eeq
Let $i_0$ be any round in $T''\cap S$. To obtain~\eqref{eq:chain-2} we need to further condition~\eqref{eq:cr-1} on inputs in round $i_0$ to be the pair $(2,1)$, which using Claim~\ref{claim:prod_inputs} happens with probability $1/6\pm O(\sqrt{\ln(1/\eps)/m})$.  Choosing $C_\nu$ in the lemma to be a large enough constant, all three conditions are satisfied. 
\end{proof}

\section{The quantum reconstruction paradigm}\label{sec:adv}

In this section we prove a general lemma, Lemma~\ref{lem:ext_adv} in Section~\ref{sec:rec-lem} below, from which Lemma~\ref{lem:strong-adv} is deduced in Section~\ref{sec:strong-adv}. We start with some useful preliminary definitions and known results. 

\subsection{Combinatorial preliminaries}

We first define extractors. 

\begin{definition}\label{def:extractor}
  A function $Ext: \{0,1\}^n \times \{0,1\}^d \to \{0,1\}^m$ is a
  \emph{quantum-proof} (or simply \emph{quantum})
  \emph{$(k,\eps)$-strong extractor} if for all
  states $\rho_{XE}$ classical on $X$ with $\Hmin(X|E) \geq k$,
  and for a uniform seed $Y\in\{0,1\}^d$, we have 
	$$\frac{1}{2} \big\|
    \rho_{Ext(X,Y)YE} - \rho_{U_m} \tensor \rho_Y \tensor \rho_E\big\|_1
  \leq \eps, $$
	where $\rho_{U_m}$ is the fully mixed state on a system of dimension $2^m$.
\end{definition}

We will use list-decodable codes.

\begin{definition}
  A code $C : \{0,1\}^n \to \{0,1\}^{\bar{n}}$ is said to be
  $(\eps,L)$-list-decodable if every Hamming ball of relative
  radius $1/2 - \eps$ in $\{0,1\}^{\bar{n}}$ contains at most
  $L$ codewords.
\end{definition}

There exist list-decodable codes with the following parameters.

\begin{lemma} \label{lem:ecc} For every $n \in \N$ and $\delta > 0$
  there is a code $C_{n,\delta} : \{0,1\}^n \to \{0,1\}^{\bar{n}}$,
  which is $(\delta,1/\delta^2)$-list-decodable, with $\bar{n} = \poly(n,1/\delta)$. Furthermore, $C_{n,\delta}$ can be evaluated in time
  $\poly(n,1/\delta)$ and $\bar{n}$ can be assumed to be a power of
  $2$.
\end{lemma}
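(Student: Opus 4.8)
The plan is to construct $C_{n,\delta}$ by concatenating a Reed--Solomon outer code with a Hadamard inner code, and to obtain the list-decodability property from the purely combinatorial Johnson bound, so that no list-decoding algorithm is needed (the lemma only asks for efficient encoding, not decoding). Throughout one may assume $\delta$ is smaller than some absolute constant, since otherwise the claimed list size $1/\delta^2$ is $O(1)$ and essentially any code works.

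First I would fix $q=2^{t}$ to be the smallest power of two exceeding $c\,n/\delta^{2}$ for a suitable absolute constant $c$, identify $\F_q$ with $\{0,1\}^{t}$, and view a message $x\in\{0,1\}^{n}$ as the coefficient vector of a polynomial $p_{x}$ of degree $<k:=\lceil n/t\rceil$ over $\F_q$ (padding with zeros). The outer code is the Reed--Solomon code $x\mapsto\big(p_{x}(\alpha)\big)_{\alpha\in\F_q}\in\F_q^{q}$, of relative distance $\ge 1-(k-1)/q\ge 1-\delta^{2}$ by the choice of $q$. The inner code is the Hadamard code $\mathrm{Had}\colon\F_q\cong\F_2^{t}\to\{0,1\}^{q}$, $\mathrm{Had}(z)=\big(\langle z,w\rangle\big)_{w\in\F_2^{t}}$, of relative distance exactly $1/2$, since $z\ne z'$ forces $\mathrm{wt}(\mathrm{Had}(z-z'))=2^{t-1}$. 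I would then take $C_{n,\delta}:=\mathrm{Had}\circ\mathrm{RS}$, a binary code of block length $\bar n=q\cdot q=2^{2t}$: this is automatically a power of two, it is $\mathrm{poly}(n,1/\delta)$, it has relative distance $\ge(1-\delta^{2})\cdot\tfrac12=\tfrac12-\tfrac{\delta^{2}}{2}$, and it is evaluable in time $\mathrm{poly}(n,1/\delta)$ (one polynomial evaluation at the $q$ points of $\F_q$, followed by $q$ applications of $\mathrm{Had}$).

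It then remains to invoke the Johnson bound: a binary code of relative minimum distance $\tfrac12-\gamma$ is $(\rho,L)$-list-decodable whenever $\rho<\tfrac12\big(1-\sqrt{2\gamma}\big)$, with $L=O\big(1/(1-2\rho)^{2}\big)$. Plugging in $\gamma=\delta^{2}/2$ and decoding radius $\rho=\tfrac12-\delta$, one has $\sqrt{2\gamma}=\delta$ and hence $\rho=\tfrac12-\delta<\tfrac12(1-\delta)=\tfrac12(1-\sqrt{2\gamma})$, so $C_{n,\delta}$ is $(\delta,O(1/\delta^{2}))$-list-decodable; enlarging the constant $c$ a little pushes the relative distance of the concatenated code close enough to $\tfrac12$ that the $O(\cdot)$ can be replaced by the claimed $1/\delta^{2}$.

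Everything here is routine; the only step that needs attention is matching the three quantities — the relative distance of the concatenated code, the decoding radius $\tfrac12-\delta$, and the list size $1/\delta^{2}$ — through the Johnson bound while keeping $\bar n$ polynomial, which is exactly what forces $q=\Theta(n/\delta^{2})$ and hence $\bar n=\mathrm{poly}(n,1/\delta)$. Alternatively, one could simply quote the whole statement from the coding-theory literature (Reed--Solomon concatenated with Hadamard, together with the Guruswami--Sudan list decoder, which moreover yields efficient list decoding that is not needed here).
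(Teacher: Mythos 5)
Your proposal is correct, and it is essentially the same route as the paper's: the paper simply cites Guruswami, Håstad, Sudan and Zuckerman~\cite{GHSZ02} for the Reed--Solomon concatenated with Hadamard construction with list-decodability obtained combinatorially, which is exactly what you spell out. Your quantitative check works: with $q=2^t$ the smallest power of two exceeding $cn/\delta^2$, the concatenated code has block length $q^2$, relative distance at least $\tfrac12-\tfrac{\delta^2}{2}$, and the Johnson bound (in the form $L\leq \frac{1-2\gamma}{(1-2\rho)^2-2\gamma}$ with $\gamma=\delta^2/2$, $\rho=\tfrac12-\delta$) already gives $L\leq 1/(3\delta^2)<1/\delta^2$, so the constant-tuning you mention is in fact unnecessary. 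The only cosmetic difference is block length: your parameterization gives $\bar n=\Theta(n^2/\delta^4)$ while the paper quotes $O(n/\delta^4)$ from~\cite{GHSZ02}, but both are $\poly(n,1/\delta)$, which is all the lemma requires.
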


For example, Guruswami et al.~\cite{GHSZ02} combine a Reed-Solomon
code with a Ha\-da\-mard code, obtaining such a list-decodable code with
$\bar{n} = O(n/\delta^4)$.

We will also use the notion of weak design, as defined in~\cite{RRV02}.

\begin{definition}\label{def:weakdesign}
 A family of sets $S_1,\cdots,S_m \subset [d]$
  is a \emph{weak $(t,r,m,d)$-design} if
  \begin{enumerate}
    \item For all $i$, $|S_i| = t$.
    \item For all $i$,  $\sum_{j = 1}^{i-1} 2^{|S_j \cap S_i|} \leq rm$.
  \end{enumerate}  
\end{definition}

There exists designs with the following parameters. 

\begin{lemma}[\protect{\cite[Lemma 17]{RRV02}}] \label{lem:optimalweakdesign}
 For every $t,m \in \N$ there
  exists a weak $(t,1,m,d)$-design $S_1,\dotsc,S_m \subset [d]$ such that
  $d = t \left\lceil \frac{t}{\ln 2} \right\rceil \left\lceil\log 4m
  \right\rceil = O(t^2 \log m)$. Moreover, such a design can be found
  in time $\poly(m,d)$ and space $\poly(m)$.
\end{lemma}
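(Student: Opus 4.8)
The plan is to prove the lemma by an \emph{explicit} construction together with a direct verification of the inequality $\sum_{j<i}2^{|S_j\cap S_i|}\le m$. A purely probabilistic argument that takes all $m$ sets to be uniformly random $t$-subsets of $[d]$ only gives $\sum_{j<i}2^{|S_j\cap S_i|}\le m\,(1+O(1/\log m))$, i.e.\ a weak design with $r=1+o(1)$ rather than $r=1$, so some structure is needed. I would stratify $[d]$ into $\ell:=\lceil\log 4m\rceil$ disjoint blocks $B_1,\dots,B_\ell$, each of size $tq$ with $q:=\lceil t/\ln 2\rceil$ and identified with $[t]\times[q]$, and partition $\{1,\dots,m\}$ into consecutive batches assigned to the blocks, the batches shrinking geometrically (roughly $m/2$ indices to $B_1$, $m/4$ to $B_2$, and so on). A set $S_i$ assigned to block $B_k$ is taken to be the graph $\{(x,f_i(x)):x\in[t]\}$ of a function $f_i\colon[t]\to[q]$; this makes $|S_i|=t$ automatic, and any two sets in distinct blocks are disjoint.

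The computation that drives everything is the following: if $S_i,S_j$ lie in the same block and $f_i$ is drawn uniformly at random while $f_j$ is fixed, then $|S_i\cap S_j|=|\{x:f_i(x)=f_j(x)\}|\sim\mathrm{Bin}(t,1/q)$, so
$$\mathbb{E}\big[2^{|S_i\cap S_j|}\big]=(1+1/q)^t\le e^{t/q}\le e^{\ln 2}=2,$$
using $q\ge t/\ln 2$, whereas $2^{|S_i\cap S_j|}=1$ whenever $S_i,S_j$ are in different blocks. I would then build the design greedily, one set at a time. Suppose $S_i$ is to be placed as the $r$-th set of $B_k$, with $M$ sets already placed in $B_1,\dots,B_{k-1}$; then $\sum_{j<i}2^{|S_j\cap S_i|}=M+\sum_j 2^{|S_j\cap S_i|}$, the last sum over the $r-1$ earlier sets of $B_k$, and by linearity its expectation over a uniform $f_i$ is at most $M+2(r-1)$, irrespective of the history. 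Hence, as long as $M+2(r-1)\le m$, some $f_i$ achieves $\sum_{j<i}2^{|S_j\cap S_i|}\le m$; filling each block with as many sets as this permits (about $(m-M)/2$) more than halves the number of unplaced sets, so after $\ell=\lceil\log 4m\rceil$ blocks fewer than $m/2^\ell\le 1/4$ remain, i.e.\ all $m$ have been placed. The resulting family is a weak $(t,1,m,d)$-design with $d=\ell tq=t\lceil t/\ln 2\rceil\lceil\log 4m\rceil$.

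For the effectiveness claim (polynomial time and space) I would derandomize each choice of $f_i$ by the method of conditional expectations applied coordinate by coordinate: $\sum_j 2^{|S_j\cap S_i|}=\sum_j\prod_{x=1}^t g_{j,x}(f_i(x))$ with $g_{j,x}(v)=2$ if $v=f_j(x)$ and $1$ otherwise is a sum of products of functions of the individual values $f_i(1),\dots,f_i(t)$, so one can fix these $t$ values one at a time while never increasing the conditional expectation; each step costs $\poly(m,d)$ time, and the only state to be kept is the list of previously chosen functions, i.e.\ $\poly(m)$ space.

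I expect the one place that needs genuine care --- and where a sloppy argument leaks the extra $1+\Theta(1/\log m)$ factor or an extra additive $O(\log m)$ in a batch size --- is the interplay between the geometric batch sizes and the integer rounding: one should take batch sizes of the form $\lfloor(m-M)/2\rfloor+1$ so that the residual count strictly more than halves at each block and hits $0$ within exactly $\lceil\log 4m\rceil$ blocks, and one must apply the bound $\mathbb{E}[2^{|S_i\cap S_j|}]\le 2$ \emph{within} a block (it is the disjointness across blocks that makes cross-block pairs cost only $1$ each, which is all the budget $m$ can afford). Everything else is routine bookkeeping; this is the computation carried out in~\cite[Lemma 17]{RRV02}.
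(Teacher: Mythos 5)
The paper does not prove this lemma; it is cited verbatim from~\cite{RRV02}. Your proposal correctly reconstructs the argument of~\cite[Lemma~17]{RRV02}: the disjoint-block stratification of $[d]$, the function-graph sets on $[t]\times[q]$ with $|S_i\cap S_j|\sim\mathrm{Bin}(t,1/q)$ yielding $\mathbb{E}\bigl[2^{|S_i\cap S_j|}\bigr]=(1+1/q)^t\le 2$, the geometrically shrinking batches across $\lceil\log 4m\rceil$ blocks with cross-block pairs costing exactly $1$, and derandomization by conditional expectations; this is essentially the construction in the cited source, and the bookkeeping (batch size $\lfloor(m-M)/2\rfloor+1$ so the residual strictly more than halves, hence vanishes within $\lceil\log 4m\rceil$ blocks) is handled correctly.
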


Finally, we describe Trevisan's extractor construction. 

\begin{definition}\label{def:genericscheme} For a
  one-bit extractor $C : \{0,1\}^n \times \{0,1\}^t \to \{0,1\}$,
  and for a weak $(t,r,m,d)$-design $S_1,\cdots,S_m \subset [d]$, we define the
  $m$-bit extractor $Ext_C : \{0,1\}^n \times \{0,1\}^d \to
  \{0,1\}^m$ as 
	$$ Ext_C(x,y) \,:=\, C(x,y_{S_1}),\ldots,C(x,y_{S_m}).$$
	\end{definition}

\subsection{The reconstruction lemma}\label{sec:rec-lem}

The following lemma is implicit in the proof of security of Trevisan's extractor construction paradigm against quantum adversaries given in~\cite{DVPR11}. A similar lemma also appeared in~\cite[Lemma~13]{VV12}, where the code $C$ was specialized to the $t$-XOR code. For completeness, we state and sketch the proof of a more general variant of that lemma. 

\begin{lemma}\label{lem:ext_adv} Let $n,m,r,t,L$ be integers and $\eps>0$. Let $C:\{0,1\}^n \to \{0,1\}^{\bar{n}}$ be a $(\eps^2/(8m^2),L)$-list-decodable code, where $\bar{n}=2^t$. Let $Ext_C$ be the extractor obtained by combining $C$ with a $(t,r,m,d)$ design as in Definition~\ref{def:genericscheme}.

Let $\rho_{XE}$ be a state such that $X$ is a random variable distributed over $n$-bit strings. Let $U_m$ be uniformly distributed over $m$-bit strings, and suppose that 
\beq\label{eq:ass-eve}
\|\rho_{Ext_C(X,Y)YE} - \rho_{U_m}\otimes \rho_Y \otimes \rho_E \big\|_{1} \, > \, \eps,
\eeq
where $Y$ is uniformly distributed over $\{0,1\}^d$. Then there exists fixed strings $y_1,\ldots,y_{rm} \in\{0,1\}^t$ such that, given the $\{(y_i,C(X)_{y_i})\}$ as advice, with probability at least $\eps^2/(8m^2)$ over the choice of $x\sim p_X$ and her own randomness an ``adversary'' Eve holding system $E$ can produce a string $z$ such that $d_H(z , C(x)) \leq 1/2 - \eps^2/(8m^2)$. In particular, Eve can recover $L$ strings $\tilde{x}_i \in \{0,1\}^n$ such that there exits $i$, $\tilde{x}_i = x$.  
\end{lemma}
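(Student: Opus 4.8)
The plan is to follow the standard ``reconstruction paradigm'' for Trevisan's extractor, adapted to a quantum adversary. I would start by using a hybrid argument to localize the distinguishing advantage~\eqref{eq:ass-eve} to a single output bit of $Ext_C$. Writing $Ext_C(x,y) = C(x,y_{S_1}),\ldots,C(x,y_{S_m})$, I would interpolate between $\rho_{Ext_C(X,Y)YE}$ and $\rho_{U_m}\otimes\rho_Y\otimes\rho_E$ by replacing output coordinates $1,\dots,j$ with uniform bits one at a time. By triangle inequality and averaging, there is an index $k\in[m]$ and a fixing of the ``future'' output bits $C(x,y_{S_{k+1}}),\dots,C(x,y_{S_m})$ (which, since the design sets $S_{k+1},\dots,S_m$ only touch $d-t$ of the seed bits in the worst case, can be simulated by the adversary given $O(rm)$ bits of advice of the form $(y_i, C(x)_{y_i})$ — this is exactly where the weak-design property $\sum_{j<i}2^{|S_j\cap S_i|}\le rm$ is used to bound the number of advice bits) such that the adversary can distinguish the single bit $C(x,y_{S_k})$ from uniform, given $y_{S_k}$ and $E$ and the advice, with advantage $>\eps/m$.

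Next I would convert this one-bit distinguisher into a one-bit predictor in the usual way: a distinguisher with advantage $>\eps/m$ on $\rho_{C(X,W)WE'}$ versus $\rho_{U_1}\otimes\rho_W\otimes\rho_{E'}$ (where $W=Y_{S_k}$ ranges over $\{0,1\}^t$ and $E'$ bundles $E$ with the advice register) yields a measurement that, on input $w$ and $E'$, outputs a guess for $C(x)_w$ that is correct with probability $\ge 1/2 + \eps/(2m)$ on average over $x\sim p_X$ and $w$ uniform. Since $\bar n = 2^t$, letting $w$ range over all of $\{0,1\}^t$ means this measurement defines, for each outcome of the adversary's randomness and each $x$, a string $z\in\{0,1\}^{\bar n}$ with expected relative agreement $\ge 1/2+\eps/(2m)$ with $C(x)$. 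A Markov/averaging argument then gives that with probability at least $\eps^2/(8m^2)$ over $x\sim p_X$ and the adversary's randomness, the produced string $z$ satisfies $d_H(z,C(x))\le 1/2 - \eps^2/(8m^2)$. Finally, since $C$ is $(\eps^2/(8m^2),L)$-list-decodable, running the list-decoding algorithm on $z$ recovers a list of at most $L$ candidates $\tilde x_1,\dots,\tilde x_L$ one of which equals $x$, which is the conclusion.

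The main obstacle — and the one point requiring genuine care rather than bookkeeping — is the quantum hybrid step: one must ensure that conditioning on (a fixing of) the later output bits, and carrying the seed bits of $S_k$ that are shared with earlier blocks as classical advice, can be done without the adversary needing to ``uncompute'' anything on her quantum register $E$. This works because the advice bits are of the form $(y_i, C(X)_{y_i})$ — classical functions of $X$ and the public seed — so they can simply be handed over; the reason the count is $rm$ rather than something larger is precisely the weak-design bound on $\sum_{j<i}2^{|S_j\cap S_i|}$, which controls how many distinct values $y_{S_j}$ can take as $y$ ranges over seeds consistent with a fixed $y_{S_k}$. The rest — the hybrid telescoping, the distinguisher-to-predictor conversion, and the Markov argument turning average agreement into a list-decoding radius — are routine, and I would only sketch them, referring to~\cite{DVPR11} and~\cite[Lemma~13]{VV12} for the details of the analogous arguments.
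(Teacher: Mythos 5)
Your overall structure (hybrid argument to a single output bit, distinguisher-to-predictor conversion, then list-decoding) matches the paper's proof, but there is a genuine gap at the step you describe as happening ``in the usual way.'' Once you have, for each seed block $w\in\{0,1\}^t$, a binary measurement $M_w$ on $E'$ that predicts $C(x)_w$ with average advantage $\eps/(2m)$, you cannot simply ``let $w$ range over all of $\{0,1\}^t$'' and read off a string $z\in\{0,1\}^{\bar n}$: the measurements $M_w$ for different $w$ are mutually incompatible in general, and $E'$ contains the quantum register $E$, which Eve holds in a single copy, so applying $M_{w_1}$ disturbs the state and destroys the guarantee for $M_{w_2}$. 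This step is trivial classically (the side information can be reread arbitrarily often) but is the crux of the quantum argument. The paper handles it via Claim~\ref{claim:ext_adv-2}, which invokes a K\"onig--Terhal-style argument (\cite{KT07}, \cite[Claim~15]{VV12}) to construct a \emph{single} measurement $\mathcal{F}$ with outcomes in $\{0,1\}^n$, applied once, such that $C(\mathcal{F}(VE))$ agrees with $C(x)$ on a $1/2+\eps^2/(4m^2)$ fraction of positions in expectation. Note the quadratic degradation from $\eps/m$ to $\eps^2/m^2$: that is the price of collapsing the family of $w$-dependent measurements into one, and it is exactly why the lemma's final bound reads $\eps^2/(8m^2)$ rather than $\Theta(\eps/m)$.

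A symptom of the gap in your write-up is that your constants do not follow from your own premises: if the expected per-position agreement really were $1/2+\eps/(2m)$, the Markov/averaging step would yield a success probability and radius margin of order $\eps/m$ --- a strictly stronger and in fact unattainable conclusion --- not the $\eps^2/(8m^2)$ you state, which you appear to have copied from the lemma rather than derived. The remaining ingredients (the role of the weak-design bound $\sum_{j<i}2^{|S_j\cap S_i|}\le rm$ in limiting the advice to $rm$ pairs $(y_i,C(x)_{y_i})$, and the final appeal to $(\eps^2/(8m^2),L)$-list-decodability) are consistent with the paper's argument once the exponent is corrected.
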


\begin{proof}
Proposition~4.4 from~\cite{DVPR11} shows that a standard hybrid argument, together with properties of Trevisan's extractor (specifically the use of the seed through combinatorial designs), can be used to show the following claim.

\begin{claim}\label{claim:ext_adv-1} Assume~\eqref{eq:ass-eve} holds. Then there exists strings $y_1,\ldots, y_{rm} \in \{0,1\}^t$, and for every $y\in\{0,1\}^t$ a binary measurement, depending on the $\{(y_i,C(X)_{y_i})\}$, on $E$ that outputs $C(X,y)$ with probability at least $1/2+\eps/m$ on average over $y$. Formally, 
\beq\label{eq:eve-1}
\big\| \rho_{C_t(X)_Y Y V E} - \rho_{U_1}\otimes \rho_{Y} \otimes \rho_{VE} \big\|_{1} \,>\, \frac{\eps}{m},
\eeq
where $Y$ is a random variable uniformly distributed over $\{0,1\}^t$ and $V$ is a classical register containing the $\{(y_i,C(X)_{y_i})\}$. 
\end{claim}

The next step is to argue that Eq.~\eqref{eq:eve-1} implies that an adversary given access to $E'=VE$ can predict not only a random bit of $C(X)$, but a string $Z$ of length $m$ such that $Z$ agrees with $C(X)$ in a significant fraction of positions. This follows from an argument given in~\cite{KT07}, and the following claim is proved exactly as~\cite[Claim~15]{VV12}. 

\begin{claim}\label{claim:ext_adv-2} Suppose~\eqref{eq:eve-1} holds. Then there exists a measurement $\mathcal{F}$, with outcomes in $\{0,1\}^n$, such that 
\beq\label{eq:eve-5}
\Pr_{x\sim p_X,\,y\sim U_{t}}\big(\, C(x)_y \,= \,C(\mathcal{F}(VE))_y\,\big) \geq \frac{1}{2} + \frac{\eps^2}{4m^2}\, ,
\eeq
where $\mathcal{F}(VE)$ denotes the outcome of $\mathcal{F}$ when performed on the state $\rho_{VE}$.
\end{claim}

To conclude the argument, we use the error-correction properties of $C$ to argue that Eve can decode her string $C(\mathcal{F}(VE))$ into an educated guess of $x$. Claim~\ref{claim:ext_adv-2} shows that, on expectation over $x$, Eve's string is at Hamming distance $1/2-\eps^2/(4m^2)$ from the encoding of $x$. In particular, the distance will be at most $1/2-\eps^2/(8m^2)$ for a fraction at least $\eps^2/(8m^2)$ of $x\sim p_X$. 
Since, by assumption, $C$ is $(\eps^2/(8m^2),L)$-list-decodable, for those $x$ Eve can narrow down the possibilities to at most $L$ distinct values. 
\end{proof}

\subsection{Proof of Lemma~\ref{lem:strong-adv}}\label{sec:strong-adv}

The proof of Lemma~\ref{lem:strong-adv} follows immediately from Lemma~\ref{lem:ext_adv} and an appropriate choice of parameters. Let $E$ denote the system made of the combination of $XYA_\bl B_\bl \mathcal{E}$, and let $n=|\ck|$. The assumption of the lemma is that $\Hmin^\eps(B_\ck|E) < \kappa n$. Let $m = \kappa n + 1$. Let $C = C_{n,\delta}$, where $\delta = \eps^2/(32 m^2)$, be a $(\delta,1/\delta^2)$ list-decodable code, as promised by Lemma~\ref{lem:ecc}. Let $Ext_C$ be constructed from $C$ and a $(t,1,m,d)$ design, where $t = \log \bar{n}$ and $d = O(t^2\log m)$, as promised by Lemma~\ref{lem:optimalweakdesign}. 

It follows from the data processing inequality (see e.g.~\cite[Lemma~V.1~(ii)]{KR11}), our assumed upper bound on $\Hmin^\eps(B_\ck|E)$, and our choice of $m$ that Eq.~\eqref{eq:ass-eve} holds with $(\eps/2)$ in place of $\eps$. Thinking of Eve as simply outputting one of her $L$ guesses $\tilde{x}_i$ chosen at random, we obtain that Eve's guess will be successful with probability at least $\eps^2/(32L m^2)$. Overall, Eve needs $m$ bits of advice, given which she can predict $x$ with success probability $O(\eps^6/m^6)$, given  our choice of parameters.

\section{Additional lemmas}\label{sec:additional}

\begin{lemma}[Azuma-Hoeffding inequality]\label{lem:azuma}
Let $(X_k)$ be a martingale such that $|X_k-X_{k-1}|\leq c_k$ for all $k$. Then for all integers $m$ and all $t\geq 0$,
$$ \Pr\big( X_m - X_0\geq t \big)\,\leq\, e^{-t^2/(2\sum_k c_k^2)}.$$
\end{lemma}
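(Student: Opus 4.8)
The plan is to prove this via the standard exponential-moment (Chernoff) method adapted to martingales. Let $(\mathcal{F}_k)$ denote the underlying filtration and write $X_m - X_0 = \sum_{k=1}^m D_k$ with $D_k := X_k - X_{k-1}$, so that $|D_k| \le c_k$ and, by the martingale property, $\mathrm{E}[D_k \mid \mathcal{F}_{k-1}] = 0$. For any $\lambda > 0$, Markov's inequality applied to the nonnegative random variable $e^{\lambda(X_m - X_0)}$ gives
$$\Pr(X_m - X_0 \ge t) \le e^{-\lambda t}\,\mathrm{E}\big[e^{\lambda(X_m - X_0)}\big],$$
so it suffices to control the moment generating function of $X_m - X_0$.

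The key estimate is Hoeffding's lemma: if $Z$ satisfies $\mathrm{E}[Z \mid \mathcal{F}] = 0$ and $|Z| \le c$ almost surely, then $\mathrm{E}[e^{\lambda Z} \mid \mathcal{F}] \le e^{\lambda^2 c^2 / 2}$. I would prove this from convexity of $x \mapsto e^{\lambda x}$ on $[-c,c]$: writing $x$ as the convex combination $\frac{c-x}{2c}(-c) + \frac{c+x}{2c}(c)$ yields $e^{\lambda x} \le \frac{c-x}{2c}e^{-\lambda c} + \frac{c+x}{2c}e^{\lambda c}$; taking the conditional expectation and using $\mathrm{E}[Z \mid \mathcal{F}] = 0$ to kill the linear term leaves $\cosh(\lambda c)$, which is at most $e^{\lambda^2 c^2/2}$ by a term-by-term comparison of Taylor series (using $(2n)! \ge 2^n n!$).

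Next I would iterate this bound down the filtration. Conditioning on $\mathcal{F}_{m-1}$ and pulling out the $\mathcal{F}_{m-1}$-measurable factor $e^{\lambda(X_{m-1}-X_0)}$,
$$\mathrm{E}\big[e^{\lambda(X_m - X_0)}\big] = \mathrm{E}\Big[ e^{\lambda(X_{m-1} - X_0)}\, \mathrm{E}\big[e^{\lambda D_m} \mid \mathcal{F}_{m-1}\big] \Big] \le e^{\lambda^2 c_m^2 / 2}\, \mathrm{E}\big[e^{\lambda(X_{m-1} - X_0)}\big],$$
and by induction $\mathrm{E}[e^{\lambda(X_m - X_0)}] \le \exp\big(\tfrac{\lambda^2}{2} \sum_{k=1}^m c_k^2\big)$. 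Substituting back yields $\Pr(X_m - X_0 \ge t) \le \exp\big(-\lambda t + \tfrac{\lambda^2}{2}\sum_k c_k^2\big)$ for every $\lambda > 0$; optimizing by choosing $\lambda = t / \sum_k c_k^2$ gives the claimed bound $e^{-t^2 / (2\sum_k c_k^2)}$.

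I expect the only mildly delicate point to be Hoeffding's lemma itself, and within it the inequality $\cosh(x) \le e^{x^2/2}$; everything else is bookkeeping with conditional expectations, so there is no genuine obstacle here.
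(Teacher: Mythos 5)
Your proof is correct and is the canonical one: decompose into martingale differences, control the conditional MGF via Hoeffding's lemma (convexity plus $\cosh x \le e^{x^2/2}$), peel off increments one at a time using the tower property, and optimize the Chernoff parameter. Note that the paper states this as Lemma~\ref{lem:azuma} without proof, citing it as a standard result, so there is no argument in the paper to compare against; your write-up fills that gap with the textbook argument, and the arithmetic (choosing $\lambda = t/\sum_k c_k^2$ to land on the exponent $-t^2/(2\sum_k c_k^2)$) checks out.
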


\begin{lemma}\label{lem:conditioning} Let $\eps,\delta,\eta,\beta>0$ and $m$ an integer such that $e^{-2\beta^2 \delta m} < \eps/2$. Let $X$ be a random variable defined over $m$-bit strings. Suppose that $\Pr( \sum_i X_i \leq \eta m) \geq \eps$. Then there exists  a set $G\subseteq\{0,1\}^m$ such that $\Pr(G)\geq \eps/2$ and for all $x$ in $G$, for a fraction $\geq 1-\delta$ of indices $i\in [m]$,  
$$\Pr(X_i=0 | X_{<i} = x_{<i}) \geq 1-\eta-\beta.$$
As a consequence, for a fraction at least $1-2\delta$ of $i\in [m]$ there exists a set $G_i \subseteq G$ such that $\Pr(G_i|G)\geq 1/2$ and for every $x_{<i}\in G_{i}$,
$$\Pr(X_i=0 | X_{<i} = x_{<i} ) \geq 1-\eta-\beta.$$
\end{lemma}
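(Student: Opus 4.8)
\textbf{Proof plan for Lemma~\ref{lem:conditioning}.}

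The plan is to prove the first statement by a direct entropy/concentration argument, and then deduce the second statement by a simple averaging (Markov) argument applied to the ``bad fraction'' of indices.

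First, for the first part I would set $G_0 := \{x : \sum_i x_i \leq \eta m\}$, so $\Pr(G_0) \geq \eps$ by hypothesis. For a fixed string $x_{<i}$, let $p_i(x_{<i}) := \Pr(X_i = 1 \mid X_{<i} = x_{<i})$. Call an index $i$ ``dangerous'' for $x$ if $p_i(x_{<i}) > \eta + \beta$. The heart of the argument is to show that, conditioned on $G_0$, with good probability only a $\delta$-fraction of indices are dangerous. To see this, define the martingale differences $W_i := p_i(X_{<i}) - X_i$ (so that $\textrm{E}[W_i \mid X_{<i}] = 0$) and $W_{\leq k} = \sum_{i \leq k} W_i$; this is a martingale with $|W_i| \leq 1$. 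By the Azuma-Hoeffding inequality (Lemma~\ref{lem:azuma}, applied with $c_i = 1$ after an appropriate rescaling — here I would more carefully use a Chernoff/Azuma bound in the form $\Pr(\sum_i (p_i(X_{<i}) - X_i) > \beta \delta m) \leq e^{-2\beta^2\delta m / 4}$ or similar; the exact constant in the exponent is a routine computation I would pin down), the event that $\frac{1}{m}\sum_i p_i(X_{<i}) > \frac{1}{m}\sum_i X_i + \beta\delta$ has probability at most $e^{-2\beta^2\delta m}$, which by hypothesis is $< \eps/2$. Conditioning on $G_0$ (which costs at most a factor $1/\eps$ in probability, still leaving mass $\geq \eps/2$ inside $G_0$), on the remaining event we have $\frac{1}{m}\sum_i p_i(X_{<i}) \leq \eta + \beta\delta$. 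On this event, the number of dangerous indices $i$, each contributing $p_i(X_{<i}) > \eta + \beta$ to the sum, is at most $\delta m$ — otherwise the average would exceed $\eta + \beta\delta$, a contradiction (using $p_i \geq 0$ for the other terms and $\eta \geq 0$). Letting $G$ be the intersection of $G_0$ with this high-probability event gives $\Pr(G) \geq \eps/2$ and the claimed property that for every $x \in G$, a $(1-\delta)$-fraction of indices $i$ satisfy $\Pr(X_i = 0 \mid X_{<i} = x_{<i}) \geq 1 - \eta - \beta$.

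For the second part, I would argue by double counting over the pairs $(i, x)$ with $x \in G$ and $i$ ``good'' for $x$ (meaning $\Pr(X_i = 0 \mid X_{<i} = x_{<i}) \geq 1-\eta-\beta$). By the first part, for each $x \in G$ at least a $(1-\delta)$-fraction of the $m$ indices are good, so averaging over $x \sim \Pr(\cdot \mid G)$, at least a $(1-\delta)$-fraction of indices $i$ have the property that $\Pr_{x \sim \Pr(\cdot\mid G)}(i \text{ is good for } x) \geq 1/2$; indeed, if the set $S$ of such indices had $|S| < (1-2\delta)m$, then the expected fraction of good indices would be at most $\frac{|S|}{m}\cdot 1 + \frac{m - |S|}{m}\cdot\frac12 < (1-2\delta) + \frac{2\delta}{2}\cdot 1 = 1 - \delta$, contradicting the $(1-\delta)$ lower bound. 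For each $i \in S$, set $G_i := \{x_{<i} : x \in G \text{ and } i \text{ is good for } x\}$ (more precisely, the projection onto the first $i-1$ coordinates of such $x$; note goodness of $i$ for $x$ depends only on $x_{<i}$, so this is well-defined), and then $\Pr(G_i \mid G) \geq 1/2$ and for every $x_{<i} \in G_i$ we have $\Pr(X_i = 0 \mid X_{<i} = x_{<i}) \geq 1 - \eta - \beta$, as required.

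The main obstacle I anticipate is getting the constant in the Azuma/concentration step to line up cleanly with the hypothesis $e^{-2\beta^2\delta m} < \eps/2$: the naive martingale $\sum_i(p_i - X_i)$ has increments bounded by $1$, which gives an exponent like $-\beta^2\delta^2 m / 2$ rather than $-2\beta^2\delta m$, so I would instead want to invoke a Bernstein/Chernoff-type bound that exploits the boundedness of the $p_i$ (they are probabilities in $[0,1]$), or absorb the discrepancy by noting that the lemma as used only needs the existence of \emph{some} exponentially small bound and the stated constants are not tight. This is purely a matter of bookkeeping and does not affect the structure of the argument. The rest — the two counting arguments — is routine.
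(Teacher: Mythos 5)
There is a genuine gap in your counting step, and it is the crux of the argument. You assert that if $\frac{1}{m}\sum_i p_i(X_{<i}) \leq \eta+\beta\delta$ then at most $\delta m$ indices are ``dangerous'' (i.e.\ have $p_i > \eta+\beta$), ``otherwise the average would exceed $\eta+\beta\delta$.'' But if $k > \delta m$ indices are dangerous and the remaining indices contribute nothing, you only get $\frac{1}{m}\sum_i p_i > \frac{k}{m}(\eta+\beta) > \delta(\eta+\beta) = \delta\eta + \delta\beta$, and $\delta\eta + \delta\beta \leq \eta + \delta\beta$ whenever $\delta \leq 1$ --- so there is no contradiction. What this counting actually yields is $k < m(\eta+\beta\delta)/(\eta+\beta)$, which is close to $m$ (and hence vacuous) whenever $\beta \ll \eta$. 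An upper bound on the \emph{average} of the $p_i$ near $\eta$ simply does not bound the \emph{fraction} of $p_i$ exceeding $\eta+\beta$ by $\delta$.

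This is not a defect peculiar to your write-up: the paper's own proof, which is a contrapositive version of your argument (its event $B$ is the complement of your $G$), contains the same gap --- conditioned on a string having $\geq \delta m$ dangerous indices, Azuma over those rounds only guarantees that the number of ones exceeds roughly $\eta\delta m$, not $\eta m$. In fact the lemma seems false as stated: take $\eta = 0.1$, $\beta = 0.01$, $\delta = 0.1$, let $X_i$ be i.i.d.\ Bernoulli$(0.12)$ for $i \leq 0.2m$ and $X_i \equiv 0$ for $i > 0.2m$, and take $m$ large. Then $\Pr(\sum_i X_i \leq \eta m) \to 1$ and $e^{-2\beta^2\delta m}$ is tiny, so the hypotheses hold with, say, $\eps = 0.9$; yet for \emph{every} $x$ there are exactly $0.2m > \delta m$ indices with $\Pr(X_i=0\mid x_{<i}) = 0.88 < 1-\eta-\beta = 0.89$, so no set $G$ of positive measure can satisfy the conclusion. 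Structurally your approach (martingale concentration on $\sum_i (p_i - X_i)$ followed by a count of the bad indices) is the same as the paper's, and both founder at the same inequality; a correct version of the lemma would need to weaken the conclusion, e.g.\ to bound only the fraction of $i$ with $p_i$ exceeding roughly $\eta/\delta$, or to impose a relation such as $\beta \gtrsim \eta(1-\delta)/\delta$ among the parameters.
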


\begin{proof}
For every $i\in [m]$ define 
$$B_i = \big\{(x_1,\ldots,x_{i-1},\ldots,x_m)|\, \Pr(X_{i}=1 | X_{<i} = x_{<i}) \geq \eta+\beta \big\},$$
let  
$$B \,=\, \Big\{ x\big|\, \sum_{i:x\in B_i}1 \geq \delta m\Big\},$$
and suppose towards a contradiction that $\Pr(B) \geq 1-\eps/2$. Let $\hat{B} = \{x\in B|\sum_i x_i \leq \eta m\}$. By definition, for every $x\in B$ and at least a $\delta$-fraction of indices $i$ it holds that  $\Pr(X_i=1 | X_{<i} = x_{<i}) \geq \eta+\beta$. Hence the probability that $x\in B$ has less than $\eta$ indices $j$ at which $x_j=1$ is at most $e^{-2\beta^2\delta m}$, i.e. $\Pr(\hat{B}|B) \leq e^{-2\beta^2\delta m}$. This shows that 
$$ \Pr\big(\sum_i X_i > \eta m \big) \,\geq\, \Pr(B)\big(1-\Pr(\hat{B}|B)\big) \,\geq\, (1-\eps/2)\big(1-e^{-2\beta^2\delta m}\big) \,>\, 1-\eps$$
given our assumption on $\eps,\delta,\eta,\beta$ and $m$; a contradiction. 

For the ``consequence'', for any $x\in G$ and $i\in [m]$ let $Y_{x,i}=1$ if and only if the condition
$$\Pr(X_i=0 | X_{<i} = x_{<i}) \geq 1-\eta-\beta$$
is satisfied. We have shown $\Es{x\in G, i\in [m]} \big[Y_{x,i}\big] \geq 1-\delta$. The result is then a consequence of Markov's inequality. 
\end{proof}

\begin{claim}\label{claim:mutualinfo} Let $\rho = \rho_{\boxA\boxB}$ be a bipartite state shared between Alice and Bob. Suppose Bob chooses $x\in \mathbf{X}^m$ according to distribution $(p_x)$, and applies a measurement with Krauss operators $\{N_x^b\}_{b\in\mathcal{B}^m}$ on $\boxB$. Alice \emph{sequentially} applies a measurement with Krauss operators $\{M_{x_i}^{a_i}\}_{a_i\in\mathbf{A}}$ on $\boxA$, for $i=1,\ldots,m$. Let $D\subseteq (\mathbf{X}\times \mathbf{A} \times \mathbf{B})^m$ be a set of probability $\Pr(D) = \eps$. For $i\in [m]$, let $\rho_i$ be the state of the system $\boxA\boxB X_i$ after $i-1$ measurements have been performed by Alice, conditioned on $(x_{<i},a_{<i},b_{<i})\in D_{<i}$:
$$ \rho_i \,\propto\, \sum_{(x,a,b): (x_{<i},a_{<i},b_{<i})\in D_{<i}} p_x\,\Big(\Big( \prod_{j<i}M_{x_j}^{a_j} \Big)\otimes N_x^b \Big)\,\rho\,\Big(\Big( \prod_{j<i} \big(M_{x_j}^{a_j}\big)^\dagger \Big)\otimes \big(N_x^b\big)^\dagger\Big),$$
and $\rho_i$ is normalized. Then the following bound holds:
$$ \sum_i \,I(\boxA:X_i | D_{<i})_{\rho_i} \,\leq\, \log(1/\eps).$$
\end{claim}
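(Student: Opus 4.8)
The plan is to establish this as essentially a statement about how much the choice of inputs $X_i$ can be correlated with Alice's system $\boxA$ when one conditions on an event $D$ that has probability $\eps$. The total correlation, summed over all rounds, is controlled by $\log(1/\eps)$, which is exactly the entropy deficit caused by conditioning on $D$.

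First I would set up the key quantity. Since $X = (X_1,\ldots,X_m)$ is chosen independently of the state $\rho_{\boxA\boxB}$ and of Bob's measurement, and each $X_i$ is drawn from $p_{x_i}$ independently, the mutual information $I(\boxA X_{\geq i} : X_i)$ is zero \emph{before} any conditioning. The content of the claim is that conditioning on the low-probability event $D$ (which couples inputs, Alice's outputs, and Bob's outputs through the measurement statistics) creates correlation, but only a bounded total amount. The natural tool is the chain rule together with the fact that conditioning on an event of probability $\eps$ can decrease an entropy by at most $\log(1/\eps)$.

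The key steps, in order, would be: (1) Consider the state $\sigma = \rho_{\boxA^{(m)} X_1 \cdots X_m}$ obtained after Alice performs \emph{all} $m$ of her sequential measurements and Bob performs his, but \emph{without} conditioning on $D$ — here $\boxA^{(m)}$ is the leftover $\boxA$ register. Because the $X_i$ are independent of everything and of each other, in this unconditioned state $I(\boxA^{(m)} : X_i \mid X_{<i}) = 0$ for every $i$, and more relevantly $H(X_i \mid \boxA X_{<i} \cdots) = H(X_i)$. (2) Now condition on $D$. Writing $\rho_i$ as in the statement, the point is that $I(\boxA : X_i \mid D_{<i})_{\rho_i}$ measures the excess correlation in round $i$. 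Express $\sum_i I(\boxA : X_i \mid D_{<i})$ as a telescoping/chain-rule expression: it equals a difference of conditional entropies of the input string, namely something like $\sum_i \big( H(X_i \mid D_{<i}) - H(X_i \mid \boxA, D_{<i}) \big)$ bounded by $\sum_i \big( H(X_i) - H(X_i \mid \boxA X_{<i} \ldots, D_{\leq i})\big)$, which chain-rules to $H(X) - H(X \mid D\text{-conditioning}, \boxA \ldots)$. (3) Bound this difference: $H(X) = \log|\mathbf{X}| \cdot m$ is the max entropy, and conditioning on the event $D$ of probability $\eps$ reduces $H(X \mid \ldots)$ by at most $\log(1/\eps)$ — this is the standard fact that $H(X \mid Y, E) \geq H(X\mid Y) - H_2(\Pr(E)) - \Pr(E^c)\log|\mathcal{X}| $, or more directly that restricting a distribution to a set of mass $\eps$ loses at most $\log(1/\eps)$ in entropy relative to the full support. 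Putting these together gives $\sum_i I(\boxA : X_i \mid D_{<i}) \leq \log(1/\eps)$.

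\textbf{The main obstacle} I anticipate is step (2): correctly justifying that the per-round conditional mutual informations $I(\boxA : X_i \mid D_{<i})_{\rho_i}$ — which are each evaluated on a \emph{differently conditioned} state — telescope cleanly into a single entropy difference. The subtlety is that $D_{<i}$ changes with $i$, and that after conditioning on $D_{<i}$ versus $D_{<i+1}$ one is looking at nested but distinct sub-ensembles; one must be careful that the measurement Bob applies is a single global measurement (so the post-measurement $\boxA$ register is well-defined jointly with all the $b$'s) while Alice's are sequential, and that conditioning on $(x_{<i}, a_{<i}, b_{<i}) \in D_{<i}$ is a legitimate classical conditioning that does not disturb Alice's not-yet-performed measurements. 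The cleanest route is probably to carry the \emph{whole} event $D$ and use the monotonicity $H(X \mid \boxA\ldots, D) \leq H(X\mid \boxA\ldots, D_{<i})$ at each step (conditioning on more can only decrease entropy), absorbing the $a_{<i}, b_{<i}$ into the conditioning "for free" via data processing, so that the sum collapses to $H(X) - H(X \mid \boxA^{(m)}, D) \leq \log(1/\eps)$. Once that bookkeeping is done, the numerical bound is immediate.
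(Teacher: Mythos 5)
Your approach is genuinely different from the paper's, but the crucial step — the one you yourself flag as ``the main obstacle'' — does not go through as sketched, and it is exactly the step the paper's proof is designed to avoid.

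The paper does not attempt a chain-rule telescoping at all. Instead it gives an operational (HSW/Holevo--Schumacher--Westmoreland) coding argument: Alice and Bob share many i.i.d.\ copies of $\rho$, Bob encodes $X$ using random codes $\mathcal{C}_i$ of rate $I(\boxA:X_i\mid D_{<i})_{\rho_i}$ per round, Alice decodes from her quantum register via typical-subspace measurements, and Bob need only send roughly $K\bigl(H(X)-\sum_i I(\boxA:X_i\mid D_{<i})+\log(1/\eps)\bigr)$ bits to flag the first block where everything (decoding \emph{and} membership in $D$) succeeded. Counting bits of communication against the $KH(X)$ bits of information transmitted yields the claim. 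The sequential structure of Alice's measurements and the role of $D$ are both absorbed into the per-block success probability, so no chain rule across the changing registers $\boxA_1,\boxA_2,\dots$ is ever invoked.

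Your telescoping, by contrast, needs an inequality of the type $H(X_i\mid\boxA_i,H_{<i},D_{<i})\geq H\bigl(X_i\mid \boxA^{(m)},A,B,X_{<i},D\bigr)$, and you propose to get it by ``absorbing the $a_{<i},b_{<i}$ into the conditioning for free via data processing.'' This is where the argument breaks. The map from $\boxA_i$ to $(\boxA_{i+1},A_i)$ is Alice's $i$-th measurement $\{M_{X_i}^{a_i}\}$, which is \emph{controlled by} $X_i$ — precisely the variable whose mutual information with $\boxA$ you are tracking. Data processing / monotonicity of mutual information applies to channels that act independently of the reference system; here the channel is $X_i$-dependent, so $I(\boxA_i:X_i)$ and $I(\boxA_{i+1}A_i:X_i)$ are not related by a one-sided inequality, and the sum does not collapse to a single term $I(\boxA^{(m)}:X\mid D)$. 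A second, subsidiary issue: your bound $\sum_i H(X_i\mid H_{<i},D_{<i})\leq H(X)$ presumes $X_i$ is independent of $H_{<i}=(X_{<i},A_{<i},B_{<i})$, but $B$ is produced by Bob's single global measurement $\{N_x^b\}$, which depends on the \emph{entire} string $x$ including $x_i$; conditioning on $B_{<i}$ (let alone restricting to the event $D_{<i}$) can shift the distribution of $X_i$, so this step also needs justification. Both obstructions are real and are exactly what makes a ``pure'' quantum chain-rule proof here nontrivial; the paper's coding argument is a deliberate detour around them. If you want to pursue a chain-rule route you would need either to keep $\boxA$ classical from the start (in which case this becomes a Raz/Holenstein-style classical lemma) or to invoke machinery in the spirit of entropy accumulation that explicitly handles adaptively chosen, input-dependent measurements.
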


\begin{proof}
We prove the lemma using standard techniques from quantum information theory; specifically the proof of the Holevo-Schumacher-Westmoreland theorem~\cite{Holevo98,SchuW96}. We assume that the reader is familiar with the coding and decoding strategies employed in that result, and in particular the notion of typical subspace (see e.g.~[Chapters~14 and~19]\cite{Wilde11}, and more specifically the proof of Theorem~19.3.1). We prove the claim by describing an experiment by which Bob transmits $H(X)$ bits of information to Alice using only $H(X)+\log(1/\eps)-\sum_i I(\boxA:X_i)_{\rho_i}$ bits of communication from him to Alice. This implies the claimed inequality: if it did not hold Alice could guess Bob's $H(X)$ bits with success larger than $2^{-H(X)}$ simply by running the protocol by herself, and guessing Bob's messages.  

Suppose Alice and Bob share an infinite number of copies of $\rho$. For each $i\in [m]$, Alice and Bob also agree on a random code $\mathcal{C}_i\subseteq \mathcal{X}^K$, where $K$ is a large integer, such that $|\mathcal{C}_i| = 2^{K I(\boxA:X_i|D_{<i})_{\rho_i}}$.  By the properties of typical subspaces, with high probability over the choice of $\mathcal{C}_i$ the collection of states $\otimes_{j=1}^K\rho_i(x'_j)$ for $(x'_1,\ldots,x'_K)\in \mathcal{C}$, where $\rho_i(x'_j)$ is the reduced density of $\rho_i$ on $\mathcal{A}$ conditioned on $X_i = x'_j$, are almost perfectly distinguishable.\footnote{Precisely, there exists a distinguishing measurement whose success probability can be made arbitrarily close to $1$ by taking $K$ large enough.}

The experiment proceeds as follows. The copies of $\rho$ are grouped in groups of $K$. For each group, Bob selects a random $x=(x_i^j)_{1\leq i \leq m,1\leq j \leq K}\in (\mathcal{X}^m)^K$ and applies the measurements $\{N_{x^j}\}$ in the $j$-th copy of $\rho$ in that group, obtaining an outcome $b^j\in \mathbf{B}^m$. For each $i\in [m]$, Alice does the following, independently for each group. She guesses whether Bob's choice of $(x_i^1,\ldots,x_i^K)$ is in $\mathcal{C}_i$ (the probability with which she guesses this should be so is equal to the probability that $x_i \in \mathcal{C}_i$, i.e. $2^{K (I(\boxA:X_i|D_{<i})_{\rho_i}-H(X_i))}$). If so, she performs the decoding measurement to recover $x_i$. If not, she guesses $(x_i^1,\ldots,x_i^K)$ according to $p^{\times K}$. She then applies the measurements $\{M_{x_i^j}^{a_i^j}\}$ corresponding to the guessed $(x_i^j)$. At the end of the $m$ repetitions, Alice sends all her guesses, and her outcomes, to Bob. 

Finally, Bob finds the first group of $K$ states in which Alice's guesses were all correct, and $(x^j,a^j,b^j)\in D$ (for each $1\leq j \leq K$). In any group, the probability that this event happens is $2^{-K(H(X)-\sum_i I(\boxA:X_i|D_{<i})_{\rho_i})} \eps^K$. Moreover, note that Alice's probability of correctly guessing Bob's choice of $(x_i^j)$ is independent of $(x_i^j)$. Hence Bob can indicate to Alice the index of the first group of states on which she was correct by transmitting $O(K\log(1/\eps) + K(H(X)-\sum_i I(\boxA:X_i|D_{<i})_{\rho_i}))$ bits. Alice then knows all $KH(X)$ bits of information about Bob's choices of $x$ in the $m$ rounds on the group of $K$ states.  
\end{proof}

\bibliographystyle{alphaabbrvprelim}
\bibliography{randomness}

\newcommand{\etalchar}[1]{$^{#1}$}
\begin{thebibliography}{MHH{\etalchar{+}}97}
\expandafter\ifx\csname urlstyle\endcsname\relax
  \providecommand{\doi}[1]{doi:\discretionary{}{}{}#1}\else
  \providecommand{\doi}{doi:\discretionary{}{}{}\begingroup
  \urlstyle{rm}\Url}\fi

\bibitem[ABG{\etalchar{+}}07]{ABGM07}
A.~Ac{\'i}n, N.~Brunner, N.~Gisin, S.~Massar, S.~Pironio, and V.~Scarani.
\newblock Device-independent security of quantum cryptography against
  collective attacks.
\newblock \emph{Phys. Rev. Lett.}, 98:230501, 2007.

\bibitem[AGM06]{AGM06}
A.~Ac\'in, N.~Gisin, and L.~Masanes.
\newblock From {B}ell's theorem to secure quantum key distribution.
\newblock \emph{Phys. Rev. Lett.}, 97:120405, 2006.

\bibitem[AMP06]{AMP06}
A.~Ac\'{i}n, S.~Massar, and S.~Pironio.
\newblock Efficient quantum key distribution secure against no-signalling
  eavesdroppers.
\newblock \emph{New Journal of Physics}, 8(8):126, 2006.

\bibitem[BB84]{BB84}
C.~Bennett and G.~Brassard.
\newblock Quantum cryptography: Public key distribution and coin tossing.
\newblock In \emph{Proceedings of IEEE International Conference on Computers,
  Systems, and Signal Processing}, pages 175--179. 1984.

\bibitem[BBB{\etalchar{+}}92]{Bennett92}
C.~H. Bennett, F.~Bessette, G.~Brassard, L.~Salvail, and J.~Smolin.
\newblock Experimental quantum cryptography.
\newblock \emph{J. Cryptol.}, 5(1):3--28, January 1992.

\bibitem[BC90]{BraunsteinC90}
S.~L. Braunstein and C.~M. Caves.
\newblock Wringing out better {B}ell inequalities.
\newblock \emph{Annals of Physics}, 202(1):22 -- 56, 1990.

\bibitem[BCK12a]{BCK12}
J.~Barrett, R.~Colbeck, and A.~Kent.
\newblock Prisoners of their own device: Trojan attacks on device-independent
  quantum cryptography, 2012.
\newblock Technical report arXiv:1201.4407.

\bibitem[BCK12b]{Barrett12}
J.~Barrett, R.~Colbeck, and A.~Kent.
\newblock Unconditionally secure device-independent quantum key distribution
  with only two devices, 2012.
\newblock Technical report arXiv:1209.0435.

\bibitem[BHK05]{BHK05}
J.~Barrett, L.~Hardy, and A.~Kent.
\newblock No signaling and quantum key distribution.
\newblock \emph{Phys. Rev. Lett.}, 95:010503, 2005.

\bibitem[CHSH69]{Clauser:69a}
J.~F. Clauser, M.~A. Horne, A.~Shimony, and R.~A. Holt.
\newblock {Proposed experiment to test local hidden-variable theories}.
\newblock \emph{Phys. Rev. Lett.}, 23:880--884, 1969.

\bibitem[Col06]{Colbeck09}
R.~Colbeck.
\newblock \emph{Quantum And Relativistic Protocols For Secure Multi-Party
  Computation}.
\newblock Ph.D. thesis, Trinity College, University of Cambridge, November
  2006.

\bibitem[DPVR12]{DVPR11}
A.~De, C.~Portmann, T.~Vidick, and R.~Renner.
\newblock Trevisan's extractor in the presence of quantum side information.
\newblock \emph{SIAM Journal on Computing}, 41(4):915--940, 2012.

\bibitem[DV10]{DV09}
A.~De and T.~Vidick.
\newblock Near-optimal extractors against quantum storage.
\newblock In \emph{Proceedings of the 42nd ACM symposium on Theory of
  computing}, STOC '10, pages 161--170. ACM, New York, NY, USA, 2010.
\newblock ISBN 978-1-4503-0050-6.

\bibitem[Eke91]{Eke91}
A.~K. Ekert.
\newblock Quantum cryptography based on {B}ell's theorem.
\newblock \emph{Phys. Rev. Lett.}, 67:661--663, 1991.

\bibitem[FGS11]{FGS11}
S.~Fehr, R.~Gelles, and C.~Schaffner.
\newblock Security and composability of randomness expansion from {B}ell
  inequalities, 2011.
\newblock Technical report arXiv:1111.6052.

\bibitem[GHSZ02]{GHSZ02}
V.~Guruswami, J.~H{\aa}stad, M.~Sudan, and D.~Zuckerman.
\newblock Combinatorial bounds for list decoding.
\newblock \emph{IEEE Transactions on Information Theory}, 48(5):1021--1034,
  2002.

\bibitem[Hol98]{Holevo98}
A.~Holevo.
\newblock The capacity of the quantum channel with general signal states.
\newblock \emph{IEEE Transactions on Information Theory}, 44(1):269 --273,
  1998.

\bibitem[Hol09]{Hol09}
T.~Holenstein.
\newblock Parallel repetition: Simplification and the no-signaling case.
\newblock \emph{Theory of Computing}, 5(1):141--172, 2009.

\bibitem[HR10]{HR10}
E.~H\"{a}nggi and R.~Renner.
\newblock Device-independent quantum key distribution with commuting
  measurements, 2010.
\newblock Technical report arXiv:1009.1833.

\bibitem[HRW10]{HRW10}
E.~H\"{a}nggi, R.~Renner, and S.~Wolf.
\newblock Efficient device-independent quantum key distribution.
\newblock In \emph{Proceedings of the 29th Annual international conference on
  Theory and Applications of Cryptographic Techniques}, EUROCRYPT'10, pages
  216--234. Springer-Verlag, Berlin, Heidelberg, 2010.

\bibitem[KR11]{KR11}
R.~K\"{o}nig and R.~Renner.
\newblock Sampling of min-entropy relative to quantum knowledge.
\newblock \emph{IEEE Transactions on Information Theory}, 57(7):4760 --4787,
  2011.

\bibitem[KT08]{KT07}
R.~K\"onig and B.~Terhal.
\newblock The bounded storage model in presence of a quantum adversary.
\newblock \emph{IEEE Transactions on Information Theory}, 54(2):749--762, 2008.

\bibitem[Mas09]{Masanes09}
L.~Masanes.
\newblock Universally composable privacy amplification from causality
  constraints.
\newblock \emph{Phys. Rev. Lett.}, 102:140501, 2009.

\bibitem[May01]{Mayers01}
D.~Mayers.
\newblock Unconditional security in quantum cryptography.
\newblock \emph{J. ACM}, 48(3):351--406, May 2001.

\bibitem[MHH{\etalchar{+}}97]{Muller97}
A.~Muller, T.~Herzog, B.~Huttner, W.~Tittel, H.~Zbinden, and N.~Gisin.
\newblock ``{P}lug and play'' systems for quantum cryptography.
\newblock \emph{Applied Physics Letters}, 70(7):793--795, 1997.

\bibitem[MPA11]{MPA11}
L.~Masanes, S.~Pironio, and A.~Ac\'{i}n.
\newblock Secure device-independent quantum key distribution with causally
  independent measurement devices.
\newblock \emph{Nature Communications}, 2(238):7, 2011.

\bibitem[MRC{\etalchar{+}}09]{MRCW09}
L.~Masanes, R.~Renner, M.~Christandl, A.~Winter, and J.~Barrett.
\newblock Unconditional security of key distribution from causality
  constraints, 2009.
\newblock Technical report arXiv:quant-ph/0606049v4.

\bibitem[MY98]{MY98}
D.~Mayers and A.~Yao.
\newblock Quantum cryptography with imperfect apparatus.
\newblock In \emph{Proceedings of the 39th Annual Symposium on Foundations of
  Computer Science}, FOCS '98, page 503. IEEE Computer Society, Washington, DC,
  USA, 1998.

\bibitem[NC00]{NC00}
M.~Nielsen and I.~Chuang.
\newblock \emph{Quantum Computation and Quantum Information}.
\newblock Cambridge University Press, 2000.

\bibitem[PAB{\etalchar{+}}09]{PAB09}
S.~Pironio, A.~Ac{\'i}n, N.~Brunner, N.~Gisin, S.~Massar, and V.~Scarani.
\newblock Device-independent quantum key distribution secure against collective
  attacks.
\newblock \emph{New Journal of Physics}, 11(4):045021, 2009.

\bibitem[PAM{\etalchar{+}}10]{Pironio}
S.~Pironio, A.~Acin, S.~Massar, A.~B. De~La~Giroday, D.~N. Matsukevich,
  P.~Maunz, S.~Olmschenk, D.~Hayes, L.~Luo, T.~A. Manning, and et~al.
\newblock Random numbers certified by {B}ell's theorem.
\newblock \emph{Nature}, 464(7291):10, 2010.

\bibitem[PM11]{PM11}
S.~Pironio and S.~Massar.
\newblock Security of practical private randomness generation, 2011.
\newblock Technical report arXiv:1111.6056.

\bibitem[Ren05]{Ren05}
R.~Renner.
\newblock \emph{Security of Quantum Key Distribution}.
\newblock Ph.D. thesis, Swiss Federal Institute of Technology Zurich, September
  2005.

\bibitem[RRV02]{RRV02}
R.~Raz, O.~Reingold, and S.~Vadhan.
\newblock Extracting all the randomness and reducing the error in {T}revisan's
  extractors.
\newblock \emph{Journal of Computer and System Sciences}, 65(1):97--128, 2002.

\bibitem[RUV12]{RUV12}
B.~Reichardt, F.~Unger, and U.~Vazirani.
\newblock A classical leash for a quantum system: Command of quantum systems
  via rigidity of {CHSH} games, 2012.
\newblock Technical report arXiv:1209.0448.

\bibitem[Sch96]{SchuW96}
B.~Schumacher.
\newblock Sending entanglement through noisy quantum channels.
\newblock \emph{Phys. Rev. A}, 54:2614--2628, Oct 1996.

\bibitem[SGB{\etalchar{+}}06]{Scarani06}
V.~Scarani, N.~Gisin, N.~Brunner, L.~Masanes, S.~Pino, and A.~Ac{\'i}n.
\newblock Secrecy extraction from no-signaling correlations.
\newblock \emph{Phys. Rev. A}, 74:042339, Oct 2006.

\bibitem[SK09]{SK09}
V.~Scarani and C.~Kurtsiefer.
\newblock The black paper of quantum cryptography: real implementation
  problems, 2009.
\newblock Technical report arXiv:0906.4547.

\bibitem[SP00]{SP00}
P.~W. Shor and J.~Preskill.
\newblock Simple proof of security of the {BB84} quantum key distribution
  protocol.
\newblock \emph{Phys. Rev. Lett.}, 85:441--444, Jul 2000.

\bibitem[VV11]{VV12}
U.~Vazirani and T.~Vidick.
\newblock Certifiable quantum dice: or, true random number generation secure
  against quantum adversaries.
\newblock In \emph{Proceedings of the 44th symposium on Theory of Computing},
  STOC '12, pages 61--76. ACM, 2011.
\newblock Also available as arXiv:1111.6054.

\bibitem[Wil11]{Wilde11}
M.~Wilde.
\newblock From classical to quantum {S}hannon theory, 2011.
\newblock Technical report arXiv:1106.1445, to be published by Cambridge
  University Press.

\end{thebibliography}
\end{document}